\newif\if@restonecol
\def\clr{\color{black}}
\def\clrr{\color{black}}
\def\ps@headings{%
\def\@oddhead{\mbox{}\scriptsize\rightmark \hfil \thepage}%
\def\@evenhead{\scriptsize\thepage \hfil \leftmark\mbox{}}%
\def\@oddfoot{}%
\def\@evenfoot{}}
\def\old@comma{,}
    \old@comma\discretionary{}{}{}%
\newtheorem{theorem}{Theorem}
\newtheorem{lemma}[theorem]{Lemma}
\newtheorem{proposition}[theorem]{Proposition}
\newtheorem{corollary}[theorem]{Corollary}
\newtheorem{claim}[theorem]{Claim}
\newtheorem{definition}[theorem]{Definition}
\newcommand{\OmegaCAP}{\Omega^{\mbox{\tiny CAP}}}
\newcommand{\OmegaCSP}{\Omega^{\mbox{\tiny CSP}}}
\newcommand{\OmegaUP}{\Omega^{\mbox{\tiny UP}}}
\newcommand{\deltaMin}{\delta_{\min}} %{\delta_{\mbox{\tiny min}}}
\newcommand{\MSC}{\mbox{MSC}}
\newcommand{\GSC}{\mbox{GSC}}
\newcommand{\SCAP}{S^*_{\mbox{\tiny CAP}}}
\newcommand{\SCSP}{S^*_{\mbox{\tiny CSP}}}
\newcommand{\SUP}{S^*_{\mbox{\tiny UP}}}
\newcommand{\SCSPi}{S_{\mbox{\tiny CSP}}^{\mbox{\tiny inner}}}
\newcommand{\SCSPo}{S_{\mbox{\tiny CSP}}^{\mbox{\tiny outer}}}
\newcommand{\SUPi}{S_{\mbox{\tiny UP}}^{\mbox{\tiny inner}}}
\newcommand{\SUPo}{S_{\mbox{\tiny UP}}^{\mbox{\tiny outer}}}
\newcommand{\Sinner}{S^{\mbox{\tiny inner}}}
\newcommand{\Souter}{S^{\mbox{\tiny outer}}}
\DeclareMathAlphabet{\mathcal}{OMS}{cmsy}{m}{n}
\begin{document}

\IEEEoverridecommandlockouts

\title{Network Capability in Localizing Node Failures via End-to-end Path Measurements}
\author{\IEEEauthorblockN{Liang Ma\IEEEauthorrefmark{2}, Ting He\IEEEauthorrefmark{2}, Ananthram Swami\IEEEauthorrefmark{4}, Don Towsley\IEEEauthorrefmark{1}, and Kin K. Leung\IEEEauthorrefmark{3}\\}
\IEEEauthorblockA{\IEEEauthorrefmark{2}IBM T. J. Watson Research Center, Yorktown, NY, USA. Email: \{maliang, the\}@us.ibm.com\\
\IEEEauthorrefmark{4}Army Research Laboratory, Adelphi, MD, USA. Email: ananthram.swami.civ@mail.mil\\
\IEEEauthorrefmark{1}University of Massachusetts, Amherst, MA, USA. Email: towsley@cs.umass.edu\\
\IEEEauthorrefmark{3}Imperial College, London, UK. Email: kin.leung@imperial.ac.uk
}

\thanks{Research was sponsored by the U.S. Army Research Laboratory and the U.K. Ministry of Defence and was accomplished under Agreement Number W911NF-06-3-0001. The views and conclusions contained in this document are those of the authors and should not be interpreted as representing the official policies, either expressed or implied, of the U.S. Army Research Laboratory, the U.S. Government, the U.K. Ministry of Defence or the U.K. Government. The U.S. and U.K. Governments are authorized to reproduce and distribute reprints for Government purposes notwithstanding any copyright notation hereon.

%This paper is partially based on our previous work published in IMC'14, which focuses on unique failure localization across the entire network; however, this paper substantially extends previous work, with emphasis on addressing the issue of unique failure localization within a node set of interest, and proposing corresponding new measures.
}
}

% make the title area
\maketitle
\thispagestyle{empty}
\begin{abstract}
We investigate the capability of localizing node failures in communication networks from binary states (normal/failed) of end-to-end paths. Given a set of nodes of interest, uniquely localizing failures within this set requires that different observable path states associate with different node failure events. However, this condition is difficult to test on large networks due to the need to enumerate all possible node failures. Our first contribution is a set of sufficient/necessary conditions for identifying a bounded number of failures within an arbitrary node set that can be tested in polynomial time. In addition to network topology and locations of monitors, our conditions also incorporate constraints imposed by the probing mechanism used. We consider three probing mechanisms that differ according to whether measurement paths are (i) arbitrarily controllable, (ii) controllable but cycle-free, or (iii) uncontrollable (determined by the default routing protocol). Our second contribution is to quantify the capability of failure localization through (1) the maximum number of failures (anywhere in the network) such that failures within a given node set can be uniquely localized, and (2) the largest node set within which failures can be uniquely localized under a given bound on the total number of failures. Both measures in (1--2) can be converted into functions of a per-node property, which can be computed efficiently based on the above sufficient/necessary conditions. We demonstrate how measures (1--2) proposed for quantifying failure localization capability can be used to evaluate the impact of various parameters, including topology, number of monitors, and probing mechanisms.
%, in synthetic and real networks.

\end{abstract}
\begin{IEEEkeywords}
Network Tomography, Failure Localization, Identifiability Condition, Maximum Identifiability Index
\end{IEEEkeywords}

\IEEEpeerreviewmaketitle

\section{Introduction}
\label{intro}

Effective monitoring of network performance is essential for network operators in building reliable communication networks that are robust to service disruptions. In order to achieve this goal, the monitoring infrastructure must be able to detect network misbehaviors (e.g., unusually high loss/latency, unreachability) and localize the sources of the anomaly (e.g., malfunction of certain routers) in an accurate and timely manner. Knowledge of where problematic network elements reside in the network is particularly useful for fast service recovery, e.g., the network operator can migrate affected services and/or reroute traffic. However, localizing network elements that cause a service disruption can be challenging. The straightforward approach of directly monitoring the health of individual elements is not always feasible due to traffic overhead, access control, or lack of protocol support at internal nodes. Moreover, built-in monitoring agents running on network elements cannot detect problems caused by misconfigured/unanticipated interactions between network layers, where end-to-end communication is disrupted but individual network elements along the path remain functional (a.k.a. \emph{silent failures}) \cite{Kompella07infocom}. These limitations call for a {\clr different} approach that can diagnose the health of network elements from the health of end-to-end communications perceived between measurement points.\looseness=-1

One such approach, generally known as \emph{network tomography} \cite{Coates02}, {\clr focuses on inferring internal network characteristics based on end-to-end performance measurements} from a subset of nodes with monitoring capabilities, referred to as \emph{monitors}.  Unlike direct measurement, network tomography only relies on end-to-end performance (e.g., path connectivity) experienced by data packets, thus addressing issues such as overhead, lack of protocol support, and silent failures. In cases where the network characteristic of interest is binary (e.g., \emph{normal} or \emph{failed}), this approach is known as \emph{Boolean network tomography} \cite{Ghita11conext}.

In this paper, we study an application of Boolean network tomography to localize node failures from measurements of path states\footnote{This model can also capture link failures by transforming the topology into a logical topology with each link represented by a virtual node connected to the nodes incident to the link.}. Under the assumption that a measurement path is normal if and only if all nodes on this path behave normally, we formulate the problem as a system of Boolean equations, where the unknown variables are the binary node states, and the known constants are the observed states of measurement paths. The goal of Boolean network tomography is essentially to solve this system of Boolean equations.

Because the observations are coarse-grained (path normal/failed), it is usually impossible to uniquely identify node states from path measurements. For example, if two nodes always appear together in measurement paths, then upon observing failures of all these paths, we can at most deduce that one of these nodes (or both) has failed but cannot determine which one. Because there are often multiple explanations for given path failures, existing work mostly focuses on finding the minimum set of failed nodes that most probably involves failed nodes. Such an approach, however, does not guarantee that nodes in this minimum set have failed or that nodes outside the set have not. Generally, to distinguish between two possible failure sets, there must exist a measurement path that traverses one and only one of these two sets. There is, however, a lack of understanding of what this requires in terms of observable network properties such as topology, monitor placement, and measurement routing. On the other hand, even if there exists ambiguity in failure localization across the entire network, it is still possible to uniquely localize node failures in a specific sub-network (e.g., sub-network with a large fraction of monitors). To determine such unique failure localization in sub-networks, we need to understand how it is related to network properties.\looseness=-1

In this paper, we consider three closely related problems: Let $S$ denote a set of nodes of interest (i.e., there can be ambiguity in determining the states of nodes outside $S$; however, the states of nodes in $S$ must be uniquely determinable). (1) If the number of simultaneous node failures is bounded by $k$, then under what conditions can one uniquely localize failed nodes in $S$ from path measurements available in the entire network? (2) What is the maximum number of simultaneous node failures (i.e., the largest value of $k$) such that any failures within $S$ can be uniquely localized? (3) What is the largest node set within which failures can be uniquely localized, if the total number of failures is bounded by $k$? Answers to questions (2) and (3) together quantify a network's capability to localize failures from end-to-end measurements: question (2) characterizes the \emph{scale} of failures and question (3) the \emph{scope} of localization. Clearly, answers to the above questions depend on which paths are measurable, which in turn depends on network topology, placement of monitors, and the routing mechanism of probes. We will study all these problems in the context of the following classes of probing mechanisms: (i) \emph{Controllable Arbitrary-path Probing (CAP)}, where any measurement path can be set up by monitors, (ii) \emph{Controllable Simple-path Probing (CSP)}, where any measurement path can be set up, provided it is cycle-free, and (iii) \emph{Uncontrollable Probing (UP)}, where measurement paths are determined by the default routing protocol. These probing mechanisms assume different levels of control over routing of probing packets and are feasible in different network scenarios (see Section~\ref{subsec:Classification of Probing Mechanisms}); answers to the above three problems under these probing mechanisms thus provide insights on how the level of control bestowed on the monitoring system affects its capability in failure localization.\looseness=-1

\subsection{Related Work}
\label{subsec:related_work}

Existing work can be broadly classified into single failure localization and multiple failure localization. Single failure localization assumes that multiple simultaneous failures happen with negligible probability. Under this assumption, \cite{Bejerano03INFOCOM,Horton03} propose efficient algorithms for monitor placement such that any single failure can be detected and localized. To improve the resolution in characterizing failures, range tomography in \cite{Zarifzadeh12IMC} not only localizes the failure, but also estimates its severity (e.g., congestion level). These works, however, ignore the fact that multiple failures occur more frequently than one may imagine \cite{Markopoulou04infocom}. In this paper, we consider the general case of localizing multiple failures.

Multiple failure localization faces inherent uncertainty. Most existing works address this uncertainty by attempting to find the minimum set of network elements whose failures explain the observed path states. Under the assumption that failures are low-probability events, this approach generates the most probable failure set among all possibilities. Using this approach, \cite{Duffield03,Duffield06TInfo} propose solutions for networks with tree topologies, which are later extended to general topologies in \cite{Kompella07infocom}. {\clr Similarly, \cite{Zeng12CoNEXT} proposes to localize link failures by minimizing false positives; however, it cannot guarantee unique failure localization.} In a Bayesian formulation, \cite{Nguyen07infocom} proposes a two-stage solution which first estimates the failure (loss rate above threshold) probabilities of different links and then infers the most likely failure set for subsequent measurements.
By augmenting path measurements with (partially) available control plane information (e.g., routing messages), \cite{Dhamdhere07CoNEXT,Huang08ccr} propose a greedy heuristic for troubleshooting network unreachability in multi-AS (Autonomous System) networks that has better accuracy than benchmarks using only path measurements.

Little is known when we insist on \emph{uniquely} localizing network failures. Given a set of monitors known to uniquely localize failures on paths between themselves, \cite{Nguyen04PAM} develops an algorithm to remove redundant monitors such that all failures remain identifiable.
If the number of failed links is upper bounded by $k$ and the monitors can probe arbitrary cycles or paths containing cycles, \cite{Ahuja08} proves that the network must be $(k+2)$-edge-connected to identify any failures up to $k$ links using one monitor, which is then used to derive requirements on monitor placement for general topologies. Solving node failure localization using the results of \cite{Ahuja08}, however, requires a topology transformation that maps each node to a link while maintaining adjacency between nodes and feasibility of measurement paths. To our knowledge, no such transformation exists whose output satisfies the assumptions of \cite{Ahuja08} (undirected graph, measurement paths not containing repeated links).
Later, \cite{Cho14Elsevier} proves that under a CAP-like probing mechanism, the condition can be relaxed to the network being $k$-edge-connected. Both \cite{Ahuja08,Cho14Elsevier} focus on placing monitors and constructing measurement paths to localize a given number of failures; in contrast, we focus on characterizing the capability of failure localization under a given monitor placement and constraints on measurement paths. In previous work \cite{Ma14IMC}, we propose efficient testing conditions and algorithms to quantify the capability of localizing node failures in the entire network; however, we did not consider the case that even if some node states cannot be uniquely determined, we may still be able to unambiguously determine the states of some other nodes. In this paper, we thus investigate the relationships between the capability of localizing node failures and explicit network properties such as topology, placement of monitors, probing mechanism, and nodes of interest, with focus on developing efficient algorithms to characterize the capability under given settings.\looseness=-1

A related but fundamentally different line of work is graph-constrained group testing \cite{Cheraghchi12}, which studies the minimum number of measurement paths needed to uniquely localize a given number of (node/link) failures, using a CAP-like probing mechanism. In contrast, we seek to characterize the type of failures (number and location) that can be uniquely localized using a variety of probing mechanisms.

\subsection{Summary of Contributions}

We study the fundamental capability of a network with {\clrr arbitrarily} placed monitors to uniquely localize node failures from binary end-to-end measurements between monitors.
Our contributions are five-fold:

\emph{1)} We propose two novel measures to quantify the capability of failure localization, (i) \emph{maximum identifiability index} of a given node set, which characterizes the maximum number of simultaneous failures such that failures within this set can be uniquely localized, and (ii) \emph{maximum identifiable set} for a given upper bound on the number of simultaneous failures, which represents the largest node set within which failures can be uniquely localized if the failure event satisfies the bound. We show that both measures can be expressed as functions of per-node maximum identifiability index (i.e., maximum number of failures such that the failure of a given node can be uniquely determined).

\emph{2)} We establish necessary/sufficient conditions for uniquely localizing failures in a given set under a bound on the total number of failures, which are applicable to all probing mechanisms. We then convert these conditions into more concrete conditions in terms of network topology and placement of monitors, under the three different probing mechanisms (CAP, CSP, and UP), which can be tested in polynomial time.
%The established conditions show that as more constraints are imposed on routing policies, the identifiability conditions also become stronger.

\emph{3)} We show that a special relationship between the above necessary/sufficient conditions leads to tight upper/lower bounds on the maximum identifiability index of a given set that narrows its value to at most two consecutive integers. These conditions also enable a strategy for constructing inner/outer bounds (i.e., subset/superset) of the maximum identifiable set. These bounds are polynomial-time computable under CAP and CSP. While they are NP-hard to compute under UP, we present a greedy heuristic to compute a pair of relaxed bounds that frequently coincide with the original bounds in practice.\looseness=-1

\emph{4)} We evaluate the proposed measures under different probing mechanisms on random and real topologies. Our evaluation shows that controllable probing, especially CAP, significantly improves the capability of node failure localization over uncontrollable probing. Our result also reveals novel insights into the distribution of per-node maximum identifiability index and its relationship with graph-theoretic node properties.

\emph{Note: } Our results are also applicable to transient failures as long as node failures persist during probing (i.e., leading to failures of all traversing paths). We have limited our observations to binary states (normal/failed) of measurement paths. It is possible in some networks to obtain extra information from probes, e.g., rerouted paths after a default path fails, in which case our solution provides lower bounds on the capability of localizing failures. %We emphasize that the significance of this paper is to establish a theoretical substrate (under fundamental routing policies) which is vital for building a robust failure monitoring system.
{\clr Furthermore, we do not make any assumption on the distribution or correlation of node failures across the network. In some application scenarios (e.g., datacenter networks), node failures may be correlated (e.g., all routers sharing the same power/chiller). We leave the characterization of failure localization in the presence of such additional information to future work.}

\vspace{.5em}
%\subsection{Organizations}
The rest of the paper is organized as follows. Section~\ref{Sect:ProblemFormulation} formulates the problem.
Section~\ref{sec:TheoryFoundation} presents the theoretical foundations for identifying node failures, followed by verifiable identifiability conditions for specific classes of probing mechanisms in Section~\ref{sec:Verifiable Identifiability Conditions}. Based on the derived conditions, tight bounds on the maximum identifiability index are presented in Section~\ref{sec:Characterization of Maximum Identifiability}, and inner/outer bounds on the maximum identifiable set are established in Section~\ref{sec:CharacterizationMaximumIdentifiableSet}. We evaluate the established bounds on various synthetic/real topologies in Section~\ref{sec:Impact of Probing Mechanisms} to study the impact of various parameters (topology, number of monitors, probing mechanism) on the capability of node failure localization. Finally, Section~\ref{sec:Conclusion} concludes the paper.\looseness=-1

\section{Problem Formulation}
\label{Sect:ProblemFormulation}

\subsection{Models and Assumptions}\label{subsec:Models and Assumptions}

\begin{table}[tb]
\vspace{-.0em}
\small
\renewcommand{\arraystretch}{1.3}
\caption{Graph-related Notations} \label{t notion}
\vspace{.0em}
\centering
\begin{tabular}{r|m{6.1cm}}
  \hline
  \textbf{Symbol} & \textbf{Meaning} \\
  \hline
  $V$, $L$ & set of nodes/links ($\xi:=|L|$)\\
  \hline
  $M,\: N$ & set of monitors/non-monitors ($M\cup N = V$, $\mu:=|M|,\: \sigma:=|N|$) \\
  \hline
  $k$ & maximum number of simultaneous non-monitor failures\\
  \hline
  $V(\mathcal{G})$ & set of nodes in $\mathcal{G}$ \\
  \hline
  $\mathcal{N}(M)$ & set of non-monitors that are neighbors of at least one monitor in $M$ ($\theta:=|\mathcal{N}(M)|$)\\
  \hline
  $\mathcal{L}(V,\: W)$ & $\mathcal{L}(V,\: W)= \{\mbox{link }vw:\: \forall v\in V,\: w\in W,\: v\neq w\}$\\
  \hline
  $\mathcal{G}-L'$ & delete links: $\mathcal{G}-L'=(V,L\setminus L')$, where ``$\setminus$'' is setminus\\
  \hline
  $\mathcal{G}+L'$ & add links: $\mathcal{G}+L'=(V,L\cup L')$, where the end-points of links in $L'$ must be in $V$\\
  \hline
  $\mathcal{G}-V'$ & delete nodes: $\mathcal{G}-V'=(V\setminus V',L\setminus L(V'))$, where $L(V')$ is the set of links incident to nodes in $V'$\\
  \hline
  $\mathcal{G}+V'$ & add nodes: $\mathcal{G}+V'=(V\cup V',L)$ \\
  \hline
  $\mathcal{G}^*$ & auxiliary graph of $\mathcal{G}$ (see Fig.~\ref{Fig:AuxiliaryGraph})\\
  \hline
  $\mathcal{G}_m$ & auxiliary graph of $\mathcal{G}$ w.r.t. monitor $m$ (see Fig.~\ref{Fig:AuxiliaryGraph})\\
  \hline
  $\mathcal{G}'$ & extended graph of $\mathcal{G}$ (see Fig.~\ref{fig:extended_graph})\\
  \hline
  $\Omega(S)$, $\Omega(v)$ & maximum identifiability index of $S$ or $v$ ($S$: a set of nodes, $v$: a node)\\
  \hline
  $S^*(k)$ & maximum $k$-identifiable set\\
  \hline
  $S^{\mbox{\tiny inner}}(k)$ & subset of $S^*(k)$\\
  \hline
  $S^{\mbox{\tiny outer}}(k)$ & superset of $S^*(k)$\\
  \hline
\end{tabular}
\vspace{-1em}
\end{table}
\normalsize

We assume that the network topology is known and model it as an undirected
%\footnote{Undirected graph is only to ease the presentation of the following topological conditions. To apply our results to directed graphs, all we need to do is to combine the links with opposite directions between a pair of nodes into a single undirected link.} Ting: I don't think we handle directed graph.
graph\footnote{We use the terms \emph{network} and \emph{graph} interchangeably.}  $\mathcal{G}=(V,L)$, where $V$ and $L$ are the sets of nodes and links. In $\mathcal{G}$, the number of neighbors of node $v$ is called the \emph{degree} of {\clr $v$; $\xi:=|L|$ denotes the number of links. Note} that graph $\mathcal{G}$ can represent a logical topology where each node in $\mathcal{G}$ corresponds to a physical subnetwork. Without loss of generality, we assume $\mathcal{G}$ is connected, as different connected components have to be monitored separately. %Denote the link incident to nodes $u$ and $v$ by $uv$.

A subset of nodes $M$ ($M\subseteq V$) are \emph{monitors} that can initiate and collect measurements. The rest of the nodes, denoted by $N:=V\setminus M$, are \emph{non-monitors}. Let $\mu:=|M|$ and $\sigma:=|N|$ denote the numbers of monitors and non-monitors. We assume that monitors do not fail during the measurement process, as failed monitors can be directly detected and excluded (assuming centralized control within the monitoring system). Non-monitors, on the other hand, can fail, and a failure event may involve simultaneous failures of multiple non-monitors. Depending on the adopted probing mechanism, monitors measure the states of nodes by sending probes along certain paths. Let $P$ denote the set of all \emph{possible measurement paths}; for given $\mathcal{G}$ and $M$, different probing mechanisms can lead to different sets of measurement paths, which will be specified later. We use \emph{node state} (\emph{path state}) to refer to the binary state, failed or normal, of a node (path), where a path fails if and only if at least one node on the path fails.
%To avoid trivial cases, we assume that each non-monitor is traversed by at least one measurement path, as otherwise the non-monitor is unobservable to the monitoring system and has to be excluded in failure localization.
Table~\ref{t notion} summarizes graph-related notations used in this paper.
%(following the convention of \cite{GraphTheory2005}).

Let $\textbf{w}=(W_{1},\ldots,W_{\sigma})^T$ be the binary column vector of the states of all non-monitors and $\textbf{c}=(C_{1},\ldots,C_{\gamma})^T$ the binary column vector ($\gamma=|P|$) of the states of all measurement paths. For both node and path states, $0$ represents ``normal'' and $1$ represents ``failed''. We relate the path states to the node states through the following Boolean linear system:
\begin{equation}\label{Eq:problem formulation}
    \mathbf{R}\odot\textbf{w}=\textbf{c},
\end{equation}
where $\mathbf{R}=(R_{ij})$ is a $\gamma\times \sigma$ \emph{measurement matrix}, with each entry $R_{ij}\in \{0,\: 1\}$ denoting whether non-monitor $v_j$ is present on path $\mathcal{P}_i$ ($1$: yes, $0$: no), and ``$\odot$'' is the Boolean matrix product, i.e., $C_{i}=\vee^\sigma_{j=1} (R_{ij}\wedge W_{j})$. The goal of Boolean network tomography is to invert this Boolean linear system to solve for all/part of the elements in \textbf{w} given \textbf{R} and \textbf{c}. Intuitively, for a node set $S$ ($S\subseteq N$), any node failures in $S$ are identifiable if and only if the corresponding states of $S$ in $\textbf{w}$ are uniquely determinable by (\ref{Eq:problem formulation}).

\subsection{Definitions}

{\clr Let {\clrr a} \emph{failure set} $F$ be a set of non-monitors ($F\subseteq N$) that fail simultaneously.} Note that the collection of all failure sets in a given network covers all possible failure scenarios (each corresponds to a failure set) that can occur in this network; the goal of failure localization is to infer the current failure set from the states of measurement paths. The challenge for this problem is that there may exist multiple failure sets leading to the same path states, causing ambiguity. Let $P_{F}$ denote the set of all measurement paths affected by a failure set ${F}$ (i.e., paths traversing at least one node in ${F}$). To quantify the capability of uniquely determining the failure set, we introduce the following definitions.

\begin{definition}\label{def:distinguishability}
Given a network $\mathcal{G}$ and a set of measurement paths $P$, two failure sets ${F}_1$ and ${F}_2$ are \emph{distinguishable} if and only if $P_{{F}_1}\neq P_{{F}_2}$, i.e., $\exists$ a path that traverses one and only one of ${F}_1$ and ${F}_2$.
\end{definition}

Definition~\ref{def:distinguishability} implies that two potential failure sets must be associated with different observable path states for monitors to determine which set of nodes have failed. While uniquely localizing arbitrary failures requires all subsets of $N$ to be pairwise distinguishable, we can relax this requirement by only considering failure sets of size bounded by $k$ ($k\geq 1$), which represents the scale of probable failure events. Moreover, in practice, we are usually interested in the states of a subset of nodes $S$ ($S\subseteq N$), in which case the goal is to only ensure unique failure localization within $S$. Note that failures ($F$) may occur anywhere in the network ($F\subseteq N$) and are not restricted to $S$.

\begin{definition}\label{def:identifiability}
Given a network $\mathcal{G}$ (with non-monitor set $N$) and a node set $S$ of interest ($S\subseteq N$):
\begin{enumerate}
  \item $S$ is \emph{$k$-identifiable} if for any two failure sets $F_1$ and $F_2$ satisfying (1) $|F_i|\leq k$ ($i=1,2$) and (2) $F_1\cap S \neq F_2 \cap S$, $F_1$ and $F_2$ are distinguishable.
  \item The \emph{maximum identifiability index of $S$}, denoted by $\Omega(S)$, is the maximum value of $k$ such that $S$ is $k$-identifiable.
\end{enumerate}
\end{definition}

Intuitively, if a node set $S$ is $k$-identifiable, then the states (normal/failed) of all nodes within this set are unambiguously determinable from the observed path states, provided the total number of failures (anywhere in the network) is bounded by $k$. The maximum identifiability index $\Omega(S)$ characterizes the network's capability to uniquely localize failures in $S$. Definition~\ref{def:identifiability} generalizes the notion of network-wide $k$-identifiability and maximum identifiability index introduced in \cite{Ma14IMC}, where only the case of $S=N$ was considered.
%Moreover, Definition~\ref{def:identifiability} implies that to determine the identifiability of set $S$, it is unnecessary to distinguish two failure sets sharing the same intersection (including the case where the intersection is an empty set) with $S$, because they both point to the same failed nodes (if any) in $S$ no matter which failure set actually happens.
In the special case of $S=\{v\}$, we say that node $v$ is $k$-identifiable; the maximum identifiability index of $S=\{v\}$ is denoted by $\Omega(v)$. Note that the subset of a $k$-identifiable set is also $k$-identifiable. We are therefore interested in the maximum such set.

\begin{definition}\label{def:max-identifiableSet}
Given $k$, the maximum $k$-identifiable set, denoted by $S^*(k)$, is the largest-cardinality non-monitor set that is $k$-identifiable.
\end{definition}

According to Definition~\ref{def:max-identifiableSet}, it seems that the maximum $k$-identifiable set is defined based on its cardinality, and thus may not be unique. Nevertheless, we prove in Section~\ref{sec:setProperty} that $S^*(k)$ is unique. The significance of the maximum $k$-identifiable set is that it measures the completeness of the inferred network state: it contains all nodes whose states can be inferred reliably from the observed path states, as long as the total number of failures in the network is bounded by $k$. Note that $k$ is a design parameter capturing the scale of failures that the system is designed to handle.

\subsection{Classification of Probing Mechanisms}\label{subsec:Classification of Probing Mechanisms}

The above definitions are all defined with respect to (w.r.t.) a given set of measurement paths $P$. Given the topology $\mathcal{G}$ and monitor locations $M$, the probing mechanism plays a crucial role in determining $P$. Depending on the flexibility of probing and the cost of deployment, we classify probing mechanisms into one of three classes:
\begin{enumerate}
\item \emph{Controllable Arbitrary-path Probing (CAP):} $P$ includes any path/cycle, allowing  repeated nodes/links, provided each path/cycle starts and ends at monitors.
\item \emph{Controllable Simple-path Probing (CSP):} $P$ includes any \emph{simple} path between distinct monitors, not including repeated nodes.
\item \emph{Uncontrollable Probing (UP):} $P$ is the set of paths between monitors determined by the routing protocol used by the network, not controllable by the monitors.
\end{enumerate}

Although CAP allows probes to traverse each node/link an arbitrary number of times, it suffices to consider paths where each probe traverses each link at most once in either direction for the sake of localizing node failures.

{\clr These probing mechanisms clearly provide decreasing flexibility to the monitors and therefore  decreasing capability to localize failures. However, they also offer decreasing deployment cost. CAP represents the most flexible probing mechanism and provides an upper bound on failure localization capability. In traditional networks, CAP is feasible at the IP layer if \emph{(strict) source routing} (an IP option) \cite{RFCsourceRouting} is enabled at all nodes\footnote{Source routing allows nodes to modify the source and the destination addresses in packet headers hop by hop along the path prescribed by a monitor. The probe can follow the reverse path to return to the original monitor, thus effectively probing any path with \emph{at least} one end at a monitor.}, or at the application layer (to localize failures in overlay networks) if equivalent ``source routing'' is supported by the application. Moreover, CAP is also feasible under an emerging networking paradigm called software-defined networking (SDN) \cite{SDN,OpenFlow13}, where monitors can instruct the SDN controller to set up arbitrary paths for the probing traffic. In particular, an SDN consisting of OpenFlow switches \cite{OpenFlow13} can set up paths by configuring the flow table of each traversed OpenFlow switch to forward a probing flow (e.g., one TCP connection) to a next hop based on the ingress port and the flow identifier, which allows the path to have repeated nodes/links. In contrast, UP represents the most basic probing mechanism, feasible in any network supporting data forwarding, that provides a lower bound on the capability of failure localization. CSP represents an intermediate case that allows control over routing while respecting a basic requirement that routes must be cycle-free. CSP is implementable by MPLS (MultiProtocol Label Switching), where the ``explicit routing'' mode \cite{RFC_MPLS} allows one to set up a controllable, non-shortest path using labels so long as the path are cycle-free. Note that the cycle-free constraint here is crucial, as data forwarding in MPLS will encounter forwarding loops if a path has cycles.\looseness=-1

The significance of these three probing mechanisms is that they capture the main features of several existing and emerging routing techniques. Specifically, UP is generally supported in existing networks without special configuration, CSP is feasible in some of today's networks running MPLS with certain configuration (i.e., label propagation via explicit routing), while CAP represents the capability of future networks once SDN is broadly deployed.

\vspace{.5em}
\emph{Discussion:} In \cite{Ahuja09TON}, ``m-trail'' (monitoring trails) is employed as a probing mechanism in all-optical networks, where measurement paths can contain repeated nodes but \emph{not} repeated links. It is unclear which routing protocols in communication networks select paths under the restriction of ``m-trails'', we thus do not consider such a probing mechanism in this paper. In \cite{Cho14Elsevier}, another probing mechanism ``m-tour'' (monitoring tours) is used, which allows both repeated nodes and repeated links in measurement paths; ``m-tour'' is equivalent to CAP.
}\looseness=-1

\vspace{.5em}
{\clr In this paper, we quantify how the flexibility of a probing scheme affects the network's capability to localize failures. Although concrete results are only provided for the above classes of probing mechanisms, our framework and our abstract identifiability conditions (see Section~\ref{sec:Abstract Identifiability Conditions}) can also be used to evaluate the failure localization capabilities of other probing mechanisms.}

\subsection{Objective}

Given a network topology $\mathcal{G}$, a set of monitors $M$, and a probing mechanism (CAP, CSP, or UP), we seek to answer the following closely related questions: (i) Given a node set of interest $S$ and a bound $k$ on the number of failures, can we uniquely localize up to $k$ failed nodes in $S$ from observed path states? (ii) Given a node set $S$, what is the maximum number of failures within $S$ that can be uniquely localized? (iii) Given an integer $k$ ($1\leq k \leq \sigma$), what is the largest node set that is $k$-identifiable? We will study these problems from the perspectives of both theories and efficient algorithms.
%that test for $k$-identifiability and determine the maximum failure localization capability under different probing mechanisms.
%Given a network topology $\mathcal{G}$, the first objective of this paper is to derive necessary/sufficient conditions for identifying all failed nodes in $\mathcal{G}$ under three basic probing mechanisms. Our second objective is to compute the corresponding maximal identifiability that a network can have for localizing simultaneous node failures.

\subsection{Illustrative Example}

\begin{figure}[tb]
\centering
\includegraphics[width=1.9in]{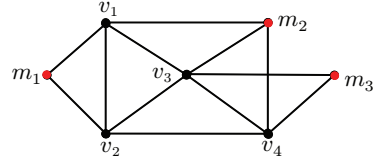}
\vspace{-.5em}
\caption{Sample network with three monitors: $m_1$, $m_2$, and $m_3$. } \label{Fig:ProblemIllustration}
\vspace{-.5em}
\end{figure}

Consider the sample network in Fig.~\ref{Fig:ProblemIllustration} with three monitors ($m_1$--$m_3$) and four non-monitors ($v_1$--$v_4$).
%Clearly, the network capability to identify failures of the non-monitors depends on the probing mechanism, i.e., which paths are measurable between the monitors. In this example, we will examine this capability and how it can be improved by relaxing constraints on measurement paths.
Under UP, suppose that the default routing protocol only allows the monitors to probe the following paths: $\mathcal{P}_1=m_1v_1m_2$, $\mathcal{P}_2=m_2v_4m_3$, and $\mathcal{P}_3=m_1v_2v_4m_3$, which form a measurement matrix $\mathbf{R}^{\mbox{\tiny UP}}$:

\scriptsize
\begin{equation}\label{eq:exampleUP}
    \left. \begin{aligned}
            \mathcal{P}_1&=m_1v_1m_2\\
            \mathcal{P}_2&=m_2v_4m_3\\
            \mathcal{P}_3&=m_1v_2v_4m_3\\
         \end{aligned} \right.
        % \begin{array}{c}
        %  \overset{(\ref{a1bn})}{P_1}\\
         \Rrightarrow
        % \end{array}
        \mathbf{R}^{\mbox{\tiny UP}}=\left.
        \begin{blockarray}{cccc}
            W_{1} & W_{2} & W_{3} & W_{4}\\
            \begin{block}{(cccc)}
                1 &  0 &  0 &  0\\
                0 &  0 &  0 &  1\\
                0 &  1 &  0 &  1\\
            \end{block}
        \end{blockarray}\right.,
\end{equation}
\normalsize
where ${R}^{\mbox{\tiny UP}}_{ij}=1$ if and only if node $v_j$ is on path $\mathcal{P}_i$. Then we have $\mathbf{R}^{\mbox{\tiny UP}}\odot\textbf{w}=\textbf{c}$, where $\textbf{c}$ is the binary vector of path states observed at the destination monitors. Let $S':=\{v_1,v_2,v_4\}$. Based on Definition~\ref{def:max-identifiableSet}, we can verify that $\Omega(S')=2$, and the maximum identifiable set $S^*(1)=\{v_1,v_2,v_4\}$ and $S^*(2)=S^*(3)=S^*(4)=\{v_1,v_4\}$. Under CSP, besides the three paths in (\ref{eq:exampleUP}), we can probe three additional paths: $\mathcal{P}_4=m_2v_3m_3$, $\mathcal{P}_5=m_1v_2v_3m_3$, and $\mathcal{P}_6=m_1v_2v_1m_2$, yielding an expanded measurement matrix in (\ref{eq:exampleCSP}):

\scriptsize
\begin{equation}\label{eq:exampleCSP}
    \left. \begin{aligned}
            \mathcal{P}_1&=m_1v_1m_2\\
            \mathcal{P}_2&=m_2v_4m_3\\
            \mathcal{P}_3&=m_1v_2v_4m_3\\
            \mathcal{P}_4&=m_2v_3m_3\\
            \mathcal{P}_5&=m_1v_2v_3m_3\\
            \mathcal{P}_6&=m_1v_2v_1m_2\\
         \end{aligned} \right.
        % \begin{array}{c}
        %  \overset{(\ref{a1bn})}{P_1}\\
         \Rrightarrow
        % \end{array}
        \mathbf{R}^{\mbox{\tiny CSP}}=\left.
        \begin{blockarray}{ccccl}
            W_{1} & W_{2} & W_{3} & W_{4}\vspace{.6em}&\\
            \begin{block}{(cccc)l}
                1 &  0 &  0 &  0 &\BAmulticolumn{1}{l}{\multirow{3}{*}{ $\left. \begin{aligned} \\ \\ \end{aligned}\right\} \begin{aligned}\mathbf{R}^{\mbox{\tiny UP}}\end{aligned}$}}\\
                0 &  0 &  0 &  1&\\
                0 &  1 &  0 &  1&\\
                \cline{1-4}
                0 &  0 &  1 &  0&\\
                0 &  1 &  1 &  0&\\
                1 &  1 &  0 &  0&\\
            \end{block}
        \end{blockarray}\right.
\end{equation}
\normalsize

Using the six paths in (\ref{eq:exampleCSP}), the maximum identifiability index of $S'$ becomes $\Omega(S')=3$, and the maximum identifiable set is enlarged to $S^*(1)=S^*(2)=S^*(3)=\{v_1,v_2,v_3,v_4\}$ and $S^*(4)=\{v_1,v_3,v_4\}$, a notable improvement over UP. Finally, if CAP is supported, then we can send probes along a cycle $\mathcal{P}_7=m_1v_2m_1$. In conjunction with the paths in (\ref{eq:exampleCSP}), this yields the measurement matrix in (\ref{eq:exampleCAP}):

\scriptsize
\begin{equation}\label{eq:exampleCAP}
    \left. \begin{aligned}
            \mathcal{P}_1&=m_1v_1m_2\\
            \mathcal{P}_7&=m_1v_2m_1\\
            \mathcal{P}_4&=m_2v_3m_3\\
            \mathcal{P}_2&=m_2v_4m_3\\
         \end{aligned} \right.
        % \begin{array}{c}
        %  \overset{(\ref{a1bn})}{P_1}\\
         \Rrightarrow
        % \end{array}
        \mathbf{R}^{\mbox{\tiny CAP}}=\left.
        \begin{blockarray}{cccc}
            W_{1} & W_{2} & W_{3} & W_{4}\\
            \begin{block}{(cccc)}
                1 &  0 &  0 &  0\\
                0 &  1 &  0 &  0\\
                0 &  0 &  1 &  0\\
                0 &  0 &  0 &  1\\
            \end{block}
        \end{blockarray}\right.
\end{equation}
\normalsize

Since the paths in (\ref{eq:exampleCAP}) can independently determine the state of each non-monitor, we have $\Omega(S')=4$ and $S^*(1)=S^*(2)=S^*(3)=S^*(4)=\{v_1,v_2,v_3,v_4\}$ under CAP, i.e., all failures can be uniquely localized.

This example shows that the monitor placement and the probing mechanism significantly affect a network's capability to localize failures. In the rest of the paper, we will study this relationship both theoretically and algorithmically.

\section{Theoretical Foundations}
\label{sec:TheoryFoundation}

We start with some basic understanding of failure identifiability. First, the definition of $k$-identifiability in Definition~\ref{def:identifiability} requires enumeration of all possible failure events and thus cannot be tested efficiently. To address this issue, we establish explicit sufficient/necessary conditions for $k$-identifiability that apply to arbitrary probing mechanisms, which will later be developed into verifiable conditions for the three classes of probing mechanisms. Moreover, we establish several desirable properties of maximum identifiability index (Definition~\ref{def:identifiability}) and maximum identifiable set (Definition~\ref{def:max-identifiableSet}), which greatly simplify the computation of these measures.

%Definitions~\ref{def:identifiability} and \ref{def:max-identifiableSet} are both based on the enumeration of all possible failure scenarios and do not directly allow efficient testing and characterization of identifiability. To address this issue, we need more explicit identifiability conditions and better understanding in the properties of the maximum $k$-identifiable set for designing efficient algorithms. In this regard, we establish abstract sufficient/necessary conditions for $k$-identifiability under an arbitrary probing mechanism, which will later be developed into more concrete conditions for specific classes of probing mechanisms; we then prove that the maximum $k$-identifiable sets exhibit some preferred properties, which can ease the process of determining such sets.

\subsection{Abstract Identifiability Conditions}\label{sec:Abstract Identifiability Conditions}

Our identifiability condition is inspired by a result known in a related field called \emph{combinatorial group testing} \cite{Dorfman43}. In short, group testing aims to find abnormal elements in a given set by running tests on subsets of elements, each test indicating whether any element in the subset is abnormal. This is analogous to our problem where abnormal elements are failed nodes and tests are conducted by probing measurement paths. A subtle but critical difference is that in our problem, the subsets of elements that can be tested together are constrained by the set of measurement paths $P$, which is in turn limited by the topology, probing mechanism, and placement of monitors\footnote{In this regard, our problem is similar to a variation of group testing under graph constraints \cite{Cheraghchi12}; see Section~\ref{subsec:related_work} for the difference.}.

Most existing solutions for (nonadaptive) group testing aim at constructing a \emph{disjunct testing matrix}. Specifically, a testing matrix $R$ is a binary matrix, where $R_{i,j}=1$ if and only if element $j$ is included in the $i$-th test. Matrix $R$ is \emph{$k$-disjunct} if {\clr the Boolean sum of any $k$ columns does not {\clrr ``contain''}} any other column{\clr \footnote{{\clr That is, for any subset of $k$ column indices $S$ and any other column index $j\notin S$, $\exists$a row index $i$ such that $R_{i,j} = 1$ and $R_{i,j'} = 0$ for all $j' \in S$.}}} \cite{Yeh02}. In our problem, the existence of a disjunct testing matrix translates into the following conditions.

\begin{lemma}\label{lem:abstract condition}
Set $S$ is $k$-identifiable:
\begin{enumerate}
\item[a)] if for any failure set $F$ with $|F|\leq k$ and any node $v$ with $v\in S\setminus F$, $\exists$ $p\in P$ traversing $v$ but none of the nodes in $F$;
\item[b)] only if for any failure set $F$ with $|F|\leq k-1$ and any node $v$ with $v\in S\setminus F$, $\exists$ $p\in P$ traversing $v$ but none of the nodes in $F$.
\end{enumerate}
\end{lemma}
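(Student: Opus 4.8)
The plan is to prove both directions directly from the definition of distinguishability (Definition~\ref{def:distinguishability}) together with the definition of $k$-identifiability (Definition~\ref{def:identifiability}), reducing each to a statement about a single path that separates one node from one failure set. Throughout I would use the fact that $F_1$ and $F_2$ are distinguishable precisely when some path lies in exactly one of $P_{F_1}$ and $P_{F_2}$.

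For the sufficiency in part~a), I would take any two failure sets $F_1, F_2$ with $|F_i|\leq k$ and $F_1\cap S\neq F_2\cap S$, and show they are distinguishable. Since the symmetric difference of $F_1\cap S$ and $F_2\cap S$ is nonempty, without loss of generality there is a node $v\in(F_1\cap S)\setminus(F_2\cap S)$; because $v\in S$, this forces $v\in F_1$, $v\in S$, and $v\notin F_2$. Now I would apply the hypothesis of a) with the failure set $F:=F_2$ (which satisfies $|F_2|\leq k$) and the node $v\in S\setminus F_2$: there exists $p\in P$ traversing $v$ but avoiding every node of $F_2$. Such a $p$ lies in $P_{F_1}$ (it traverses $v\in F_1$) but not in $P_{F_2}$, so $P_{F_1}\neq P_{F_2}$ and $F_1, F_2$ are distinguishable, establishing $k$-identifiability.

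For the necessity in part~b), I would argue by contrapositive. Suppose the stated condition fails, i.e., there is a failure set $F$ with $|F|\leq k-1$ and a node $v\in S\setminus F$ such that every path through $v$ also traverses at least one node of $F$. Take $F_1:=F\cup\{v\}$ and $F_2:=F$; then $|F_1|\leq k$, $|F_2|\leq k-1\leq k$, and since $v\in S\setminus F$ we have $v\in F_1\cap S$ but $v\notin F_2\cap S$, so $F_1\cap S\neq F_2\cap S$. It remains to check these two sets are not distinguishable: clearly $P_{F_2}\subseteq P_{F_1}$ since $F_2\subseteq F_1$; conversely any $p\in P_{F_1}$ either already hits $F$ (so $p\in P_{F_2}$) or traverses $v$, in which case the failure of the condition forces $p$ to hit $F$ as well. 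Hence $P_{F_1}=P_{F_2}$, contradicting $k$-identifiability, which shows that $k$-identifiability implies the condition.

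The two directions are essentially dual, and the only subtle point---the step I would flag as needing care rather than genuine difficulty---is tracking why the quantifier bound is $|F|\leq k$ in a) but only $|F|\leq k-1$ in b). In sufficiency the separating set plays the role of one of the two size-$\leq k$ failure sets directly, so coverage up to size $k$ is required; in necessity I must enlarge $F$ by the single node $v$ to build the witness pair $F_1, F_2$ within the size budget $k$, which costs one unit and hence restricts $F$ to size $\leq k-1$. This asymmetry is exactly what produces the one-unit gap between the sufficient and necessary conditions that Section~\ref{sec:Characterization of Maximum Identifiability} later exploits to pin $\Omega(S)$ down to two consecutive integers.
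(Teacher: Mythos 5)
Your proposal is correct and takes essentially the same approach as the paper's proof: sufficiency by applying the hypothesis with $F:=F_2$ to obtain a path through the separating node $v\in F_1\setminus F_2$ that avoids $F_2$, and necessity via the indistinguishable pair $F$ and $F\cup\{v\}$ with $P_F = P_{F\cup\{v\}}$. Your write-up is if anything slightly more explicit than the paper's (spelling out why $P_{F\cup\{v\}}\subseteq P_F$ and the one-unit size-budget asymmetry between the two directions), but the underlying argument is identical.
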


\begin{proof}
Consider two distinct failure sets $F_1$ and $F_2$ with $F_1\cap S\neq F_2 \cap S$, each containing no more than $k$ nodes. There exists a node $v\in S$ in only one of these sets; suppose $v\in F_1\setminus F_2$. By the condition in the lemma, $\exists$ a path $p$ traversing $v$ but not $F_2$, thus distinguishing $F_1$ from $F_2$. Therefore, condition \emph{a)} in Lemma~\ref{lem:abstract condition} is sufficient.

Suppose $\exists$ a non-empty set $F$ with $|F|\leq k-1$ and $v\in S\setminus F$ such that all measurement paths traversing $v$ must also traverse at least one node in $F$. Therefore, for two failure sets $F$ and $F\cup \{v\}$ satisfying conditions (1--2) in Definition~\ref{def:identifiability}-(1) are not distinguishable as $P_{F}=P_{F\cup \{v\}}$. Thus, condition \emph{b)} in Lemma~\ref{lem:abstract condition} is necessary.
\end{proof}

These conditions generally apply to any probing mechanism. Although in the current form, they do not directly lead to efficient testing algorithms, we will show later (Section~\ref{sec:Verifiable Identifiability Conditions}) that they can be transformed into verifiable conditions for several classes of probing mechanisms.

\subsection{Properties of the Maximum Identifiability Index and the Maximum Identifiable Set}
\label{sec:setProperty}
Although the maximum identifiability index $\Omega(S)$ and the maximum $k$-identifiable set $S^*(k)$ are defined for sets of nodes, we show below that they can both be characterized in terms of a per-node property, which greatly simplifies the computation of these measures. We start with the following two observations.

\begin{lemma}\label{lem:nodes_in_OmegaS}
\begin{enumerate}
\item[a)] If $S$ is $k$-identifiable, then any $v\in S$ must be $k$-identifiable.
\item[b)] If $v$ is $k$-identifiable $\forall v\in S$, then $S$ is $k$-identifiable.
\end{enumerate}
\end{lemma}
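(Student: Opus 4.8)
The plan is to reduce both implications to a single set-theoretic observation about the role of $S$ in Definition~\ref{def:identifiability}: a distinguishability requirement is imposed on a pair of failure sets $(F_1,F_2)$ with $|F_i|\le k$ precisely when $F_1\cap S\neq F_2\cap S$, and this condition holds if and only if some node $v\in S$ lies in exactly one of $F_1$ and $F_2$, i.e.\ $F_1\cap\{v\}\neq F_2\cap\{v\}$. The key point to keep in mind throughout is that, by Definition~\ref{def:distinguishability}, whether $(F_1,F_2)$ is distinguishable depends only on $F_1$, $F_2$, and the path set $P$, and \emph{not} on $S$; so the set of interest only governs \emph{which} pairs must be distinguished, never \emph{how} they are distinguished.

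For part a), I would fix a $k$-identifiable set $S$ and an arbitrary $v\in S$, then verify the single-node condition directly. Take any $F_1,F_2$ with $|F_i|\le k$ and $F_1\cap\{v\}\neq F_2\cap\{v\}$; without loss of generality $v\in F_1\setminus F_2$. Since $v\in S$, this immediately gives $v\in(F_1\cap S)\setminus(F_2\cap S)$, hence $F_1\cap S\neq F_2\cap S$. The hypothesis that $S$ is $k$-identifiable then forces $F_1$ and $F_2$ to be distinguishable, which is exactly what the $k$-identifiability of $v$ demands.

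For part b), I would assume every $v\in S$ is $k$-identifiable and take any pair $F_1,F_2$ with $|F_i|\le k$ and $F_1\cap S\neq F_2\cap S$. Because the two sets $F_1\cap S$ and $F_2\cap S$ differ, there is some node $v$ belonging to exactly one of them; such a $v$ necessarily lies in $S$ and in exactly one of $F_1,F_2$, so $F_1\cap\{v\}\neq F_2\cap\{v\}$. Applying the $k$-identifiability of this particular $v$ to the same pair $F_1,F_2$ (whose sizes are still $\le k$) shows that $F_1$ and $F_2$ are distinguishable, establishing that $S$ is $k$-identifiable.

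There is no genuine analytic difficulty here; both directions collapse once the equivalence $F_1\cap S\neq F_2\cap S \iff \exists\,v\in S,\ F_1\cap\{v\}\neq F_2\cap\{v\}$ is made explicit. The only step that warrants care is the quantifier bookkeeping in part b): one must observe that a \emph{single} witness $v$ suffices, because distinguishability is a property of the pair $(F_1,F_2)$ alone, so it is irrelevant that different pairs may be certified by different nodes of $S$. I would also confirm that the size constraints $|F_i|\le k$ transfer unchanged in both directions, which they do since each argument reuses the very same $F_1,F_2$.
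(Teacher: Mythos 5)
Your proof is correct and follows essentially the same route as the paper: part b) is identical (pick a witness $v\in S$ lying in exactly one of $F_1,F_2$ and invoke its $k$-identifiability), and part a) is just the direct (contrapositive-free) phrasing of the paper's argument, resting on the same observation that $F_1\cap\{v\}\neq F_2\cap\{v\}$ with $v\in S$ forces $F_1\cap S\neq F_2\cap S$. Your explicit remark that distinguishability depends only on $(F_1,F_2,P)$ and not on $S$ is a nice clarification of the quantifier bookkeeping, but it does not change the substance of the argument.
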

\begin{proof}
a) Suppose $\exists$ node $v\in S$ that is not $k$-identifiable, then $\exists$ at least two failure sets $F_1$ and $F_2$ with $|F_i|\leq k$ ($i=\{1,2\}$) and $F_1 \cap \{v\} \neq F_2 \cap \{v\}$ such that $F_1$ and $F_2$ are not distinguishable. Thus, $S$ is not $k$-identifiable as $v\in S$.\looseness=-1

b) For any two failure sets $F_1$ and $F_2$ with $|F_i|\leq k$ ($i=\{1,2\}$) and $F_1 \cap S \neq F_2 \cap S$, $\exists$ a node $v\in S$ that is either in $F_1$ or $F_2$ but not both. Since node $v$ is $k$-identifiable, $F_1$ and $F_2$ must be distinguishable. Therefore, $S$ is $k$-identifiable.
\end{proof}

\begin{proposition}
\label{prop:Construct_Omega_S}
$\Omega(S) = \min_{v\in S}\Omega(v)$.
\end{proposition}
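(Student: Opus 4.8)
The plan is to prove the identity $\Omega(S) = \min_{v \in S} \Omega(v)$ by establishing the two inequalities separately, using Lemma~\ref{lem:nodes_in_OmegaS} as the workhorse. Recall that $\Omega(S)$ is defined as the maximum $k$ for which $S$ is $k$-identifiable, and similarly $\Omega(v)$ for the singleton $\{v\}$. The strategy is to show that $S$ is $k$-identifiable exactly when every node $v \in S$ is $k$-identifiable, which is precisely the content of Lemma~\ref{lem:nodes_in_OmegaS} parts a) and b) combined. Once that equivalence is in hand, translating it into a statement about the maximal such $k$ gives the proposition.

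For the inequality $\Omega(S) \le \min_{v \in S} \Omega(v)$, I would argue as follows. Let $k = \Omega(S)$, so that $S$ is $k$-identifiable by definition. By Lemma~\ref{lem:nodes_in_OmegaS}a), every $v \in S$ is then $k$-identifiable, which means $\Omega(v) \ge k$ for each such $v$. Taking the minimum over $v \in S$ yields $\min_{v \in S} \Omega(v) \ge k = \Omega(S)$, giving one direction. For the reverse inequality $\Omega(S) \ge \min_{v \in S} \Omega(v)$, I would set $k^* = \min_{v \in S} \Omega(v)$. Then for every $v \in S$ we have $\Omega(v) \ge k^*$, which (since $\Omega(v)$ is the \emph{maximum} index for which $v$ is identifiable, and identifiability is monotone downward in $k$) implies each $v$ is $k^*$-identifiable. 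Lemma~\ref{lem:nodes_in_OmegaS}b) then shows $S$ itself is $k^*$-identifiable, hence $\Omega(S) \ge k^*$. Combining the two inequalities gives equality.

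The one subtlety I would want to pin down carefully is the monotonicity of identifiability in $k$: namely, that if $v$ is $k$-identifiable then it is $k'$-identifiable for every $k' \le k$. This is needed to move freely between ``$\Omega(v) \ge k^*$'' and ``$v$ is $k^*$-identifiable,'' since a priori $\Omega(v)$ only guarantees identifiability at its own maximal value. Fortunately this monotonicity is essentially immediate from Definition~\ref{def:identifiability}: shrinking the bound $k$ only restricts the collection of failure-set pairs $(F_1, F_2)$ that must be distinguished (those with $|F_i| \le k'$ form a subset of those with $|F_i| \le k$), so the $k$-identifiability requirement implies the $k'$-identifiability requirement. I would either invoke this as an obvious observation or state it as a one-line remark before the main argument.

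I do not anticipate any serious obstacle here, since the proposition is essentially a repackaging of Lemma~\ref{lem:nodes_in_OmegaS} in terms of maximal indices; the ``hard part,'' if any, is purely bookkeeping — making sure the definitions of the maximum index are applied in the correct direction for each inequality and that the downward monotonicity is acknowledged. The proof is short and mechanical once the lemma is available.
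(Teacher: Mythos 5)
Your proof is correct and follows essentially the same route as the paper's: both directions are obtained from Lemma~\ref{lem:nodes_in_OmegaS} parts (a) and (b), exactly as in the published proof. Your explicit remark on the downward monotonicity of $k$-identifiability (needed to pass from $\Omega(v)\geq k^*$ to ``$v$ is $k^*$-identifiable'') is a point the paper leaves implicit in the phrase ``by the definition of maximum identifiability index,'' so making it explicit is a minor improvement rather than a different approach.
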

\begin{proof}
By Lemma~\ref{lem:nodes_in_OmegaS}-(a), any $v\in S$ must have $\Omega(v)\geq \Omega(S)$. Thus, $\min_{v\in S}\Omega(v) \geq \Omega(S)$. By the definition of maximum identifiability index, all nodes in $S$ are $\min_{v\in S}\Omega(v)$-identifiable. By Lemma~\ref{lem:nodes_in_OmegaS}-(b), $S$ is also $\min_{v\in S}\Omega(v)$-identifiable. Thus, $\Omega(S) \geq \min_{v\in S}\Omega(v)$. Therefore, $\Omega(S) = \min_{v\in S}\Omega(v)$.
\end{proof}

\begin{corollary}
\label{coro:Omega_decreasing}
Maximum identifiability index of $S$, $\Omega(S)$, is monotonically non-increasing in the sense that $\Omega(S_1)\geq \Omega(S_2)$ for any two non-empty sets $S_1$ and $S_2$ with $S_1\subset S_2$.
\end{corollary}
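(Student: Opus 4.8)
The plan is to derive Corollary~\ref{coro:Omega_decreasing} directly from Proposition~\ref{prop:Construct_Omega_S}, which expresses the maximum identifiability index as a minimum of per-node indices. Since $\Omega(S) = \min_{v\in S}\Omega(v)$, the monotonicity of $\Omega$ under set inclusion reduces to an elementary fact about minima: enlarging the index set over which we take a minimum cannot increase the value of that minimum.

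Concretely, I would proceed as follows. Let $S_1$ and $S_2$ be any two non-empty node sets with $S_1\subset S_2$. By Proposition~\ref{prop:Construct_Omega_S}, we have $\Omega(S_1) = \min_{v\in S_1}\Omega(v)$ and $\Omega(S_2) = \min_{v\in S_2}\Omega(v)$. Since $S_1\subseteq S_2$, the minimization defining $\Omega(S_2)$ is taken over a superset of the index set used for $\Omega(S_1)$; it ranges over every node in $S_1$ plus the additional nodes in $S_2\setminus S_1$. Adding more terms to a minimum can only keep it the same or lower it, so $\min_{v\in S_2}\Omega(v) \le \min_{v\in S_1}\Omega(v)$, which is exactly $\Omega(S_2)\le\Omega(S_1)$, i.e.\ $\Omega(S_1)\ge\Omega(S_2)$. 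This completes the argument.

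Honestly, there is no real obstacle here: the corollary is an almost immediate consequence of the characterization already proven in Proposition~\ref{prop:Construct_Omega_S}. The only thing worth being slightly careful about is that both minima are well-defined, which holds because $S_1$ and $S_2$ are non-empty (so the index sets are non-empty) and each per-node index $\Omega(v)$ is a well-defined quantity by Definition~\ref{def:identifiability}. An alternative, self-contained route that avoids invoking Proposition~\ref{prop:Construct_Omega_S} would be to argue directly from Definition~\ref{def:identifiability}: if $S_2$ is $k$-identifiable then so is every subset, in particular $S_1$ (since $F_1\cap S_1\neq F_2\cap S_1$ implies $F_1\cap S_2\neq F_2\cap S_2$), whence $\Omega(S_1)\ge\Omega(S_2)$. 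I would present the proof via Proposition~\ref{prop:Construct_Omega_S} as the primary line since it is the cleaner one-line deduction, and mention the direct subset argument as the underlying intuition.
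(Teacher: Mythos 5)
Your proof is correct and follows the same route as the paper: both derive the monotonicity directly from Proposition~\ref{prop:Construct_Omega_S} by noting that the minimum over the larger set $S_2$ cannot exceed the minimum over $S_1$. Your additional remarks (non-emptiness of the minima, and the alternative direct argument from Definition~\ref{def:identifiability}) are sound but not needed beyond what the paper already does.
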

\begin{proof}
Since $S_1\subset S_2$, $\min_{v\in S_1}\Omega(v) \geq \min_{v\in S_2}\Omega(v)$. Therefore, by Proposition~\ref{prop:Construct_Omega_S}, $\Omega(S_1)\geq \Omega(S_2)$.
\end{proof}

Therefore, we can estimate the maximum identifiability index of a given non-monitor set using Corollary~\ref{coro:Omega_decreasing} when the maximum identifiability index of its subset/superset is known.

Next, we show that maximum $k$-identifiable sets exhibit properties that can facilitate fast determination of which nodes should be included/excluded in these sets.

\begin{proposition}
\label{prop:construct_maximum_identifiable_set}
Let $S'(k):=\{v\in N:$ $v$ is $k$-identifiable $\}$. Then $S'(k)=S^*(k)$.
\end{proposition}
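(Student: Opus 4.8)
The plan is to prove the two sets coincide by showing mutual containment, relying entirely on Lemma~\ref{lem:nodes_in_OmegaS}, which already translates the identifiability of a set into the identifiability of its individual nodes and vice versa. The whole argument is a short cardinality comparison once the right objects are identified.

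First I would establish $S^*(k)\subseteq S'(k)$. By Definition~\ref{def:max-identifiableSet}, $S^*(k)$ is $k$-identifiable, so Lemma~\ref{lem:nodes_in_OmegaS}-(a) guarantees that every node $v\in S^*(k)$ is itself $k$-identifiable; hence $v\in S'(k)$ by the definition of $S'(k)$. Next I would show that $S'(k)$ is a $k$-identifiable set: since, by construction, every node in $S'(k)$ is $k$-identifiable, Lemma~\ref{lem:nodes_in_OmegaS}-(b) applies directly and yields that $S'(k)$ is $k$-identifiable. These are the two facts that drive everything.

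Finally I would combine them with the maximality of $S^*(k)$. Because $S^*(k)$ is the largest-cardinality $k$-identifiable set and $S'(k)$ has just been shown to be $k$-identifiable, we have $|S'(k)|\leq |S^*(k)|$. On the other hand, the containment $S^*(k)\subseteq S'(k)$ forces $|S^*(k)|\leq |S'(k)|$. Hence $|S^*(k)|=|S'(k)|$, and combined with $S^*(k)\subseteq S'(k)$ this gives $S^*(k)=S'(k)$. As a byproduct, this identifies $S^*(k)$ with a single, explicitly described set, which is precisely what establishes the uniqueness of $S^*(k)$ claimed after Definition~\ref{def:max-identifiableSet}.

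I do not expect a serious obstacle here, as the two halves of Lemma~\ref{lem:nodes_in_OmegaS} do the heavy lifting. The one point that warrants care is not conflating the \emph{property} of being $k$-identifiable with the \emph{cardinality-extremal object} $S^*(k)$: the argument must first verify that $S'(k)$ is $k$-identifiable and only then invoke the maximality of $S^*(k)$, rather than presuming $S'(k)$ is extremal from the outset. Getting this ordering right is what makes the cardinality comparison legitimate and simultaneously delivers the uniqueness statement.
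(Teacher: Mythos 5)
Your proposal is correct and follows essentially the same route as the paper's own proof: both use Lemma~\ref{lem:nodes_in_OmegaS}-(a) to get $S^*(k)\subseteq S'(k)$, Lemma~\ref{lem:nodes_in_OmegaS}-(b) to show $S'(k)$ is $k$-identifiable, and the maximality of $S^*(k)$ to close the cardinality comparison. Your explicit remark about verifying $k$-identifiability of $S'(k)$ before invoking extremality is a fair articulation of a step the paper leaves implicit, but it is the same argument.
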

\begin{proof}
By Lemma~\ref{lem:nodes_in_OmegaS}-(a), any node in $S^*(k)$ is $k$-identifiable. Therefore, $S^*(k)\subseteq S'(k)$.

Next, $S'(k)$ must be $k$-identifiable according to Lemma~\ref{lem:nodes_in_OmegaS}-(b). Thus $|S'(k)|\leq |S^*(k)|$. Consequently, $S'(k)=S^*(k)$.
\end{proof}

Proposition~\ref{prop:construct_maximum_identifiable_set} provides a method to construct the maximum $k$-identifiable set $S^*(k)$ by simply collecting all $k$-identifiable nodes. Based on this method, we can further prove the uniqueness and monotonicity of $S^*(k)$ as follows:

\begin{corollary}
\label{coro:unique_maximum_identifiable_set}
The maximum $k$-identifiable set $S^*(k)$ is unique and monotonically non-increasing in $k$, i.e., $S^*({k+1})\subseteq S^*(k)$ for any $k$.
\end{corollary}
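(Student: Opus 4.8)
The plan is to reduce both claims to the per-node characterization already established in Proposition~\ref{prop:construct_maximum_identifiable_set}, namely $S^*(k) = S'(k)$ where $S'(k) = \{v \in N : v \text{ is } k\text{-identifiable}\}$. The value of this reduction is that $S'(k)$ is specified entirely by a yes/no property of each individual node, so statements about the \emph{set} $S^*(k)$ can be translated into statements about when a single node is identifiable, where monotonicity is transparent.

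For uniqueness, I would simply observe that $S'(k)$ is well-defined without any choice: each non-monitor $v$ either is or is not $k$-identifiable, and $S'(k)$ collects exactly those that are. Since Proposition~\ref{prop:construct_maximum_identifiable_set} shows that every maximum $k$-identifiable set must coincide with $S'(k)$, there can be only one such set, which gives uniqueness immediately.

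For monotonicity, the key step is a per-node monotonicity claim: if $v$ is $(k+1)$-identifiable then $v$ is $k$-identifiable, i.e. $S'(k+1) \subseteq S'(k)$. To see this I would unfold the definition of $k$-identifiability for the singleton $\{v\}$ via Definition~\ref{def:identifiability}: $v$ is $k$-identifiable iff every pair of failure sets $F_1, F_2$ with $|F_i| \le k$ that differ on $v$ is distinguishable. The collection of such pairs with both sizes bounded by $k$ is contained in the collection with both sizes bounded by $k+1$, so if the distinguishability requirement holds at the larger bound it holds a fortiori at the smaller one. Combining this with Proposition~\ref{prop:construct_maximum_identifiable_set} then yields $S^*(k+1) = S'(k+1) \subseteq S'(k) = S^*(k)$.

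The argument is short because the heavy lifting has already been done by Proposition~\ref{prop:construct_maximum_identifiable_set} and Lemma~\ref{lem:nodes_in_OmegaS}. The only point that needs care is the \emph{direction} of the per-node monotonicity: enlarging the failure budget from $k$ to $k+1$ enlarges the set of competing failure sets that must be told apart, so identifiability can only become harder, never easier, as $k$ grows. This is precisely why the containment runs $S^*(k+1) \subseteq S^*(k)$ rather than the reverse, and it is the one place where one should double-check the quantifier ranges to avoid inverting the inclusion.
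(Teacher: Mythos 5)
Your proposal is correct and follows essentially the same route as the paper: uniqueness via the per-node characterization $S^*(k)=S'(k)$ from Proposition~\ref{prop:construct_maximum_identifiable_set}, and monotonicity via the per-node fact that $(k+1)$-identifiability implies $k$-identifiability. The only difference is cosmetic—you spell out the per-node monotonicity by noting that failure-set pairs bounded by $k$ form a subset of those bounded by $k+1$, a step the paper asserts (in contrapositive form) without elaboration—so your version is, if anything, slightly more complete.
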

\begin{proof}
Definition~\ref{def:identifiability} implies that $k$-identifiability is a per-node property that is independent of the identifiability of other nodes. Therefore, for each node in $N$, it is either $k$-identifiable or \emph{not} $k$-identifiable. By Proposition~\ref{prop:construct_maximum_identifiable_set}, $S^*(k)$ is a set containing all $k$-identifiable nodes; therefore, $S^*(k)$ is unique.\looseness=-1

For each node $w\in N\setminus S^*(k)$, $w$ is \emph{not} $k$-identifiable, and thus $w$ is not $(k+1)$-identifiable. Since $S^*(k+1)$ is a collection of all $(k+1)$-identifiable nodes, no nodes in $N\setminus S^*(k)$ can be included in $S^*(k+1)$. Thus, $S^*({k+1})\subseteq S^*(k)$.
\end{proof}

Intuitively, if there exists a $k$-identifiable set $S'(k)$ with $|S'(k)|=|S^*(k)|$, then we must have $S'(k)=S^*(k)$. Thus, Corollary~\ref{coro:unique_maximum_identifiable_set} suggests one way to obtain $S^*(k)$ is to identify $S^*(j)$ for $j<k$ and then only study subsets of $S^*(j)$; nodes outside $S^*(j)$ are guaranteed to be excluded from $S^*(k)$.\looseness=-1

\begin{corollary}
\label{coro:maximum_set_lower_bound}
Let $S''(k):=\{v\in N:\exists$ path in $P$ traversing $v$ but none of the nodes in each failure set $F$ with $v\notin F$ and $|F|\leq k\}$. Then $S''(k)\subseteq S^*(k)$.
\end{corollary}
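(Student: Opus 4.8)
The plan is to show that every node $v \in S''(k)$ is $k$-identifiable, from which $S''(k) \subseteq S^*(k)$ follows immediately by Proposition~\ref{prop:construct_maximum_identifiable_set} (which identifies $S^*(k)$ with the set of all $k$-identifiable nodes). The key observation is that the defining property of $S''(k)$ is precisely a single-node instance of the sufficient condition \emph{a)} in Lemma~\ref{lem:abstract condition}.

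Concretely, I would first fix an arbitrary $v \in S''(k)$ and aim to prove that the singleton $\{v\}$ is $k$-identifiable, i.e., that $v$ is $k$-identifiable in the sense of Definition~\ref{def:identifiability}. By the membership rule for $S''(k)$, for every failure set $F$ with $v \notin F$ and $|F| \leq k$, there exists a path $p \in P$ traversing $v$ but none of the nodes in $F$. Matching this against Lemma~\ref{lem:abstract condition}-\emph{a)} applied to the set $S = \{v\}$: the condition there requires that for any $F$ with $|F| \leq k$ and any node $v' \in \{v\} \setminus F$, a path through $v'$ avoiding $F$ exists. The only way $\{v\} \setminus F$ is nonempty is when $v \notin F$, and then $v' = v$; so the requirement collapses to exactly what the $S''(k)$ membership guarantees. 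Hence Lemma~\ref{lem:abstract condition}-\emph{a)} certifies that $\{v\}$ is $k$-identifiable, which is the statement that $v$ is $k$-identifiable.

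Having shown $v$ is $k$-identifiable for every $v \in S''(k)$, I would invoke Proposition~\ref{prop:construct_maximum_identifiable_set}, which states $S^*(k) = S'(k) = \{v \in N : v \text{ is } k\text{-identifiable}\}$. Since each $v \in S''(k)$ lies in this set, we conclude $S''(k) \subseteq S^*(k)$.

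I do not expect a serious obstacle here; the result is essentially a bookkeeping corollary that specializes the abstract sufficient condition to single nodes and then lifts back to sets via the per-node characterization of $S^*(k)$. The one point deserving care is the quantifier alignment: the $S''(k)$ definition quantifies over failure sets $F$ with $|F| \leq k$ \emph{and} $v \notin F$, whereas Lemma~\ref{lem:abstract condition}-\emph{a)} quantifies over all $F$ with $|F| \leq k$ together with $v' \in S \setminus F$. I would make explicit that when $v \in F$ the set $\{v\} \setminus F$ is empty and the sufficient condition is vacuously satisfied, so no path is needed in that case — this is why the $v \notin F$ restriction in the definition of $S''(k)$ loses no generality.
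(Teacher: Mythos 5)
Your proof is correct and takes essentially the same route as the paper's: both arguments combine the sufficient condition \emph{a)} of Lemma~\ref{lem:abstract condition} with the per-node characterization $S^*(k)=\{v\in N: v \text{ is } k\text{-identifiable}\}$ from Proposition~\ref{prop:construct_maximum_identifiable_set}. The only cosmetic difference is that the paper applies the sufficient condition to $S''(k)$ as a whole (getting $\Omega\big(S''(k)\big)\geq k$) and then descends to per-node identifiability, whereas you apply it directly to each singleton $\{v\}$; your quantifier-alignment remark (vacuity of the condition when $v\in F$) is precisely what makes that specialization legitimate.
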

\begin{proof}
$S''(k)$ satisfies sufficient condition a) in Lemma~\ref{lem:abstract condition}. Thus, $\Omega\big(S''(k)\big)\geq k$. Following similar arguments as in the proof of Proposition~\ref{prop:construct_maximum_identifiable_set}, again we have that each node in $S''(k)$ is at least $k$-identifiable. Therefore, $S''(k)\subseteq S^*(k)$.
\end{proof}

By Corollary~\ref{coro:maximum_set_lower_bound}, we note that $S''(k)$ underestimates the size of the maximum $k$-identifiable set $S^*(k)$, yet it forms an inner bound (i.e., subset) of $S^*(k)$, thus providing theoretical support for determining the must-have nodes in the optimum set $S^*(k)$; see detailed discussions presented in Section~\ref{sec:CharacterizationMaximumIdentifiableSet}.

%\vspace{.5em}
%\emph{Note:} Propositions~\ref{prop:Construct_Omega_S} and \ref{prop:construct_maximum_identifiable_set} show that both $\Omega(S)$ and $S^*(k)$ are functions of $\Omega(v)$, the per-node maximum identifiability. Therefore, we can easily obtain $\Omega(S)$ for any node set $S$ and $S^*(k)$ for any value of $k$ by computing $\Omega(v)$ for each $v\in N$.

\vspace{.5em}
\emph{Remark:} Results in this section apply to any probing mechanism. We will show in the following sections how they can be used to design efficient algorithms for probing mechanisms CAP, CSP, and UP. The above results can also be used to design algorithms for other probing mechanisms.

\section{Verifiable Identifiability Conditions}
\label{sec:Verifiable Identifiability Conditions}

In this section, starting from the abstract conditions in Section~\ref{sec:Abstract Identifiability Conditions}, we develop concrete conditions suitable for efficient testing for the three classes of probing mechanisms.

\subsection{Conditions under CAP}\label{subsec:Conditions for k-identifiability, CAP}

CAP essentially allows us to ``ping'' any node from a monitor along any path. In the face of failures, this allows a monitor to determine the state of a node as long as it is connected to the node after removing other failed nodes. This observation allows us to translate the conditions in Section~\ref{sec:Abstract Identifiability Conditions} into more concrete identifiability conditions (Lemma~\ref{lem:k-identifiability, CAP}).

\begin{lemma}\label{lem:k-identifiability, CAP}
Set $S$ is $k$-identifiable under CAP if and only if for any set $V'$ of up to $k-1$ non-monitors, each connected component in $\mathcal{G}-V'$ that contains a node in $S$ has a monitor.
\end{lemma}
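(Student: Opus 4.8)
The plan is to connect the abstract conditions of Lemma~\ref{lem:abstract condition} to graph connectivity through a single bridging observation about CAP, and then treat the two directions asymmetrically. The bridge is this: under CAP, for a non-monitor $u$ and a set $X\subseteq N$ with $u\notin X$, there is a measurement path traversing $u$ but avoiding every node of $X$ if and only if $u$ lies in a connected component of $\mathcal{G}-X$ that contains a monitor. The ``if'' direction holds because a simple path from a monitor $m$ to $u$ inside $\mathcal{G}-X$ can be traversed and then retraced, yielding a valid CAP probe $m\rightsquigarrow u\rightsquigarrow m$ that avoids $X$; the ``only if'' direction is immediate, since any probe traversing $u$ while avoiding $X$ exhibits a monitor-to-$u$ walk in $\mathcal{G}-X$. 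Taking $X=V'$, this converts ``there exists a path traversing $v$ but none of $V'$'' into ``the component of $v$ in $\mathcal{G}-V'$ contains a monitor,'' which is exactly the graph-theoretic language of the statement.

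Given the bridge, necessity is essentially free. If $S$ is $k$-identifiable, the necessary direction of Lemma~\ref{lem:abstract condition} guarantees that for every $F$ with $|F|\le k-1$ and every $v\in S\setminus F$ there is a path traversing $v$ but none of $F$; applying the bridge with $X=F=V'$ shows that the component of $v$ in $\mathcal{G}-V'$ contains a monitor, which is precisely the claimed condition. (Equivalently, one may argue directly: if the condition fails for some $V'$ with a monitor-free component $C\ni v$, then $F_2=V'$ and $F_1=V'\cup\{v\}$ satisfy $|F_i|\le k$ and $F_1\cap S\ne F_2\cap S$, yet $P_{F_1}=P_{F_2}$ because no probe can reach $v$ while avoiding $V'$, contradicting Definition~\ref{def:identifiability}.)

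The delicate part is sufficiency, and the naive route fails here: the sufficient direction of Lemma~\ref{lem:abstract condition} would require a path traversing $v$ while avoiding a failure set of size up to $k$, i.e. the connectivity condition for deleted sets of size $k$, whereas the hypothesis only controls deleted sets of size $k-1$. I would therefore argue directly from the definition. Take $F_1,F_2$ with $|F_i|\le k$ and $F_1\cap S\ne F_2\cap S$, and pick $v\in S$ lying in exactly one of them, say $v\in F_1\setminus F_2$. If $v$ is connected to a monitor in $\mathcal{G}-F_2$, the bridge produces a probe traversing $v$ and avoiding $F_2$, which lies in $P_{F_1}\setminus P_{F_2}$ and distinguishes the two sets. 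The main obstacle is the complementary case, where $|F_2|=k$ and the component of $v$ in $\mathcal{G}-F_2$ is monitor-free, so $v$ cannot be reached past $F_2$. I resolve it by applying the hypothesis to $V'=F_1\setminus\{v\}$, which has size at most $k-1$: the component of $v$ in $\mathcal{G}-(F_1\setminus\{v\})$ then contains a monitor $m$, giving a path $Q$ from $v$ to $m$ all of whose nodes other than $v$ avoid $F_1$. Since $v$'s component in $\mathcal{G}-F_2$ has no monitor, $Q$ must meet $F_2$ at some node $q$ that is neither $v$ (as $v\notin F_2$) nor $m$ (as $m$ is a monitor); hence $q$ is interior, and the subpath from $q$ to $m$ avoids $F_1$. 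The CAP probe $m\rightsquigarrow q\rightsquigarrow m$ thus traverses $q\in F_2$ while avoiding $F_1$, placing it in $P_{F_2}\setminus P_{F_1}$ and again distinguishing $F_1$ from $F_2$. These two cases are exhaustive, so every admissible pair is distinguishable and $S$ is $k$-identifiable. I expect the bookkeeping in this last case---verifying that the blocking node $q$ is interior and that the retraced subpath genuinely avoids all of $F_1$---to be the only place demanding care.
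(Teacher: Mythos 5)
Your proof is correct, but your sufficiency argument takes a genuinely different route from the paper's, so a comparison is worthwhile. On necessity you and the paper coincide: the pair $F_2=V'$, $F_1=V'\cup\{v\}$ with $v$ unreachable past $V'$ yields $P_{F_1}=P_{F_2}$, which is exactly how the paper invokes condition (b) of Lemma~\ref{lem:abstract condition}. For sufficiency, the paper avoids your case split by a single, symmetric application of the hypothesis: it sets $I:=F_1\cap F_2$ (noting $|I|\leq k-1$, since $v\in F_1\setminus F_2$ forces $I\subseteq F_1\setminus\{v\}$), takes a path $p$ from a monitor $m$ to $v$ in $\mathcal{G}-I$, and truncates $p$ at the \emph{first} node $w\in (F_1\cup F_2)\setminus I$ encountered from $m$; since $w$ lies in exactly one of the two failure sets and the probe $m\rightsquigarrow w\rightsquigarrow m$ avoids every other node of $F_1\cup F_2$, it distinguishes the pair in one stroke, without deciding in advance whether the witness lands in $P_{F_1}\setminus P_{F_2}$ or $P_{F_2}\setminus P_{F_1}$. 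Your variant instead deletes the larger set $F_1\setminus\{v\}\supseteq I$ in the blocked case and extracts the blocking node $q\in F_2$; this is sound, and it buys two expository advantages the paper leaves implicit: it isolates exactly where the $(k-1)$-hypothesis is needed (only when $F_2$ disconnects $v$ from all monitors, which, as you observe, can only happen when $|F_2|=k$), and it explains up front why Lemma~\ref{lem:abstract condition}(a) cannot be invoked directly. The one piece of bookkeeping you should state explicitly is that $Q$ be chosen \emph{simple} (always possible within a connected component), since $Q$ lives in $\mathcal{G}-(F_1\setminus\{v\})$ and may touch the $F_1$-node $v$ at its endpoint; simplicity is what guarantees the $q$-to-$m$ subpath avoids $v$ and hence all of $F_1$. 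With that word added, both cases are airtight; the paper's $I$-based truncation is more economical, while yours makes the mechanics more transparent.
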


\begin{proof}
\emph{Necessity.} Suppose the above condition does not hold, i.e., there exists a non-monitor $v$ ($v\in S$) that is disconnected from all monitors in $\mathcal{G}-V'$ for a set $V'$ of up to $k-1$ non-monitors ($v\not\in V'$). Then if nodes in $V'$ fail, no remaining measurement path can probe $v$, violating the condition in Lemma~\ref{lem:abstract condition}-(b).

\emph{Sufficiency.} The proof is similar to that of Theorem~2 in \cite{Cho14Elsevier}, except that we are only interested in localizing failures in $S$. Consider two failure sets $F_1$ and $F_2$ with $|F_i|\leq k$ ($i=\{1,2\}$) and $F_1 \cap S \neq F_2 \cap S$. Then $\exists$ node $v$ ($v\in S$) that is in one and only one of $F_1$ and $F_2$. Without loss of generality, let $v\in F_1$. Let $I:=F_1 \cap F_2$. Since $|I|\leq k-1$, $\exists$ a path $p$ connecting a monitor $m$ with node $v$ in $\mathcal{G}-I$ if the condition in Lemma~\ref{lem:k-identifiability, CAP} holds. Let $w$ be the first node on $p$ (starting from $m$) that is in either $F_1\setminus I$ or $F_2\setminus I$. Truncating $p$ at $w$ gives a path $p'$ such that $p'$ and its reverse path form a measurement path from $m$ to $w$ and back to $m$ that traverses only $F_1$ or $F_2$, thus distinguishing $F_1$ and $F_2$.
\end{proof}

Under CAP, Lemma~\ref{lem:k-identifiability, CAP} shows that the necessary condition derived from Lemma~\ref{lem:abstract condition} is also sufficient. However, the condition in Lemma~\ref{lem:k-identifiability, CAP} still cannot be tested efficiently because a combinatorial number of sets $V'$ are enumerated. Fortunately, we can reduce Lemma~\ref{lem:k-identifiability, CAP} into explicit conditions on vertex-cuts of a related topology, which can then be tested in polynomial time. We use the following notion from graph theory.\looseness=-1

\begin{definition}\label{def:vertex-cut}
For two nodes $s$ and $t$ in an undirected graph $\mathcal{G}$, $(s,t)$-vertex-cut in $\mathcal{G}$, denoted by $C_{\mathcal{G}}(s,t)$, is the minimum-cardinality node set whose deletion destroys all paths from $s$ to $t$. If $s$ and $t$ are neighbors, $C_{\mathcal{G}}(s,t):=V(\mathcal{G})\setminus \{t\}$.
\end{definition}

\begin{figure}[tb]
\centering
\includegraphics[width=3.4in]{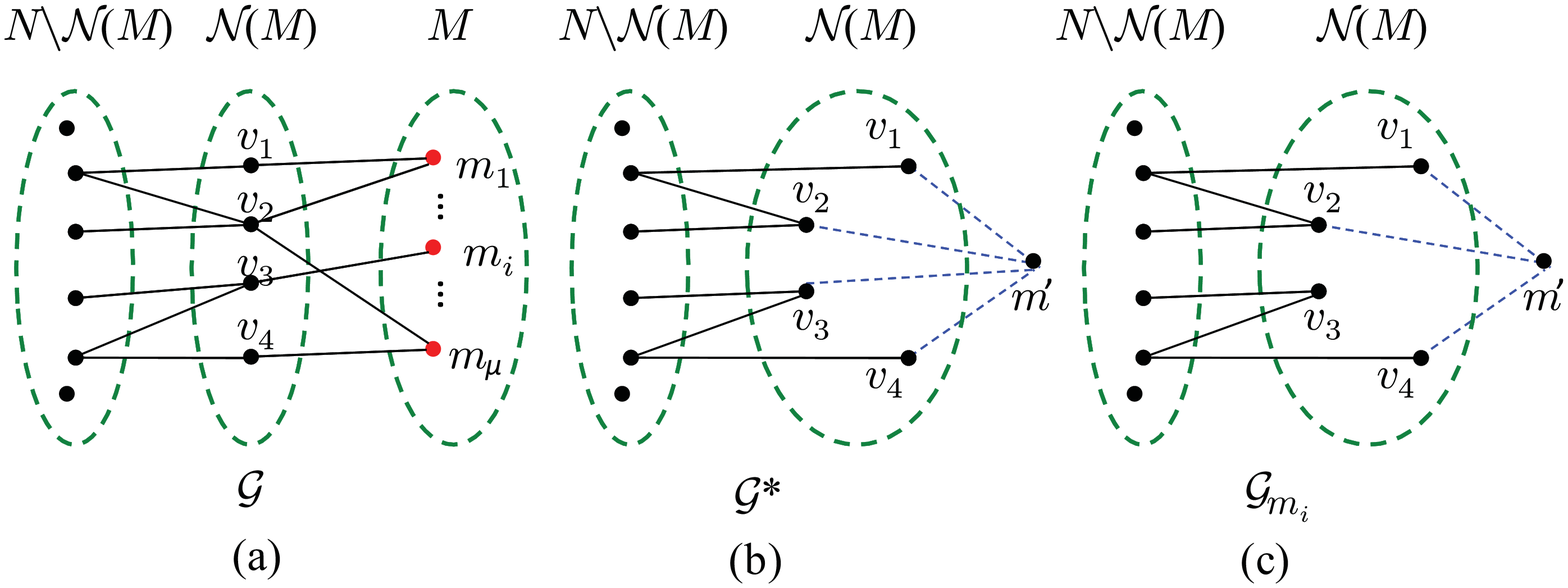}
\vspace{-0.5em}
\caption{Auxiliary graphs: (a) Original graph $\mathcal{G}$; (b) $\mathcal{G}^*$ of $\mathcal{G}$; (c) $\mathcal{G}_{m_i}$ of $\mathcal{G}$ w.r.t. monitor $m_i$.} \label{Fig:AuxiliaryGraph}
\vspace{-1.5em}
\end{figure}

Our key observation is that requiring each connected component in $\mathcal{G}-V'$ that contains a node in $S$ to have a monitor is equivalent to requiring each such component in $\mathcal{G}-M-V'$ (i.e., after removing all monitors) to contain a neighbor of a monitor. Thus, if we extend $\mathcal{G}-M$ by adding a \emph{virtual monitor} $m'$ and \emph{virtual links} connecting $m'$ and all neighbors of monitors to obtain an \emph{auxiliary graph} $\mathcal{G}^*:= \mathcal{G}-M +\{m'\} + \mathcal{L}\big(\{m'\},\: \mathcal{N}(M)\big)$ (illustrated in Fig.~\ref{Fig:AuxiliaryGraph}~(b)), then each node in $S\setminus V'$ should be connected to $m'$ in $\mathcal{G}^*-V'$. In other words, the minimum cardinality of the ($m',w$)-vertex-cut in $\mathcal{G}^*$ over all $w\in S$ must be greater than $|V'|$. For ease of presentation, we introduce the following definition.\looseness=-1

\begin{definition}
\label{def:CardinalityOfVertexCut}
Given a graph $\mathcal{G}$, a node set $S$, and a node $m\notin S$, define $\Gamma_{\mathcal{G}}(S, m) := \min_{w\in S}|C_{\mathcal{G}}(w,m)|$.
\end{definition}

By this definition, Lemma~\ref{lem:k-identifiability, CAP} can be transformed into a new condition, which reduces the tests over all possible $V'$ to a single test of the vertex-cuts of $\mathcal{G}^*$, as stated below (recall that $\sigma$ is the total number of non-monitors).

\begin{lemma}\label{lem:connectivity of G*}
Each connected component in $\mathcal{G}-V'$ that contains a node in $S$ has a monitor for any set $V'$ of up to $q$ ($q\leq \sigma-1$) non-monitors if and only if $\Gamma_{\mathcal{G}^*}(S, m')\geq q+1$.
\end{lemma}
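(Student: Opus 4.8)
The plan is to prove the two conditions equivalent by reducing both to a single statement about connectivity between nodes of $S$ and the virtual monitor $m'$ in the auxiliary graph $\mathcal{G}^*$, and then reading off vertex-cut cardinalities. Write (A) for the component condition (every connected component of $\mathcal{G}-V'$ meeting $S$ contains a monitor, for all non-monitor sets $V'$ with $|V'|\le q$) and (B) for the inequality $\Gamma_{\mathcal{G}^*}(S,m')\ge q+1$. Both sides are more natural to reason about in negated form, so I would prove the equivalence by showing that (A) fails if and only if $\Gamma_{\mathcal{G}^*}(S,m')\le q$.

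The crux is a connectivity claim linking $\mathcal{G}-V'$ and $\mathcal{G}^*-V'$: for any non-monitor set $V'$ and any node $v\in S\setminus V'$, the connected component of $\mathcal{G}-V'$ containing $v$ contains a monitor if and only if $v$ is connected to $m'$ in $\mathcal{G}^*-V'$. I would establish this by path-tracing in both directions. For the forward direction, a $v$--$m'$ path in $\mathcal{G}^*-V'$ must reach $m'$ through a virtual link from some $u\in\mathcal{N}(M)\setminus V'$; the portion from $v$ to $u$ uses only non-monitors and real links, hence survives in $\mathcal{G}-V'$, and $u$ is adjacent to some monitor $m$ (which is never in $V'$ since $V'$ consists of non-monitors), so that monitor lies in $v$'s component. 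For the reverse direction, a $v$--$m$ path in $\mathcal{G}-V'$ reaching a monitor has a first monitor vertex whose predecessor $u$ lies in $\mathcal{N}(M)\setminus V'$; the prefix from $v$ to $u$ together with the virtual link $um'$ yields a $v$--$m'$ walk in $\mathcal{G}^*-V'$.

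Granting the claim, condition (A) fails exactly when some non-monitor set $V'$ with $|V'|\le q$ disconnects some $v\in S\setminus V'$ from $m'$ in $\mathcal{G}^*$; such a $V'$ destroys all $v$--$m'$ paths, which forces $|C_{\mathcal{G}^*}(v,m')|\le|V'|\le q$ and hence $\Gamma_{\mathcal{G}^*}(S,m')\le q$. Conversely, if $\Gamma_{\mathcal{G}^*}(S,m')\le q$, a minimum $(v,m')$-vertex-cut of size at most $q$ serves as the required $V'$. A key bookkeeping point is that $\mathcal{G}^*$ contains only non-monitors together with $m'$, so any such cut automatically consists of non-monitors and excludes both $v$ and $m'$, making it a legitimate choice in (A).

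The only delicate part I anticipate is the boundary behaviour when $v$ is a neighbor of $m'$ (equivalently $v\in\mathcal{N}(M)$): by the convention in Definition~\ref{def:vertex-cut}, $|C_{\mathcal{G}^*}(v,m')|=\sigma$, and because $q\le\sigma-1$ such a $v$ can never witness a violation of (B). This matches the combinatorial side, since the surviving direct edge $vm'$ keeps $v$ connected to $m'$ under any admissible $V'$ (as $v,m'\notin V'$), so a neighbor of $m'$ is consistently non-problematic on both sides. I expect this edge case, together with the care needed to ensure separating sets stay within the non-monitor set, to be the main subtlety; once the connectivity claim is in hand, the rest is a direct translation.
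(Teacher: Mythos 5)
Your proof is correct and takes essentially the same route as the paper, which motivates the lemma by exactly your key claim --- that a component of $\mathcal{G}-V'$ containing $v\in S\setminus V'$ has a monitor if and only if $v$ remains connected to the virtual monitor $m'$ in $\mathcal{G}^*-V'$ --- and then reads off vertex-cut cardinalities (the detailed argument being deferred to \cite{MaTONTR_Mar2015}). Your treatment of the edge cases is also sound: any $(v,m')$-cut in $\mathcal{G}^*$ automatically consists of non-monitors excluding $v$ and $m'$, and the neighbor convention $|C_{\mathcal{G}^*}(v,m')|=\sigma$ together with $q\leq\sigma-1$ correctly makes monitor-adjacent nodes non-problematic on both sides.
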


\begin{proof}
The proof can be found in \cite{MaTONTR_Mar2015}.
\end{proof}

Lemma~\ref{lem:connectivity of G*} allows us to rewrite the identifiability conditions in Lemma~\ref{lem:k-identifiability, CAP} in terms of the vertex-cuts of $\mathcal{G}^*$.

\begin{theorem}[$k$-identifiability under CAP]\label{thm:k-identifiability, CAP}
Set $S$ is $k$-identifiable ($k\leq \sigma$) under CAP if and only if $\Gamma_{\mathcal{G}^*}(S, m')\geq k$.
\end{theorem}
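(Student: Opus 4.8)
The plan is to obtain Theorem~\ref{thm:k-identifiability, CAP} by chaining the two equivalences already in hand, namely Lemma~\ref{lem:k-identifiability, CAP} and Lemma~\ref{lem:connectivity of G*}. Lemma~\ref{lem:k-identifiability, CAP} recasts ``$S$ is $k$-identifiable under CAP'' as a purely graph-theoretic connectivity statement: for every set $V'$ of at most $k-1$ non-monitors, each connected component of $\mathcal{G}-V'$ that contains a node of $S$ must contain a monitor. The only remaining work is to collapse this entire family of conditions (one for each admissible $V'$) into a single vertex-cut inequality on the auxiliary graph $\mathcal{G}^*$, and this is exactly the content of Lemma~\ref{lem:connectivity of G*}.

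Concretely, I would first invoke Lemma~\ref{lem:k-identifiability, CAP} to replace $k$-identifiability of $S$ under CAP by the connected-component condition quantified over all $V'$ with $|V'|\le k-1$. I would then apply Lemma~\ref{lem:connectivity of G*} with the parameter choice $q=k-1$: that lemma asserts that the connected-component condition holds for all $V'$ with $|V'|\le q$ if and only if $\Gamma_{\mathcal{G}^*}(S,m')\ge q+1$. Substituting $q=k-1$ turns the threshold $q+1$ into $k$, yielding $\Gamma_{\mathcal{G}^*}(S,m')\ge k$, which is precisely the claimed condition. Composing the two biconditionals then gives the desired ``if and only if''.

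Since both constituent lemmas are already established (Lemma~\ref{lem:k-identifiability, CAP} directly and Lemma~\ref{lem:connectivity of G*} by reference to \cite{MaTONTR_Mar2015}), there is no substantive obstacle here: no case analysis and no further graph constructions are required. The single point demanding care is the validity range of the parameter in Lemma~\ref{lem:connectivity of G*}, which requires $q\le\sigma-1$. With $q=k-1$ this reads $k-1\le\sigma-1$, i.e. $k\le\sigma$, which is exactly the hypothesis attached to the theorem, so the substitution is legitimate over the entire stated range of $k$. Getting this $+1$ index offset right—matching the ``$k-1$ removed nodes'' of the identifiability condition against the cut threshold $k$—is the one thing I would double-check, but it is bookkeeping rather than a genuine difficulty.
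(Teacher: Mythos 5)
Your proposal is correct and is exactly the paper's route: the paper obtains Theorem~\ref{thm:k-identifiability, CAP} by combining Lemma~\ref{lem:k-identifiability, CAP} with Lemma~\ref{lem:connectivity of G*} instantiated at $q=k-1$, just as you do. Your check that $q\leq\sigma-1$ translates into the theorem's hypothesis $k\leq\sigma$ is the same bookkeeping the paper relies on implicitly.
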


A special case of Theorem~\ref{thm:k-identifiability, CAP} occurs when $k=\sigma$, i.e., any non-monitors can fail simultaneously. In this case, each node in $S$ must directly connect to at least one monitor in $\mathcal{G}$.

\emph{Discussion:} Theorem~\ref{thm:k-identifiability, CAP} extends and improves the identifiability condition given in Theorem~2 of \cite{Cho14Elsevier} by (i) considering failures within an arbitrary subset of nodes instead of the entire network, and (ii) providing a single condition that can be tested in polynomial time (see testing algorithm below) instead of testing a combinatorial number of conditions that enumerate all possible failure events.

\vspace{.5em}
%We point out that although one can transform link failures in $\mathcal{G}$ to node failures in its \emph{line graph} \footnote{Line graph of $\mathcal{G}$ is a graph where nodes represent links in $\mathcal{G}$ and two nodes are neighbors if and only if their corresponding links are incident to the same node in $\mathcal{G}$.} $\widetilde{\mathcal{G}}$ \cite{GraphTheory2005}, the existing results in \cite{Ahuja08} cannot be applied here for the following reasons. First, $\widetilde{\mathcal{G}}$ has special properties (e.g., claw-free) and cannot represent all possible topologies. Second, \cite{Ahuja08} makes a different assumption that probes can traverse paths/cycles with possibly repeated nodes but no repeated links. More importantly, the line-graph transformation loses information about how a link is traversed (i.e., whether a link $(v,\: w)$ is traversed from $v$ to $w$ or $w$ to $v$), which can create paths measurable in $\widetilde{\mathcal{G}}$ but not measurable in $\mathcal{G}$. These distinctions remain valid even if we limit probing mechanisms to CSP. It is open as to whether there is a topology transformation that allows one to solve node failure localization using the results in \cite{Ahuja08}.
\textbf{Testing algorithm:} A key advantage of the newly derived conditions over the abstract conditions in Section~\ref{sec:Abstract Identifiability Conditions} is that they can be tested efficiently. Let $\theta:=|\mathcal{N}(M)|$ denote the number of non-monitors that are neighbors of at least one monitor in $M$. Given node $w$, $C_{\mathcal{G}^*}(w,m')$ can be computed in $O(\theta \xi)$ time\footnote{The ($m',w$)-vertex-cut problem in an undirected graph can be reduced to an ($m',w$)-edge-cut problem in a directed graph in linear time \cite{Chuzhoy09JACM}. The ($m',w$)-edge-cut problem is solvable by the Ford$-$Fulkerson algorithm \cite{Ford56} in $O(\theta \xi)$ time.
%First replace each undirected edge in $\mathcal{G}$ by two edges with opposite directions to get $\mathcal{G}'$; then for each node $v\in \mathcal{G}'$, replace $v$ by two nodes ($v^+$ and $v^-$) and add a direct edge $v^+\rightarrow v^-$; finally, each edge $e=u\rightarrow v$ in $\mathcal{G}'$ is replaced by edge $e'=u^-\rightarrow v^+$.
}, where $\xi$ is the number of links (refer to Table~\ref{t notion} for notations). Therefore, we can evaluate $\Gamma_{\mathcal{G}^*}(S, m')$ in $O(\theta \xi |S|)$ time and compare the result with $k$ to test the conditions in Theorem~\ref{thm:k-identifiability, CAP}.

\subsection{Conditions under CSP}\label{subsec:Conditions for k-identifiability, CSP}

Under CSP, we restrict measurement paths $P$ to the set of \emph{simple paths} between monitors, i.e., paths starting/ending at distinct monitors that contain no cycles. As in CAP, our goal is again to transform the abstract conditions in Section~\ref{sec:Abstract Identifiability Conditions} into concrete sufficient/necessary conditions that can be efficiently verified. We first give analogous result to Theorem~\ref{thm:k-identifiability, CAP}.

\begin{lemma}\label{lem:k-identifiability, CSP}
Set $S$ is $k$-identifiable under CSP:
\begin{enumerate}
\item[a)] if for any node set $V'$, $|V'|\leq k+1$, containing at most one monitor, each connected component in $\mathcal{G}-V'$ that contains a node in $S$ also contains a monitor;
\item[b)] only if for any node set $V'$, $|V'|\leq k$, containing at most one monitor, each connected component in $\mathcal{G}-V'$ that contains a node in $S$ also contains a monitor.
\end{enumerate}
\end{lemma}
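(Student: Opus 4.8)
The plan is to derive both directions from the abstract sufficient/necessary conditions of Lemma~\ref{lem:abstract condition}, now specialized to the case where $P$ is the set of simple monitor-to-monitor paths. The recurring theme, and the reason the CSP conditions take the ``$\leq k+1$ / at most one monitor'' shape rather than the exact threshold available for CAP, is that a CSP path through a non-monitor $v$ must reach \emph{two} distinct monitors. Hence its existence is governed by two internally vertex-disjoint $v$-to-monitor paths rather than a single one, and one of those two paths may be destroyed by deleting a single (possibly monitor) vertex.

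For sufficiency (part a), I would fix a failure set $F\subseteq N$ with $|F|\leq k$ and a node $v\in S\setminus F$, and aim to produce a CSP path through $v$ that avoids $F$; by Lemma~\ref{lem:abstract condition}-(a) this yields $k$-identifiability. Working in $\mathcal{G}-F$, I would invoke Menger's theorem (the fan version): two internally disjoint paths from $v$ to distinct monitors exist iff no vertex set of size $\leq 1$ separates $v$ from the monitor set $M$. To rule out the empty separator I apply the hypothesis with $V'=F$ (size $\leq k$, no monitor), which guarantees that $v$'s component in $\mathcal{G}-F$ contains a monitor. To rule out each singleton separator $\{u\}$ I apply the hypothesis with $V'=F\cup\{u\}$: this set has size $\leq k+1$ and contains at most one monitor (namely possibly $u$), so $v$'s component in $\mathcal{G}-F-\{u\}$ still contains a monitor, i.e.\ $\{u\}$ is not a separator. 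Two disjoint $v$-to-monitor paths ending at distinct monitors $m_1\neq m_2$ then concatenate into a simple path $m_1\leadsto v\leadsto m_2$ avoiding $F$, which is exactly a valid CSP path through $v$.

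For necessity (part b), I would argue the contrapositive. Assuming some $V'$ with $|V'|\leq k$ and at most one monitor leaves a node $v\in S$ in a connected component $C$ of $\mathcal{G}-V'$ that contains no monitor, I construct a witness to non-identifiability. Note $V'\neq\emptyset$ (since $\mathcal{G}$ is connected and contains monitors). Let $x$ be the monitor of $V'$ if present and an arbitrary vertex of $V'$ otherwise, and set $F:=V'\setminus\{x\}\subseteq N$, so $|F|\leq k-1$ and $v\notin F$. The key geometric step is to show every CSP path $p$ through $v$ meets $F$: since $p$ runs from a monitor through $v\in C$ to another monitor, and both endpoints lie outside the monitor-free set $C$, the maximal contiguous run of $C$-vertices of $p$ containing $v$ is flanked by two vertices of $V'$ at strictly different path positions (an edge leaving $C$ can only enter $V'$), hence two \emph{distinct} vertices of $V'$. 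At most one of them equals $x$, so at least one lies in $F$. Thus no CSP path through $v$ avoids $F$, and Lemma~\ref{lem:abstract condition}-(b) gives that $S$ is not $k$-identifiable.

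I expect the delicate points to be (i) the monitor-aware invocation of Menger's theorem in the sufficiency direction, where verifying that the single separating vertex $u$ is allowed to be a monitor is precisely what forces ``$|V'|\leq k+1$ with at most one monitor''; and (ii) the flanking-vertex argument in the necessity direction, where I must ensure the two crossing vertices are genuinely distinct (they occupy distinct indices along the simple path) and that discarding the lone monitor of $V'$ leaves a set of size $\leq k-1$ still hit by every path. These two asymmetries, namely that two monitors are needed while one deletable vertex provides slack, are exactly why CSP yields a sufficient/necessary pair separated by a single unit rather than the exact characterization obtained for CAP.
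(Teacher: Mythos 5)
Your proof is correct, and it takes essentially the approach the paper intends: the paper itself defers this proof to its technical report, but your argument --- specializing Lemma~\ref{lem:abstract condition} to simple monitor-to-monitor paths, invoking Menger's fan theorem in $\mathcal{G}-F$ (ruling out the empty separator via $V'=F$ and each singleton separator via $V'=F\cup\{u\}$, which is exactly what forces the ``$|V'|\leq k+1$, at most one monitor'' form) for sufficiency, and for necessity removing the lone monitor from $V'$ and showing every simple monitor-to-monitor path through $v$ must cross the monitor-free component's boundary at two distinct vertices of $V'$ --- is precisely the connectivity argument underlying the paper's subsequent vertex-cut reformulation in Lemma~\ref{lem:connectivity of G_i} and Theorem~\ref{thm:k-identifiability, CSP}. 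Both delicate points you flag are handled correctly, including the degenerate case $V'=\{m\}$, where your flanking argument yields $P_v=\emptyset$ and the indistinguishable pair $\emptyset$ versus $\{v\}$.
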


\begin{proof}
The proof can be found in \cite{MaTONTR_Mar2015}.
\end{proof}

Due to the restriction to simple paths, the identifiability conditions in Lemma~\ref{lem:k-identifiability, CSP} are stronger than those in Lemma~\ref{lem:k-identifiability, CAP}. As with Lemma~\ref{lem:k-identifiability, CAP}, the conditions in Lemma~\ref{lem:k-identifiability, CSP} do not directly lead to efficient tests, and we again seek equivalent conditions in terms of topological properties. Each condition in the form of Lemma~\ref{lem:k-identifiability, CSP} (a--b) covers two cases: (i) $V'$ only contains non-monitors; (ii) $V'$ contains a monitor and $|V'|-1$ non-monitors. The first case has been converted to a vertex-cut property on an auxiliary topology $\mathcal{G}^*$ by Lemma~\ref{lem:connectivity of G*}; we now establish a similar condition for the second case using similar arguments.\looseness=-1

Fix a set $V'=F \cup \{m\}$, where $m$ is a monitor in $M$ and $F$ a set of non-monitors. Again, the key observation is that each connected component in $\mathcal{G}-V'$ that contains a node in $S$ also containing a monitor is equivalent to each such component in $\mathcal{G}-M-F$ containing a neighbor of a monitor other than $m$ (i.e., a node in $\mathcal{N}(M\setminus \{m\})$). To capture this observation, we introduce another \emph{auxiliary graph} $\mathcal{G}_m := \mathcal{G} - M + \{m'\} + \mathcal{L}\big(\{m'\},\: \mathcal{N}(M\setminus \{m\})\big)$ w.r.t. monitor $m$ as illustrated in Fig.~\ref{Fig:AuxiliaryGraph}~(c), where $m'$ is again a virtual monitor. We will show that the second case ($V'$ contains a monitor) is equivalent to requiring that the nodes in $S\setminus F$ and $m'$ are in the same connected component within $\mathcal{G}_m - F$, and thus the following holds.

\begin{lemma}\label{lem:connectivity of G_i}
The following two conditions are equivalent:
\begin{enumerate}
  \item[(1)] Each connected component in $\mathcal{G}-V'$ that contains a node in $S$ also contains a monitor for $\forall$sets $V'$ containing monitor $m$ ($m\in M$) and up to $q$ ($q\leq \sigma-1$) non-monitors;
  \item[(2)] $\Gamma_{\mathcal{G}_m}(S,m')\geq q+1$.
\end{enumerate}
\end{lemma}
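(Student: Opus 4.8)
The plan is to obtain Lemma~\ref{lem:connectivity of G_i} as an instance of Lemma~\ref{lem:connectivity of G*}, applied not to $\mathcal{G}$ but to the graph $\hat{\mathcal{G}}:=\mathcal{G}-\{m\}$ equipped with the reduced monitor set $\hat{M}:=M\setminus\{m\}$. The key is to notice that the auxiliary graph of $\hat{\mathcal{G}}$ w.r.t. $\hat{M}$ coincides with $\mathcal{G}_m$. Indeed, $\hat{\mathcal{G}}-\hat{M}=\mathcal{G}-M$, and deleting the single monitor $m$ does not alter any adjacency among the remaining nodes, so a non-monitor is a neighbor of some monitor in $\hat{M}$ in $\hat{\mathcal{G}}$ exactly when it is one in $\mathcal{G}$; hence $\mathcal{N}_{\hat{\mathcal{G}}}(\hat{M})=\mathcal{N}(M\setminus\{m\})$, the virtual monitor $m'$ is wired to precisely this set, and the construction reproduces $\mathcal{G}_m=\mathcal{G}-M+\{m'\}+\mathcal{L}(\{m'\},\mathcal{N}(M\setminus\{m\}))$. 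Note also that the non-monitor count of $\hat{\mathcal{G}}$ is still $\sigma$, so the admissible range $q\le\sigma-1$ is preserved.

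With this identification, the two conditions line up termwise. Writing any $V'$ from condition (1) as $V'=\{m\}\cup F$ with $F\subseteq N$ and $|F|\le q$, we have $\mathcal{G}-V'=\hat{\mathcal{G}}-F$; and since $m$ is the only monitor removed, the monitors present in $\mathcal{G}-V'$ are exactly those of $\hat{M}$. Therefore the statement ``every connected component of $\mathcal{G}-V'$ meeting $S$ contains a monitor, for all such $V'$ with $|F|\le q$'' is identical to the left-hand side of Lemma~\ref{lem:connectivity of G*} instantiated on $(\hat{\mathcal{G}},\hat{M})$ with up to $q$ non-monitors deleted. That lemma equates it to $\Gamma_{\hat{\mathcal{G}}^*}(S,m')\ge q+1$, which by the identification above is $\Gamma_{\mathcal{G}_m}(S,m')\ge q+1$, i.e. condition (2). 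This yields the claimed equivalence.

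The step that deserves the most care — and the main, if mild, obstacle — is this identification together with the fact that $\hat{\mathcal{G}}=\mathcal{G}-\{m\}$ need not be connected, whereas the ambient graph is assumed connected elsewhere in the paper. This is harmless because Lemma~\ref{lem:connectivity of G*} is phrased purely in terms of components and vertex-cuts and so holds verbatim on disconnected inputs; still, to keep the argument self-contained I would, as a fallback, prove the equivalence directly by mirroring the proof of Lemma~\ref{lem:connectivity of G*}. In that route the crux is the \emph{key observation} stated before the lemma: a component of $\mathcal{G}-V'$ meeting $S$ contains a monitor if and only if the corresponding node $w\in S\setminus F$ is connected to $m'$ in $\mathcal{G}_m-F$. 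I would verify this by path surgery: from a path in $\mathcal{G}-V'$ joining $w$ to a monitor, take the first monitor $u$ along it (necessarily in $\hat{M}$, since $m\notin\mathcal{G}-V'$) and its predecessor $t\in\mathcal{N}(M\setminus\{m\})$, and reroute the prefix through the virtual edge $tm'$; conversely, any path from $w$ to $m'$ in $\mathcal{G}_m-F$ reaches $m'$ through some $t\in\mathcal{N}(M\setminus\{m\})$, whose real monitor neighbor $m''\neq m$ survives in $\mathcal{G}-V'$ and re-anchors the path to an actual monitor. Converting condition (1) into condition (2) is then immediate from Definitions~\ref{def:vertex-cut}--\ref{def:CardinalityOfVertexCut}: condition (1) fails exactly when some $F$ with $|F|\le q$ is a $(w,m')$-separator for some $w\in S$, i.e. when $\Gamma_{\mathcal{G}_m}(S,m')\le q$, whose negation is condition (2).
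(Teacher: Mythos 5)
Your proof is correct, and your primary argument takes a genuinely different route from the paper's. The paper (whose full proof is deferred to its technical report) proceeds, as the surrounding text indicates, by repeating the argument of Lemma~\ref{lem:connectivity of G*} ``using similar arguments'': it first establishes the key observation that, for $V'=\{m\}\cup F$, a connected component of $\mathcal{G}-V'$ containing a node $w\in S\setminus F$ has a monitor if and only if $w$ and $m'$ lie in the same component of $\mathcal{G}_m-F$, and then converts the universal quantification over $F$ into the cut bound $\Gamma_{\mathcal{G}_m}(S,m')\geq q+1$ --- which is exactly your fallback argument, including the same path surgery (the first monitor met along the path is necessarily in $M\setminus\{m\}$ since $m$ is deleted, its predecessor lies in $\mathcal{N}(M\setminus\{m\})$, and conversely the penultimate node of a $w$-to-$m'$ path has a surviving real monitor neighbor $m''\neq m$). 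Your primary route instead derives the lemma as an instance of Lemma~\ref{lem:connectivity of G*} applied to the reduced instance $(\mathcal{G}-\{m\},\, M\setminus\{m\})$, after verifying that its auxiliary graph is precisely $\mathcal{G}_m$ and that deleting $V'=\{m\}\cup F$ from $\mathcal{G}$ is the same as deleting the non-monitor set $F$ from $\mathcal{G}-\{m\}$; this is slicker, since it avoids duplicating the component-to-$m'$ equivalence and exposes the structural reason the two lemmas are parallel ($\mathcal{G}_m$ is just the auxiliary graph of the network with $m$ removed). The one point needing care --- the paper's blanket assumption that $\mathcal{G}$ is connected, which $\mathcal{G}-\{m\}$ may violate --- you identify and dispose of correctly: the proof of Lemma~\ref{lem:connectivity of G*} is component-local (the rerouting through virtual links and the vertex-cut counting never invoke global connectivity; an $S$-node whose component contains no node of $\mathcal{N}(M\setminus\{m\})$ falsifies both sides via the empty cut), and you supply the direct proof as insurance in any case. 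Your bookkeeping on the boundary cases is also right: the non-monitor set of $\mathcal{G}-\{m\}$ is still $N$, so the admissible range $q\leq\sigma-1$ carries over, and a node $w\in S$ adjacent to $m'$ in $\mathcal{G}_m$ is handled by the convention of Definition~\ref{def:vertex-cut}, which yields $|C_{\mathcal{G}_m}(w,m')|=\sigma\geq q+1$, consistent with the fact that such a node can never be cut off from $M\setminus\{m\}$ by deleting $m$ and at most $q$ non-monitors.
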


\begin{proof}
The proof can be found in \cite{MaTONTR_Mar2015}.
\end{proof}

Based on Lemmas~\ref{lem:connectivity of G*} and \ref{lem:connectivity of G_i}, we can rewrite Lemma~\ref{lem:k-identifiability, CSP} as follows.

\begin{theorem}[$k$-identifiability under CSP]\label{thm:k-identifiability, CSP}
Set $S$ is $k$-identifiable under CSP:
\begin{enumerate}
\item[a)] if $\Gamma_{\mathcal{G}^*}(S,m')\geq k+2$, and $\min_{m\in M}\Gamma_{\mathcal{G}_m}(S,m')\geq k+1$ ($k\leq \sigma-2$);
\item[b)] only if $\Gamma_{\mathcal{G}^*}(S,m')\geq k+1$, and $\min_{m\in M}\Gamma_{\mathcal{G}_m}(S,m')\geq k$ ($k\leq \sigma-1$).
\end{enumerate}
\end{theorem}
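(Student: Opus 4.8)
The plan is to derive Theorem~\ref{thm:k-identifiability, CSP} by combining the combinatorial characterization of CSP-identifiability in Lemma~\ref{lem:k-identifiability, CSP} with the two vertex-cut translations in Lemmas~\ref{lem:connectivity of G*} and \ref{lem:connectivity of G_i}. The core observation is that the hypothesis in each part of Lemma~\ref{lem:k-identifiability, CSP} quantifies over all node sets $V'$ of a bounded size that contain \emph{at most} one monitor, and this family splits into two disjoint subfamilies: those $V'$ consisting only of non-monitors, and those $V'$ containing exactly one monitor together with some non-monitors. A universal statement over the whole family holds if and only if it holds separately over each subfamily, and each subfamily is precisely the object addressed by one of the two auxiliary-graph lemmas.

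First I would handle the sufficient direction~(a). Lemma~\ref{lem:k-identifiability, CSP}-(a) asserts that $S$ is $k$-identifiable whenever the connected-component property holds for all $V'$ with $|V'|\leq k+1$ and at most one monitor. For the no-monitor subfamily, $V'$ ranges over sets of up to $k+1$ non-monitors, so Lemma~\ref{lem:connectivity of G*} with $q=k+1$ shows this part is equivalent to $\Gamma_{\mathcal{G}^*}(S,m')\geq k+2$. For the one-monitor subfamily, each $V'$ consists of a single monitor $m$ plus up to $k$ non-monitors; Lemma~\ref{lem:connectivity of G_i} with $q=k$ makes the property for a fixed $m$ equivalent to $\Gamma_{\mathcal{G}_m}(S,m')\geq k+1$, and requiring it for every $m\in M$ gives $\min_{m\in M}\Gamma_{\mathcal{G}_m}(S,m')\geq k+1$. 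Conjoining the two equivalences shows that the hypothesis of Lemma~\ref{lem:k-identifiability, CSP}-(a) is exactly condition~(a) of the theorem, so Lemma~\ref{lem:k-identifiability, CSP}-(a) then yields $k$-identifiability.

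The necessary direction~(b) is identical in structure but starts from Lemma~\ref{lem:k-identifiability, CSP}-(b) with the smaller size bound $|V'|\leq k$. Here the no-monitor subfamily uses $q=k$ in Lemma~\ref{lem:connectivity of G*}, giving $\Gamma_{\mathcal{G}^*}(S,m')\geq k+1$, while the one-monitor subfamily, having budget $k-1$ for non-monitors, uses $q=k-1$ in Lemma~\ref{lem:connectivity of G_i}, giving $\min_{m\in M}\Gamma_{\mathcal{G}_m}(S,m')\geq k$. Since Lemma~\ref{lem:k-identifiability, CSP}-(b) states that its connected-component property is necessary for $k$-identifiability, the equivalent pair of vertex-cut bounds is likewise necessary, which is condition~(b). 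The only remaining bookkeeping is the admissible range of $k$: both connectivity lemmas require $q\leq\sigma-1$, so in~(a) the binding choice $q=k+1$ forces $k\leq\sigma-2$ (the one-monitor case needs only $k\leq\sigma-1$), and in~(b) the binding choice $q=k$ forces $k\leq\sigma-1$, matching the theorem.

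I do not anticipate a genuine obstacle, as the theorem is essentially a translation exercise once the three prior lemmas are available. The one point demanding care is tracking the index shifts --- the ``$+2$'' versus ``$+1$'' in the $\Gamma_{\mathcal{G}^*}$ bounds and the corresponding shifts in the $\Gamma_{\mathcal{G}_m}$ bounds --- which arise because the one monitor in the second subfamily consumes one unit of the size budget $|V'|$ while the first subfamily spends its entire budget on non-monitors. Keeping the two subfamilies and their distinct $q$-values explicitly separated is what guards against an off-by-one error.
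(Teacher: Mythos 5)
Your proposal is correct and matches the paper's own derivation: the paper obtains Theorem~\ref{thm:k-identifiability, CSP} exactly by rewriting Lemma~\ref{lem:k-identifiability, CSP} via Lemmas~\ref{lem:connectivity of G*} and \ref{lem:connectivity of G_i}, splitting the sets $V'$ into the no-monitor subfamily (with $q=k+1$ in part~(a) and $q=k$ in part~(b)) and the one-monitor subfamily (with $q=k$ and $q=k-1$, respectively). Your index bookkeeping, including the binding constraints $k\leq\sigma-2$ in~(a) and $k\leq\sigma-1$ in~(b) inherited from the requirement $q\leq\sigma-1$, is exactly right.
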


Theorem~\ref{thm:k-identifiability, CSP} does not include the cases of $k=\sigma$ and $k=\sigma-1$, which are addressed in Propositions~\ref{prop:sigma-identifiability, CSP} and \ref{prop:(sigma-1)-identifiability, CSP}.\looseness=-1

\begin{proposition}\label{prop:sigma-identifiability, CSP}
Set $S$ is $\sigma$-identifiable under CSP if and only if each node in $S$ has at least two monitors as neighbors.\looseness=-1
\end{proposition}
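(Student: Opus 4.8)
The plan is to specialize the abstract sufficient/necessary conditions of Lemma~\ref{lem:abstract condition} to $k=\sigma$ and exploit the structure of CSP paths. Under CSP every measurement path is a simple path whose endpoints are two distinct monitors; since every node of $S$ is a non-monitor, any such path that traverses a node $v\in S$ must have $v$ as an \emph{internal} vertex, so $v$ has two distinct path-neighbors on it. The whole argument hinges on a single extremal failure set, namely $F=N\setminus\{v\}$, which has size exactly $\sigma-1$. Because $\sigma-1=k-1$ when $k=\sigma$, this $F$ is admissible in the necessary condition Lemma~\ref{lem:abstract condition}-(b), and this is what pins down the local structure at $v$.

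For necessity I would argue the contrapositive. Suppose some $v\in S$ has at most one monitor among its neighbors, and take $F=N\setminus\{v\}$, so that $|F|=\sigma-1=k-1$ and $v\in S\setminus F$. Any CSP path through $v$ enters and leaves $v$ via two distinct neighbors $a,b$; since $v$ has at most one monitor neighbor, at least one of $a,b$ is a non-monitor and hence lies in $F$. Thus no simple path can traverse $v$ while avoiding $F$, so Lemma~\ref{lem:abstract condition}-(b) fails for $k=\sigma$ and $S$ is not $\sigma$-identifiable (equivalently, by Lemma~\ref{lem:nodes_in_OmegaS}-(a), $v$ itself is not $\sigma$-identifiable, which already prevents $S$ from being $\sigma$-identifiable).

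For sufficiency I would verify Lemma~\ref{lem:abstract condition}-(a) directly. Assume every $v\in S$ has two distinct monitor neighbors $m_1,m_2$. Fix any failure set $F$ with $|F|\leq\sigma$ and any $v\in S\setminus F$. The two-hop path $m_1\,v\,m_2$ is a simple path between distinct monitors, it traverses $v$, and it contains no non-monitor other than $v$; since $F\subseteq N$ and $v\notin F$, this path avoids $F$ entirely. Hence the sufficient condition of Lemma~\ref{lem:abstract condition} holds for $k=\sigma$, so $S$ is $\sigma$-identifiable.

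The only delicate point, and the step I expect to require the most care, is bookkeeping the failure-set sizes against the off-by-one thresholds in Lemma~\ref{lem:abstract condition}: the necessary condition allows $|F|\le k-1$, and it is precisely the equality $|N\setminus\{v\}|=\sigma-1=k-1$ that makes the extremal $F$ usable, while the sufficient condition tolerates $|F|\le k=\sigma$, which the fixed two-hop path handles uniformly for every admissible $F$. One should also confirm the CSP legality of the constructed path (distinct monitor endpoints, no repeated vertices) and, in the necessity argument, that $v$ being a non-monitor forces it to be internal so that it genuinely has two path-neighbors; both are immediate from the definition of CSP. This matches the corresponding CAP statement (each node of $S$ directly connected to \emph{one} monitor) while strengthening it to a \emph{two}-monitor requirement, as imposed by cycle-freeness.
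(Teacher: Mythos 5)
Your proof is correct and is the natural argument in the paper's framework: the paper itself defers this proof to its technical report, but the expected route is precisely yours --- specializing Lemma~\ref{lem:abstract condition} by taking the extremal failure set $F=N\setminus\{v\}$ (of size exactly $\sigma-1=k-1$, so admissible in condition~(b)) for necessity, and exhibiting the two-hop simple path $m_1vm_2$, whose only non-monitor is $v$, to verify condition~(a) for sufficiency. Your off-by-one bookkeeping and the observation that a non-monitor is always internal to a CSP path (hence has two distinct path-neighbors, at most one of which can be a monitor) are exactly the points on which the equivalence turns, so nothing is missing.
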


\begin{proof}
The proof can be found in \cite{MaTONTR_Mar2015}.
\end{proof}

\begin{proposition}\label{prop:(sigma-1)-identifiability, CSP}
Set $S$ is $(\sigma-1)$-identifiable under CSP if and only if (i) all nodes in $S$ have at least two monitors as neighbors, or (ii) all nodes in $N\setminus \{v\}$ ($v\in S$) have at least two monitors as neighbors and $v$ has all nodes in $N\setminus \{v\}$ and one monitor as neighbors.
\end{proposition}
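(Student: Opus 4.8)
The plan is to reduce the set-level statement to a per-node statement and then settle the boundary index $k=\sigma-1$ by hand, since Theorem~\ref{thm:k-identifiability, CSP} deliberately excludes this value (its sufficient/necessary gap of one straddles exactly $k=\sigma-1$). By Proposition~\ref{prop:Construct_Omega_S} (equivalently Lemma~\ref{lem:nodes_in_OmegaS}), $S$ is $(\sigma-1)$-identifiable iff every $v\in S$ is. A node $w$ with at least two monitor neighbors is $\sigma$-identifiable by Proposition~\ref{prop:sigma-identifiability, CSP}, hence $(\sigma-1)$-identifiable; so case (i) is immediately sufficient, and it also handles every node of $N\setminus\{v\}$ arising in case (ii). The whole problem therefore collapses to characterizing when a single node $v$ with \emph{at most one} monitor neighbor can still be $(\sigma-1)$-identifiable, and I would show this is precisely the local structure in (ii).

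For the per-node analysis I would argue directly from Definition~\ref{def:identifiability} rather than from the cut conditions. The key reformulation is: $v$ fails to be $(\sigma-1)$-identifiable iff there exist failure sets $A\ni v$ and $B\not\ni v$ with $|A|,|B|\le\sigma-1$ and $P_A=P_B$. Since $v\in A$, any measurement path through $v$ lies in $P_A=P_B$ and hence meets $B$; call such a $B$ (with $v\notin B$) a \emph{blocking set}. Conversely, given a blocking set $B$ with $|B|\le\sigma-2$, the set $A:=B\cup\{v\}$ (of size $\le\sigma-1$) has $P_A=P_B$, so $v$ is not $(\sigma-1)$-identifiable; thus the first task is to rule out small blocking sets. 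Taking the worst case $B=N\setminus\{v,u\}$ (size $\sigma-2$), absence of a blocking set is exactly the requirement of Lemma~\ref{lem:abstract condition}-(b): for every non-monitor $u\neq v$ there is a \emph{simple} path between two distinct monitors through $v$ whose only non-monitors lie in $\{v,u\}$. Because $v$ has at most one monitor neighbor, such a path must enter $v$ from a unique monitor neighbor $m_0$ and leave through $u$; this forces (a) $v$ to have exactly one monitor neighbor $m_0$ (zero neighbors makes two distinct path-neighbors of $v$ impossible), (b) $v$ to be adjacent to every $u\in N\setminus\{v\}$, and (c) each such $u$ to be adjacent to some monitor other than $m_0$. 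Items (a)--(b) already pin down the neighborhood of $v$ stated in (ii).

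The remaining, and most delicate, step is the boundary blocking set $B=N\setminus\{v\}$ of size $\sigma-1$, which is \emph{not} reducible by the $A:=B\cup\{v\}$ trick. Here I would show it is harmless exactly when every other non-monitor has \emph{two} monitor neighbors, upgrading (c) from ``$\ge 1$ monitor neighbor $\neq m_0$'' to ``$\ge 2$ monitor neighbors''. Concretely, if some $u_0\in N\setminus\{v\}$ had exactly one monitor neighbor, then neither $v$ nor $u_0$ admits a two-hop path $m_a{-}v{-}m_b$ or $m_a{-}u_0{-}m_b$, and a short check gives $P_{N\setminus\{u_0\}}=P_{N\setminus\{v\}}$; taking $A=N\setminus\{u_0\}\ni v$ and $B=N\setminus\{v\}$ then witnesses non-identifiability. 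Hence $(\sigma-1)$-identifiability of $v$ forces all of $N\setminus\{v\}$ to have $\ge 2$ monitor neighbors, the last ingredient of (ii). Conversely, when (ii) holds, every blocking set of size $\le\sigma-2$ is defeated by a path $m_0{-}v{-}u{-}m_1$ (choosing $u\notin B$, which has a monitor neighbor $m_1\neq m_0$), and for $B=N\setminus\{v\}$ any candidate $A\ni v$ of size $\le\sigma-1$ is separated from it by a path $m_a{-}u^*{-}m_b$ with $u^*\in(N\setminus\{v\})\setminus A$ (available since $u^*$ has two monitor neighbors); so no witness exists and $v$ is $(\sigma-1)$-identifiable. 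Finally, since (ii) makes $v$ the \emph{unique} non-monitor with fewer than two monitor neighbors, if (i) fails then the offending node of $S$ must be this $v$, which yields uniqueness and assembles the set-level iff. The main obstacle is precisely this boundary case: the cut machinery of Theorem~\ref{thm:k-identifiability, CSP} cannot distinguish one from two ``spare'' monitor neighbors at index $\sigma-1$, so the separation must be produced by hand via the tailored pair $(A,B)=(N\setminus\{u_0\},N\setminus\{v\})$, while carefully respecting the simple-path and distinct-endpoint constraints of CSP.
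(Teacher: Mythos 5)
Your proposal is correct. A caveat on the comparison: the paper does not actually contain a proof of Proposition~\ref{prop:(sigma-1)-identifiability, CSP} --- it defers to the technical report \cite{MaTONTR_Mar2015} --- so the only fair check is against the machinery the paper does provide, and your argument is fully consistent with it while remaining self-contained. You use exactly the right ingredients: the per-node reduction (Lemma~\ref{lem:nodes_in_OmegaS}/Proposition~\ref{prop:Construct_Omega_S}), Proposition~\ref{prop:sigma-identifiability, CSP} to dispose of nodes with two monitor neighbors, and Lemma~\ref{lem:abstract condition}-(b) to rule out blocking sets of size at most $\sigma-2$; and you correctly identify why Theorem~\ref{thm:k-identifiability, CSP} is silent at $k=\sigma-1$, namely that the one-unit gap between its cut conditions straddles exactly this index, so the boundary must be settled by an explicit indistinguishable pair. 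The two delicate steps both check out: first, the reduction ``blocking set $B$ with $|B|\leq\sigma-2$ implies non-identifiability'' is valid because $P_v\subseteq P_B$ gives $P_{B\cup\{v\}}=P_B$ with $|B\cup\{v\}|\leq\sigma-1$; second, your tailored pair $(A,B)=(N\setminus\{u_0\},\,N\setminus\{v\})$ is the right witness for the size-$(\sigma-1)$ case, since the absence of $2$-hop paths through $v$ and through $u_0$ yields both inclusions $P_A\subseteq P_B$ (any path hitting $A$ either hits a non-monitor other than $v$, which lies in $B$, or traverses $v$ and hence, lacking a $2$-hop option, some other non-monitor) and $P_B\subseteq P_A$ (symmetrically with $u_0$). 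Two cosmetic tightenings: the ``short check'' deserves those two lines spelled out in a final write-up, and the upgrade step needs only that $u_0$ has \emph{at most} one monitor neighbor (not ``exactly one''; the ``exactly'' is an artifact of having already derived item (c), and stating it as ``at most one'' makes the necessity of the two-neighbor condition logically independent of (c)). Your derivation of the local structure at $v$ --- one monitor neighbor $m_0$, adjacency to all of $N\setminus\{v\}$ via the forced path shape $m_0\mbox{-}v\mbox{-}u\mbox{-}m_1$ under the simple-path constraint --- and the sufficiency direction via the explicit separating paths $m_0\mbox{-}v\mbox{-}u\mbox{-}m_1$ (for $|B|\leq\sigma-2$) and $m_a\mbox{-}u^*\mbox{-}m_b$ (for $B=N\setminus\{v\}$) are both sound, as is the uniqueness observation that at most one node of $S$ can have fewer than two monitor neighbors, which assembles the set-level equivalence. (Like the paper's statement itself, your argument implicitly assumes $\sigma\geq 2$; the case $\sigma=1$ is degenerate and not worth handling.)
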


\begin{proof}
The proof can be found in \cite{MaTONTR_Mar2015}.
\end{proof}

\textbf{Testing algorithm:} Similar to the case of CAP, we can use the algorithm in \cite{Chuzhoy09JACM,Ford56} to compute the vertex-cuts of the auxiliary graphs $\mathcal{G}^*$ and $\mathcal{G}_m$ ($\forall m\in M$), and test the conditions in Theorem~\ref{thm:k-identifiability, CSP} for any given $k$. The overall complexity of the test is $O(\mu \theta \xi |S|)$ (refer to Table~\ref{t notion} for notations).

\subsection{Conditions under UP}\label{subsec:Conditions for k-identifiability, UP}

Under UP, monitors have no control over the probing paths between monitors, and the set of measurement paths $P$ is limited to the paths between monitors determined by the network's native routing protocol. In contrast to the previous cases (CAP, CSP), identifiability under UP can no longer be characterized in terms of topological properties. We can, nevertheless, establish explicit conditions based on the abstract conditions in Section~\ref{sec:Abstract Identifiability Conditions}. The idea is to examine how many non-monitors need to be removed to disconnect all measurement paths traversing a given non-monitor $v$. If the number is sufficiently large (greater than $k$), then we can still infer the state of $v$ from some measurement path when a set of other non-monitors fail; if the number is too small (smaller than or equal to $k-1$), then we are not able to determine the state of $v$ as the failures of all paths traversing $v$ can already be explained by the failures of other non-monitors. This intuition leads to the following results.

In the sequel, $P_v\subseteq P$ denotes the set of measurement paths traversing a non-monitor $v$, and $\mathcal{C}_v:= \{P_w:\: w\in N,\: w\neq v\}$ denotes the collection of path sets traversing non-monitors in $N\setminus \{v\}$. We use $\MSC(v)$ to denote the size of the \emph{minimum set cover} of $P_v$ by $\mathcal{C}_v$, i.e., $\MSC(v):= |V'|$ for the minimum set $V'\subseteq N\setminus \{v\}$ such that $P_v\subseteq \bigcup_{w\in V'}P_w$. Note that covering is only feasible if $v$ is not on any $2$-hop measurement path (i.e., monitor-$v$-monitor), in which case we know $P_v\subseteq \bigcup_{w\in N, w\neq v}P_w$ and thus $\MSC(v)\leq \sigma-1$. If $v$ is on a $2$-hop path, then we define $\MSC(v):= \sigma$.

\begin{theorem}[$k$-identifiability under UP]\label{thm:k-identifiability, UP}
Set $S$ is $k$-identifiable under UP with measurement paths $P$: \begin{enumerate}
\item[a)] if $\MSC(v)\geq k+1$ for any node $v$ in $S$ ($k\leq \sigma-1$);
\item[b)] only if $\MSC(v)\geq k$ for any node $v$ in $S$ ($k\leq \sigma$).
\end{enumerate}
\end{theorem}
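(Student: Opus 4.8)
The plan is to derive both directions directly from the abstract sufficient/necessary conditions in Lemma~\ref{lem:abstract condition}, by recognizing that the set-cover structure encoded in $\MSC(v)$ is precisely a restatement of the path-existence condition appearing there. The bridging observation I would establish first is the following equivalence: for a non-monitor $v$ and a failure set $F\subseteq N$ with $v\notin F$, \emph{every} measurement path traversing $v$ also traverses some node of $F$ if and only if $P_v\subseteq \bigcup_{w\in F}P_w$, i.e.\ $F$ is a set cover of $P_v$ drawn from $\mathcal{C}_v$. Consequently, the existence of a path traversing $v$ but avoiding all of $F$ (the quantity Lemma~\ref{lem:abstract condition} cares about) fails exactly when $F$ covers $P_v$, and the smallest size of such a covering $F$ is $\MSC(v)$ by definition.

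For the sufficient direction (a), I would invoke Lemma~\ref{lem:abstract condition}-(a) and argue by contradiction. Suppose $S$ were not $k$-identifiable while $\MSC(v)\geq k+1$ for all $v\in S$. Then the sufficient condition of Lemma~\ref{lem:abstract condition}-(a) must fail, so there is a failure set $F$ with $|F|\leq k$ and a node $v\in S\setminus F$ such that every path through $v$ meets $F$. By the bridging equivalence, $F\subseteq N\setminus \{v\}$ is a cover of $P_v$ from $\mathcal{C}_v$ of size at most $k$, forcing $\MSC(v)\leq k<k+1$ --- a contradiction. For the necessary direction (b), I would argue the contrapositive using Lemma~\ref{lem:abstract condition}-(b). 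If $\MSC(v)\leq k-1$ for some $v\in S$, let $V'$ attain the minimum cover and set $F:=V'$; then $|F|\leq k-1$, $v\in S\setminus F$, and by the bridging equivalence no measurement path traverses $v$ while avoiding $F$. This violates the necessary condition of Lemma~\ref{lem:abstract condition}-(b), so $S$ is not $k$-identifiable; contrapositively, $k$-identifiability forces $\MSC(v)\geq k$ for every $v\in S$.

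The step I expect to require the most care is the degenerate case in which $v$ lies on a $2$-hop path $m$-$v$-$m'$, where a covering of $P_v$ by $\mathcal{C}_v$ is infeasible and $\MSC(v)$ is defined to be $\sigma$. Here I would note that such a $2$-hop path traverses no non-monitor other than $v$, so it always witnesses a path through $v$ avoiding any $F$ with $v\notin F$; thus the path-existence condition of Lemma~\ref{lem:abstract condition} holds automatically for $v$, which is exactly consistent with the convention $\MSC(v)=\sigma$ (note $\sigma\geq k+1$ in (a) since $k\leq \sigma-1$, and $\sigma\geq k$ in (b) since $k\leq \sigma$). Besides this edge case, the remaining care is purely the off-by-one bookkeeping between the two directions --- the $k+1$ threshold and range $k\leq \sigma-1$ in (a) versus the $k$ threshold and range $k\leq \sigma$ in (b) --- which mirrors the $|F|\leq k$ versus $|F|\leq k-1$ gap already present between parts (a) and (b) of Lemma~\ref{lem:abstract condition}.
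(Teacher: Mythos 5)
Your proposal is correct and follows the same route as the paper: the paper's proof (deferred to \cite{MaTONTR_Mar2015}, but sketched in the intuition paragraph preceding the theorem) likewise instantiates Lemma~\ref{lem:abstract condition} via the observation that ``every path through $v$ meets $F$'' is exactly ``$F$ covers $P_v$,'' so the thresholds $\MSC(v)\geq k+1$ and $\MSC(v)\geq k$ inherit the $|F|\leq k$ versus $|F|\leq k-1$ gap between parts (a) and (b). Your handling of the $2$-hop degenerate case ($\MSC(v):=\sigma$) is also the right consistency check and matches the paper's convention.
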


\begin{proof}
The proof can be found in \cite{MaTONTR_Mar2015}.
\end{proof}

The only case not considered by Theorem~\ref{thm:k-identifiability, UP} is the case that $k=\sigma$, for which we develop the following condition.

\begin{proposition}
\label{prop:sigma-identifiability_UP}
Set $S$ is $\sigma$-identifiable under UP if and only if $MSC(v)=\sigma$ for any node $v$ in $S$, i.e., each node in $S$ is on a $2$-hop path.
\end{proposition}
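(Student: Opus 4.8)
The plan is to reduce both directions to the characterization $\MSC(v)=\sigma \Leftrightarrow v$ lies on a $2$-hop (monitor--$v$--monitor) measurement path, which is immediate from the definition of $\MSC$: if $v$ is on a $2$-hop path then $\MSC(v):=\sigma$ by convention, whereas if $v$ is on no $2$-hop path then every measurement path through $v$ contains another non-monitor, so $P_v\subseteq\bigcup_{w\in N\setminus\{v\}}P_w$ and hence $\MSC(v)\le\sigma-1$. Together with the universal bound $\MSC(v)\le\sigma$, the stated hypothesis is therefore equivalent to ``every node of $S$ sits on a $2$-hop path,'' and I would phrase both implications in these terms.

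For necessity, I would simply invoke Theorem~\ref{thm:k-identifiability, UP}-(b) at $k=\sigma$, whose range $k\le\sigma$ covers this case: $\sigma$-identifiability of $S$ forces $\MSC(v)\ge\sigma$ for every $v\in S$, which combined with $\MSC(v)\le\sigma$ yields $\MSC(v)=\sigma$. A direct contrapositive is equally short and self-contained: if some $v\in S$ is on no $2$-hop path, then $P_v\subseteq P_{N\setminus\{v\}}$, so the failure sets $F_1=N$ and $F_2=N\setminus\{v\}$ satisfy $P_{F_1}=P_{F_2}$ yet $F_1\cap S\neq F_2\cap S$, violating $\sigma$-identifiability.

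The substantive direction is sufficiency, which cannot be read off from Theorem~\ref{thm:k-identifiability, UP}-(a): that part asserts identifiability only for $k\le\sigma-1$, and its hypothesis $\MSC(v)\ge k+1$ would require the impossible $\MSC(v)\ge\sigma+1$ at $k=\sigma$. I would therefore argue directly from Definition~\ref{def:identifiability}. Take any $F_1,F_2\subseteq N$ with $F_1\cap S\neq F_2\cap S$ (the size constraint $|F_i|\le\sigma$ being vacuous), and pick $v\in S$ lying in exactly one of them, say $v\in F_1\setminus F_2$. By hypothesis $v$ lies on a $2$-hop path $p=m_1vm_2$ whose only non-monitor is $v$; then $p$ is affected by $F_1$ (as $v\in F_1$) but not by $F_2$ (as $v\notin F_2$ and $p$ carries no other non-monitor), so $P_{F_1}\neq P_{F_2}$ and the two sets are distinguishable. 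Since this holds for every such pair, $S$ is $\sigma$-identifiable.

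The only delicate aspect, and precisely why this case is split off as a separate proposition rather than folded into the theorem, is the boundary behavior at $k=\sigma$: the theorem's sufficient condition becomes unsatisfiable there, so sufficiency must be built from scratch. Fortunately a $2$-hop path furnishes an unconditional single-node probe, so no set-cover reasoning is needed and the witness is transparent; the only real care is to verify that $p\in P_{F_1}\setminus P_{F_2}$ exactly because its unique non-monitor is $v$.
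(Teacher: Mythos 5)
The paper itself defers this proof to the technical report \cite{MaTONTR_Mar2015}, so there is no in-paper argument to compare against; judged on its own, your proof is correct and is precisely the argument the paper's framework points to. Both directions check out: the equivalence $\MSC(v)=\sigma \Leftrightarrow v$ lies on a $2$-hop measurement path is immediate from the definition of $\MSC$; necessity follows either by invoking Theorem~\ref{thm:k-identifiability, UP}-(b) at $k=\sigma$ or via your self-contained witness $F_1=N$, $F_2=N\setminus\{v\}$ (indistinguishable because every path through $v$ then carries another non-monitor); and your sufficiency argument correctly notes that Theorem~\ref{thm:k-identifiability, UP}-(a) is unusable at $k=\sigma$ and instead argues from Definition~\ref{def:identifiability} directly, using that a $2$-hop path $m_1vm_2$ is affected by a failure set $F$ exactly when $v\in F$ (monitors being assumed not to fail), which is the right and fully rigorous witness.
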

\begin{proof}
The proof can be found in \cite{MaTONTR_Mar2015}.
\end{proof}

\textbf{Testing algorithm:}
The conditions in Theorem~\ref{thm:k-identifiability, UP} provide an explicit way to test $k$-identifiability under UP, using tests of the form $\MSC(v)\geq q$. Unfortunately, evaluating such a test, known as the decision problem of the \emph{set covering problem}, is known to be NP-complete. Nevertheless, we can use approximation algorithms to compute bounds on $\MSC(v)$. An algorithm with the best approximation guarantee is the \emph{greedy algorithm}, which iteratively selects the set in $\mathcal{C}_v$ that contains the largest number of uncovered paths in $P_v$ until all the paths in $P_v$ are covered (assuming that $v$ is not on any $2$-hop path).\looseness=-1

Let $\GSC(v)$ denote the number of sets selected by the greedy algorithm. This immediately provides an upper bound: $\MSC(v)\leq \GSC(v)$. Moreover, since the greedy algorithm has an approximation ratio of $\log(|P_v|)+1$ \cite{Chvatal79}, we can also bound $\MSC(v)$ from below: $\MSC(v) \geq \GSC(v)/(\log(|P_v|)+1)$. Applying these bounds to Theorem~\ref{thm:k-identifiability, UP} yields relaxed conditions:\looseness=-1
\begin{itemize}
\item $S$ is $k$-identifiable under UP if  $k < \lceil \min_{v\in S}{\frac{\GSC(v)}{\log(|P_v|)+1}} \rceil$;
\item $S$ is \emph{not} $k$-identifiable under UP if $k > \min_{v\in S}\GSC(v)$.
\end{itemize}
These conditions can be tested by running the greedy algorithm for all nodes in $S$, each taking time $O(|P_v|^2\sigma)=O(|P|^2\sigma)$, and the overall test has a complexity of $O(|S||P|^2\sigma)$ (or $O(\mu^4\sigma |S|)$ if there is a measurement path between each pair of monitors).
%However, we point out that it is unlikely that one can obtain stronger conditions based on Theorem~\ref{thm:k-identifiability, UP} that are polynomial-time verifiable, as the greedy algorithm is known to give the best approximation for $\MSC(v)$.

\subsection{{\clrr Special Case: 1-identifiability}}\label{subsec:test of 1-identifiability}

In practice, the most common failure event consists of the failure of a single node. Thus, an interesting question is whether $S$ is $1$-identifiable under a given monitor placement and a given probing mechanism. In our previous results, Theorems~\ref{thm:k-identifiability, CSP} and \ref{thm:k-identifiability, UP} only provide an answer to the above question if the sufficient condition is satisfied or the necessary condition is violated for $k=1$; however, the answer is unknown if $S$ satisfies the necessary condition but violates the sufficient condition under CSP and UP. In contrast, Theorem~\ref{thm:k-identifiability, CAP} establishes a condition under CAP that is both necessary and sufficient, yet still expressed in a complicated form (i.e., vertex-cuts). We develop explicit methods below for testing $S$ for $1$-identifiability.

\subsubsection{Conditions for $1$-identifiability}\label{subsubsec:Abstract Condition for 1-identifiability}

We start with a generic necessary and sufficient condition that applies to all probing mechanisms. Recall that $P_v$ denotes the set of measurement paths traversing a non-monitor $v$. For $k=1$, Definition~\ref{def:identifiability}-(1) is equivalent to the following:

\begin{claim}\label{claim:1identifiability}
$S$ is $1$-identifiable if and only if:
\begin{enumerate}
\item[(1)] $P_v\neq \emptyset$ for any $v\in S$, and
\item[(2)] $P_v\neq P_w$ for any $v\in S$, $w\in N$, and $v\neq w$.
\end{enumerate}
\end{claim}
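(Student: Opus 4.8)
The plan is to prove Claim~\ref{claim:1identifiability} directly from Definition~\ref{def:identifiability}-(1) specialized to $k=1$, by unpacking what $1$-identifiability means in terms of the at-most-one-node failure sets $F \subseteq N$. First I would observe that for $k=1$, the relevant failure sets $F_1, F_2$ are exactly $\emptyset$ and the singletons $\{u\}$ for $u \in N$. So $S$ is $1$-identifiable iff every pair drawn from this collection that differs inside $S$ is distinguishable, i.e.\ induces distinct sets of failed paths. Recall that $P_F$ denotes the set of measurement paths hit by $F$; for singletons $P_{\{u\}} = P_u$ and $P_\emptyset = \emptyset$. I would enumerate the two qualitatively distinct types of offending pairs: (i) $\emptyset$ versus $\{v\}$ with $v \in S$, and (ii) $\{v\}$ versus $\{w\}$ where at least one of $v,w$ lies in $S$ and $v \neq w$.

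For the ($\Leftarrow$) direction, I would assume conditions (1) and (2) and verify distinguishability case by case. For a pair of type (i), $F_1 \cap S \neq F_2 \cap S$ forces $v \in S$, and distinguishability requires $P_{\{v\}} = P_v \neq P_\emptyset = \emptyset$, which is exactly condition (1). For a pair of type (ii), $\{v\} \cap S \neq \{w\} \cap S$ means at least one of $v,w$ is in $S$; say $v \in S$ (the other case is symmetric, but I must be slightly careful since $w$ need only range over $N$). Distinguishability then requires $P_v \neq P_w$, which is condition (2) applied with this $v \in S$ and $w \in N$. Thus conditions (1)--(2) make every offending pair distinguishable, so $S$ is $1$-identifiable.

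For the ($\Rightarrow$) direction, I would prove the contrapositive for each condition. If (1) fails, there is $v \in S$ with $P_v = \emptyset = P_\emptyset$, so the failure sets $\{v\}$ and $\emptyset$ satisfy $|F_i| \leq 1$ and differ within $S$ yet are indistinguishable, contradicting $1$-identifiability. If (2) fails, there are $v \in S$, $w \in N$, $v \neq w$, with $P_v = P_w$; then the singletons $\{v\}$ and $\{w\}$ have equal path-failure sets and differ within $S$ (since $v \in S$ but $w \neq v$, so $\{v\}\cap S \neq \{w\}\cap S$), again contradicting $1$-identifiability. Hence both conditions are necessary.

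The argument is essentially a careful case analysis, so there is no deep obstacle; the one place to be attentive is the asymmetry in condition (2), where $v$ is quantified over $S$ but $w$ over all of $N$. I would make sure the necessity argument only ever needs a node of $S$ to be the one that distinguishes the two sets (which is guaranteed by the hypothesis $F_1 \cap S \neq F_2 \cap S$), and that the sufficiency argument correctly handles the subcase in a type-(ii) pair where the $S$-member is labeled $w$ rather than $v$ by simply relabeling, so that condition (2) is invoked with its free endpoint in $N$. A minor secondary point is confirming that no pair with $F_1 \cap S = F_2 \cap S$ needs to be distinguished, so pairs such as two singletons both outside $S$ impose no requirement, which is why condition (2) restricts one endpoint to $S$ rather than requiring all columns $P_u$ to be distinct.
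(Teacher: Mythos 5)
Your proof is correct and matches the paper's intended argument: the paper states Claim~\ref{claim:1identifiability} without a separate proof, presenting it as the direct specialization of Definition~\ref{def:identifiability}-(1) to $k=1$, which is precisely the case analysis over $\emptyset$ and singleton failure sets that you carry out. Your attention to the asymmetry in condition (2) (resolved by the symmetry of $P_v \neq P_w$ under relabeling) and to the fact that pairs with $F_1 \cap S = F_2 \cap S$ impose no requirement fills in exactly the details the paper leaves implicit.
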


In Claim~\ref{claim:1identifiability}, the first condition guarantees that any failure in $S$ is detectable (i.e., causing at least one path failure), and the second condition guarantees that the observed path states can uniquely localize the failed node in $S$.
An efficient test of these conditions, however, requires different strategies for different probing mechanisms.

\subsubsection{Test under CAP}

By Theorem~\ref{thm:k-identifiability, CAP}, $S$ is $1$-identifiable under CAP if and only if $\Gamma_{\mathcal{G}^*}(S,m')\geq 1$. This is equivalent to requiring that $\mathcal{G}^*$ be connected, i.e., $\mathcal{G}$ has one monitor.

Testing for $1$-identifiability of $S$ under CAP is therefore reduced to determining if the network has a monitor.

\subsubsection{Test under CSP}

Under CSP, we derive conditions that are equivalent to those in Claim~\ref{claim:1identifiability} but easier to test.

\begin{figure}[tb]
%\vspace{-0.4em}
\centerline{
\includegraphics[width=.28\columnwidth]{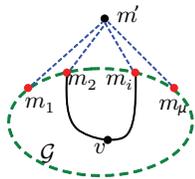}
}
\vspace{-.5em}
\caption{\footnotesize Extended graph $\mathcal{G}'$. } \label{fig:extended_graph}
\vspace{-1.5em}
\end{figure}

Condition (1) in Claim~\ref{claim:1identifiability} requires that every non-monitor in $S$ reside on a monitor-monitor simple path. While an exhaustive search for such a path incurs an exponential cost, we can test for its existence efficiently using the following observation. The idea is to construct an \emph{extended graph} $\mathcal{G}' := \mathcal{G} + \{m'\} + \mathcal{L}(\{m'\},\: M)$, i.e., by adding a virtual monitor $m'$ and connecting it to all the monitors; see an illustration in Fig.~\ref{fig:extended_graph}. We claim that a non-monitor $v$ is on a monitor-monitor simple path if and only if the size of the ($m',v$)-vertex-cut in $\mathcal{G}'$ is at least two, i.e., $\Gamma_{\mathcal{G}'}(v,m')\geq 2$, which implies the existence (see Definition~\ref{def:vertex-cut}) of two vertex-independent simple paths between $v$ and $m'$, illustrated as paths $vm_2m'$ and $vm_im'$ in Fig.~\ref{fig:extended_graph}. Truncating these two paths at $m_2$ and $m_i$ yields two path segments $vm_2$ and $vm_i$, whose concatenation gives a monitor-to-monitor simple path traversing $v$, i.e., $m_2vm_i$ in Fig.~\ref{fig:extended_graph}. On the other hand, if $\exists$ a monitor-to-monitor simple path traversing $v$, then it can be split into two simple paths connecting $v$ to two distinct monitors, which implies $\Gamma_{\mathcal{G}'}(v,m')\geq 2$ as each of these two distinct monitors connects to $m'$ by a virtual link.

Condition (2) in Claim~\ref{claim:1identifiability} is violated if and only if there exist two non-monitors $v\neq w$ (at least one of them in $S$) such that all monitor-to-monitor simple paths traversing $v$ must traverse $w$ (i.e., $P_v\subseteq P_w$) and vice versa. Since $P_v\subseteq P_w$ means that there is no monitor-to-monitor simple path traversing $v$ in $\mathcal{G}-\{w\}$, by the above argument, we see that $P_v\subseteq P_w$ if and only if the size of the ($m',v$)-vertex-cut in a new graph $\mathcal{G}'_w := \mathcal{G}-\{w\}+\{m'\}+\mathcal{L}(\{m'\},\: M)$ is smaller than two. Therefore, condition (2) in Claim~\ref{claim:1identifiability} is satisfied if and only if for every two distinct non-monitors $v$ ($v\in S$) and $w$, either the ($m',v$)-vertex-cut in $\mathcal{G}'_w$ or the ($m',w$)-vertex-cut in $\mathcal{G}'_v$ contains two or more nodes.

In summary, the necessary and sufficient condition for $1$-identifiability under CSP is:
\begin{enumerate}
\item[i)] $\Gamma_{\mathcal{G}'}(S,m')\geq 2$, and
\item[ii)] $\Gamma_{\mathcal{G}_w'}(v,m')\geq 2$ or $\Gamma_{\mathcal{G}_v'}(w,m')\geq 2$ for all $v\in S$, $w\in N$, and $v\neq w$.
\end{enumerate}
Since $\Gamma_{\mathcal{G}}(v,w) \geq 2$ can be tested in $O(|V|+|L|)$ time\footnote{We can compute the biconnected component decomposition \cite{Tarjan72} and test if $v$ and $w$ belong to the same biconnected component.}, the overall test takes $O(\sigma|S| (|V|+|L|))=O(\sigma(\mu+\sigma)^2|S|)$ time.

\subsubsection{Test under UP}

Under UP, the total number of measurement paths $|P|$ is reduced to $O(\mu^2)$ (from exponentially many as in the case of CAP/CSP) as the measurable routes are predetermined. This reduction makes it feasible to directly test conditions (1--2) in Claim~\ref{claim:1identifiability} by testing condition (1) for each node in $S$ and condition (2) for each pair of non-monitors (one of which is in $S$). Then the overall complexity of is $O(\sigma\mu^2|S|)$, dominated by testing of condition (2) in Claim~\ref{claim:1identifiability}.

\section{Characterization of Maximum Identifiability Index}\label{sec:Characterization of Maximum Identifiability}

By Proposition~\ref{prop:Construct_Omega_S}, the maximum identifiability index of a given set $S$ is the minimum per-node maximum identifiability index $\Omega(v)$ for each node $v \in S$. It thus suffices to characterize the per-node maximum identifiability index for each probing mechanism. Under CAP, we give the exact value of $\Omega(v)$ based on the necessary and sufficient condition in Theorem~\ref{thm:k-identifiability, CAP}; under CSP and UP, we establish tight upper and lower bounds on $\Omega(v)$ based on the conditions in Theorems~\ref{thm:k-identifiability, CSP} and \ref{thm:k-identifiability, UP}.\looseness=-1

\subsection{Maximum Identifiability Index under CAP}

Since Theorem~\ref{thm:k-identifiability, CAP} provides necessary and sufficient conditions, it directly determines the value of $\Omega(v)$, as stated below.\looseness=-1

\begin{theorem}[Maximum Per-node Identifiability under CAP]\label{thm:Omega_CAP}
The maximum identifiability index of a non-monitor $v$ under CAP is $\OmegaCAP(v)=\Gamma_{\mathcal{G}^*}(v,m')$.
\end{theorem}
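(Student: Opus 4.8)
The plan is to derive $\OmegaCAP(v)=\Gamma_{\mathcal{G}^*}(v,m')$ directly from the necessary-and-sufficient characterization of $k$-identifiability in Theorem~\ref{thm:k-identifiability, CAP}. Recall that $\Omega(v)$ is defined as the largest $k$ such that $\{v\}$ is $k$-identifiable, and that by Theorem~\ref{thm:k-identifiability, CAP} (applied to the singleton set $S=\{v\}$), node $v$ is $k$-identifiable under CAP if and only if $\Gamma_{\mathcal{G}^*}(\{v\},m')\geq k$. Since $\Gamma_{\mathcal{G}^*}(\{v\},m')=\min_{w\in\{v\}}|C_{\mathcal{G}^*}(w,m')|=|C_{\mathcal{G}^*}(v,m')|$, I will simply write this quantity as $\Gamma_{\mathcal{G}^*}(v,m')$.

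First I would invoke Theorem~\ref{thm:k-identifiability, CAP} with $S=\{v\}$ to obtain the equivalence: for every $k\leq\sigma$, $v$ is $k$-identifiable $\iff \Gamma_{\mathcal{G}^*}(v,m')\geq k$. Next, I would combine this with the definition of the per-node maximum identifiability index, $\OmegaCAP(v)=\max\{k:\ v\text{ is }k\text{-identifiable}\}$. The equivalence shows that the set of $k$ for which $v$ is $k$-identifiable is exactly $\{k:\ k\leq\Gamma_{\mathcal{G}^*}(v,m')\}$, whose maximum is $\Gamma_{\mathcal{G}^*}(v,m')$ itself. Hence $\OmegaCAP(v)=\Gamma_{\mathcal{G}^*}(v,m')$.

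The one point requiring care is the boundary case $k=\sigma$, since Theorem~\ref{thm:k-identifiability, CAP} is stated for $k\leq\sigma$ and identifiability for values of $k$ larger than $\sigma$ is vacuous (no failure set can exceed $\sigma$ non-monitors). I would note that $\Gamma_{\mathcal{G}^*}(v,m')\leq\sigma$ always holds, because deleting all $\sigma-1$ non-monitors other than $v$ (or fewer) suffices to separate $v$ from $m'$ in $\mathcal{G}^*$; more concretely, the vertex-cut of $v$ from $m'$ cannot exceed the number of available intermediate non-monitor nodes plus the direct-neighbor boundary, so the computed value never exceeds $\sigma$ and the equivalence in Theorem~\ref{thm:k-identifiability, CAP} covers every relevant $k$. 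This confirms that taking the maximum over the admissible range $k\leq\sigma$ recovers exactly $\Gamma_{\mathcal{G}^*}(v,m')$ without falling outside the stated regime.

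I do not anticipate a genuine obstacle here: the theorem is essentially a restatement of Theorem~\ref{thm:k-identifiability, CAP} in the language of the maximum index, specialized to singletons. The only subtlety is verifying that the threshold characterization is monotone in $k$ (if $v$ is $k$-identifiable it is $(k-1)$-identifiable), which follows immediately because $\Gamma_{\mathcal{G}^*}(v,m')\geq k$ implies $\Gamma_{\mathcal{G}^*}(v,m')\geq k-1$; this guarantees that the $k$-identifiable values form a contiguous initial segment and that the maximum is well-defined and equal to the cut value.
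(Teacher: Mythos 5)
Your proposal is correct and matches the paper's own reasoning: the paper derives Theorem~\ref{thm:Omega_CAP} directly from the necessary-and-sufficient condition of Theorem~\ref{thm:k-identifiability, CAP} applied to $S=\{v\}$, exactly as you do. Your extra care about monotonicity in $k$ and the boundary $k\leq\sigma$ (noting $\Gamma_{\mathcal{G}^*}(v,m')\leq\sigma$, with equality precisely when $v$ neighbors a monitor, per Definition~\ref{def:vertex-cut}) simply makes explicit what the paper leaves implicit.
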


%\begin{proof}
%There are two cases for $\min_{w\in S}\Gamma_{\mathcal{G}^*}(w)$: (i) $\min_{w\in S}\Gamma_{\mathcal{G}^*}(w)\leq \sigma-1$; (ii) $\min_{w\in S}\Gamma_{\mathcal{G}^*}(w)= \sigma$. In case (i), $S$ satisfies the condition in Corollary~\ref{coro:k-identifiability, CAP} for $k=\min_{w\in S}\Gamma_{\mathcal{G}^*}(w)$; however, $S$ violates the condition in Corollary~\ref{coro:k-identifiability, CAP} for $k=\min_{w\in S}\Gamma_{\mathcal{G}^*}(w)+1$ (which requires $\min_{w\in S}\Gamma_{\mathcal{G}^*}(w)\leq \sigma-1$ when applying Corollary~\ref{coro:k-identifiability, CAP}). Therefore, $\OmegaCAP(S)=\min_{w\in S}\Gamma_{\mathcal{G}^*}(w)$. In case (ii), we know $\OmegaCAP(S)=\min_{w\in S}\Gamma_{\mathcal{G}^*}(w)=\sigma$, as in this case each node in $S$ has at least one monitor neighbor.
%\end{proof}

\textbf{Evaluation algorithm:}
As shown in Section~\ref{subsec:Conditions for k-identifiability, CAP}, $\Gamma_{\mathcal{G}^*}(v,m')$ can be computed in $O(\theta \xi)$ time ($\theta$: the number of monitor neighbors in $\mathcal{G}$, $\xi$: the number of links in $\mathcal{G}$; see Table~\ref{t notion}). Therefore, $\OmegaCAP(S)$ is computable in $O(\theta\xi|S|)$ time.

\subsection{Maximum Identifiability Index under CSP}

Observing that both the sufficient and the necessary conditions in Theorem~\ref{thm:k-identifiability, CSP} are imposed on the same property, i.e., vertex-cuts of the auxiliary graph $\mathcal{G}^*$ and $\mathcal{G}_m$. Let $\delta^*:=\Gamma_{\mathcal{G}^*}(v,m')$, $\delta_{\min}:=\min_{m\in M} \Gamma_{\mathcal{G}_m} (v,m')$, and $\pi_v:=\min(\delta_{\min},\delta^*-1)$. We obtain a tight characterization of the maximum identifiability index under CSP as follows.

\begin{theorem}[Maximum Per-node Identifiability under CSP]\label{thm:Omega_CSP}
If $\pi_v \leq \sigma-2$, the maximum identifiability index of a non-monitor $v$ under CSP is bounded by $\pi_v - 1 \leq \OmegaCSP(v) \leq \pi_v$.
\end{theorem}

\begin{proof}
The proof can be found in \cite{MaTONTR_Mar2015}.
\end{proof}

\emph{Remark:} Because the set of links in $\mathcal{G}_m$ is a subset of those in $\mathcal{G}^*$ while the nodes are the same, we always have $\deltaMin \leq \delta^*$. Therefore, the above bounds simplify to: \begin{itemize}
\item $\deltaMin-2 \leq \OmegaCSP(v) \leq \deltaMin-1$ if $\deltaMin = \delta^*$;
\item $\deltaMin-1 \leq \OmegaCSP(v) \leq \deltaMin$ if $\deltaMin < \delta^*$.
\end{itemize}
In particular, if $\delta^* = 1$, then it implies that $\exists$ a node $w\in N$ in $\mathcal{G}^*$, where all simple paths starting at $v$ and terminating at $m'$ must traverse $w$, i.e., $\nexists$ simple monitor-to-monitor paths traversing $v$ ($P_v=\emptyset$); therefore $\OmegaCSP(v)=0$ (even single-node failures in $S$ cannot always be localized if $v\in S$).

The only cases when  $\pi_v \leq \sigma-2$ is violated are: (i) $\deltaMin = \delta^* = \sigma$, or (ii) $\deltaMin = \sigma-1$ and $\delta^* = \sigma$. In case (i), non-monitor $v$ still has a monitor as a neighbor after removing $m$; by {\clr Proposition}~\ref{prop:sigma-identifiability, CSP}, this implies that $\OmegaCSP(v) = \sigma$. In case (ii), Theorem~\ref{thm:k-identifiability, CSP}~(a) can still be applied to show that $\OmegaCSP(S) \geq \sigma-2$, and one can verify that the condition in {\clr Proposition}~\ref{prop:sigma-identifiability, CSP} is violated, which implies that $\OmegaCSP(v) \leq \sigma-1$. In fact, we can leverage {\clr Proposition}~\ref{prop:(sigma-1)-identifiability, CSP} to uniquely determine $\OmegaCSP(S)$ in this case. If the conditions in {\clr Proposition}~\ref{prop:(sigma-1)-identifiability, CSP} are satisfied, then $\OmegaCSP(v) = \sigma-1$; otherwise, $\OmegaCSP(v) = \sigma-2$.
%***Ting: case (ii) is satisfied if (1) $\mathcal{G}-M$ is a clique, (2) each non-monitor has a monitor as neighbor, and (3) non-monitors with only one monitor as neighbor are neighboring to distinct monitors (i.e., if $v$ is only neighboring to $m_i$ in $M$, and $w$ is only neighboring to $m_j$ in $M$, then $m_i\neq m_j$). This is a superset of the condition in ``Corollary 3''. Why and what is  $\OmegaCSP(\mathcal{G})$ in this case?***

%By Theorem~\ref{thm:Omega_CSP}, the upper and lower bounds for $\Omega(v)$ differ by at most $1$. Therefore, the corresponding upper and lower bounds for $\Omega(S)$ determined by Proposition~\ref{prop:Construct_Omega_S} ($\Omega(S)=\min_{v\in S}\Omega(v)$) are also tight (differing by at most $1$).

\textbf{Evaluation algorithm:}
Evaluating $\OmegaCSP(S)$ by Proposition~\ref{prop:Construct_Omega_S} involves computing $\Omega(v)$ for all $v\in S$, each requiring the computation of the vertex-cuts of the auxiliary graphs $\mathcal{G}^*$ and $\mathcal{G}_m$ ($\forall m\in M$) as that in Section~\ref{subsec:Conditions for k-identifiability, CAP}, which altogether takes $O(\mu \theta \xi |S|)$ time.

\subsection{Maximum Identifiability Index under UP}

As in the case of CSP, we can leverage the sufficient and the necessary conditions in Theorem~\ref{thm:k-identifiability, UP} to bound the maximum identifiability index under UP from both sides. The conditions in Theorem~\ref{thm:k-identifiability, UP} imply the following bounds on the maximum identifiability index under UP.

\begin{theorem}[Maximum Per-node Identifiability under UP]\label{thm:Omega_UP}
The maximum identifiability index of a non-monitor $v$ under UP with measurement paths $P$ is bounded by $\MSC(v) - 1 \leq \OmegaUP(v) \leq \MSC(v)$.
\end{theorem}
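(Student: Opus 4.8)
The plan is to read both inequalities directly off Theorem~\ref{thm:k-identifiability, UP} by specializing it to the singleton set $S=\{v\}$. By Definition~\ref{def:identifiability}, $\OmegaUP(v)$ is precisely the largest $k$ for which $\{v\}$ is $k$-identifiable, and the two parts of Theorem~\ref{thm:k-identifiability, UP} impose thresholds on $\MSC(v)$ that differ by exactly one. Hence the gap between the sufficient and the necessary condition should pin $\OmegaUP(v)$ down to the claimed two-integer window, with essentially no additional argument.

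For the \emph{lower bound}, I would invoke the sufficient part of Theorem~\ref{thm:k-identifiability, UP} with $S=\{v\}$: the set $\{v\}$ is $k$-identifiable whenever $\MSC(v)\geq k+1$ and $k\leq \sigma-1$. Taking $k:=\MSC(v)-1$ makes the first inequality hold with equality, and the side condition $k\leq \sigma-1$ reduces to $\MSC(v)\leq \sigma$, which always holds by the definition of $\MSC(v)$. Thus $\{v\}$ is $(\MSC(v)-1)$-identifiable, giving $\OmegaUP(v)\geq \MSC(v)-1$. For the \emph{upper bound}, I would apply the necessary part at $k=\OmegaUP(v)$: since $\{v\}$ is $\OmegaUP(v)$-identifiable, the necessary condition forces $\MSC(v)\geq \OmegaUP(v)$, i.e.\ $\OmegaUP(v)\leq \MSC(v)$. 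This step uses only that $\OmegaUP(v)\leq \sigma$, which is immediate because any failure set contains at most $\sigma$ non-monitors, so $k$-identifiability for $k>\sigma$ coincides with $\sigma$-identifiability and the index is capped at $\sigma$.

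The only place requiring care is the interplay of the range restrictions, namely $k\leq \sigma-1$ in the sufficient condition versus $k\leq \sigma$ in the necessary condition, at the extreme value $\MSC(v)=\sigma$. There the sufficient condition certifies only $(\sigma-1)$-identifiability and therefore does not by itself decide whether $\OmegaUP(v)$ equals $\sigma-1$ or $\sigma$. I expect this to be the main (and only) subtlety, but it is harmless for the present statement: the theorem asserts only the window $[\MSC(v)-1,\,\MSC(v)]=[\sigma-1,\,\sigma]$, and both candidate values lie inside it. I would remark explicitly that the exact value in this boundary case is separately settled by Proposition~\ref{prop:sigma-identifiability_UP}, so the stated bounds remain valid (and in fact tight at the top of the range). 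With these observations, combining the two inequalities yields $\MSC(v)-1\leq \OmegaUP(v)\leq \MSC(v)$, completing the proof.
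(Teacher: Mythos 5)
Your proposal is correct and follows essentially the same route the paper intends: the text preceding Theorem~\ref{thm:Omega_UP} states that the bounds are obtained by leveraging the sufficient and necessary conditions of Theorem~\ref{thm:k-identifiability, UP}, which is exactly your specialization to $S=\{v\}$ with $k=\MSC(v)-1$ for the lower bound and $k=\OmegaUP(v)$ for the upper bound. Your explicit check of the range restrictions ($k\leq\sigma-1$ versus $k\leq\sigma$) and the boundary case $\MSC(v)=\sigma$ via Proposition~\ref{prop:sigma-identifiability_UP} is sound and matches the paper's treatment of that corner case.
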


\begin{proof}
The proof can be found in \cite{MaTONTR_Mar2015}.
\end{proof}

%Similar to the case under CSP, the upper and lower bounds of $\OmegaUP(S)$ determined by Proposition~\ref{prop:Construct_Omega_S} and Theorem~\ref{thm:Omega_UP} are tight, which differ by at most $1$.

\textbf{Evaluation algorithm:}
The original bounds in Theorem~\ref{thm:Omega_UP} are hard to evaluate due to the NP-hardness of computing $\MSC(\cdot)$. As in Section~\ref{subsec:Conditions for k-identifiability, UP}, we resort to the greedy algorithm, which implies the following relaxed bounds:
\begin{align}\label{eq:relaxed bounds on Omega_UP}
\Big\lceil {\frac{\GSC(v)}{\log(|P_v|)+1}} \Big\rceil - 1 \leq \OmegaUP(v) \leq \GSC(v).
\end{align}
Evaluating these bounds for $\OmegaUP(S)$ involves invoking the greedy algorithm for each node in $S$, with an overall complexity of $O(|S||P|^2\sigma)$ (or $O(\mu^4 \sigma|S|)$ if all monitors can probe each other).

\section{Characterization of the Maximum Identifiable Set}
\label{sec:CharacterizationMaximumIdentifiableSet}

By Proposition~\ref{prop:construct_maximum_identifiable_set}, the maximum $k$-identifiable set $S^*(k)$ is related to the per-node maximum identifiability index $\Omega(v)$ by $S^*(k) = \{v\in N: \Omega(v)\geq k\}$. Therefore, $S^*(k)$ can be easily computed based on values of $\Omega(v)$ ($v\in N$) for any value of $k$. Moreover, given upper/lower bounds on $\Omega(v)$, i.e., $\Omega_l(v)\leq \Omega(v)\leq \Omega_u(v)$, $S^*(k)$ can be bounded by $\Sinner(k) \subseteq S^*(k) \subseteq \Souter(k)$ for $\Sinner(k) := \{v\in N: \Omega_l(v)\geq k\}$ and $\Souter(k) := \{v\in N: \Omega_u(v) \geq k\}$. Based on this observation, we now characterize $S^*(k)$ for each of the three probing mechanisms.

\subsection{Maximum $k$-identifiable Set under CAP}
\label{subsec:MaximumIdentifiableSetUnderCAP}

The expression of the maximum per-node identifiability under CAP in Theorem~\ref{thm:Omega_CAP} leads to the following characterization of the maximum $k$-identifiable set.

\begin{corollary}
\label{coro:Maximum_k_identifiable_set-CAP}
The maximum $k$-identifiable set under CAP, denoted by $\SCAP(k)$, is $\SCAP(k)=\{v\in N: \Gamma_{\mathcal{G}^*}(v,m')\geq k\}$.
\end{corollary}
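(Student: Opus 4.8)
The plan is to derive this directly from the per-node characterization already established, so no new combinatorial work is needed. The key realization is that Proposition~\ref{prop:construct_maximum_identifiable_set} reduces the maximum $k$-identifiable set to a pointwise membership test: $S^*(k)$ is exactly the collection of all $k$-identifiable non-monitors, regardless of probing mechanism. Thus under CAP we may write $\SCAP(k)=\{v\in N:\ v\text{ is }k\text{-identifiable under CAP}\}$, and it remains only to re-express the predicate ``$v$ is $k$-identifiable'' in terms of the vertex-cut quantity $\Gamma_{\mathcal{G}^*}(v,m')$.

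First I would recall that, by the definition of the maximum identifiability index (Definition~\ref{def:identifiability}), a node $v$ is $k$-identifiable if and only if $\OmegaCAP(v)\geq k$, since $\OmegaCAP(v)$ is by construction the largest $k$ for which $v$ is $k$-identifiable and $k$-identifiability is monotone nonincreasing in $k$ for a fixed node. Next I would apply Theorem~\ref{thm:Omega_CAP}, which gives the exact value $\OmegaCAP(v)=\Gamma_{\mathcal{G}^*}(v,m')$ under CAP. Chaining these two equivalences yields that $v$ is $k$-identifiable under CAP precisely when $\Gamma_{\mathcal{G}^*}(v,m')\geq k$.

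Substituting this characterization into the membership description from Proposition~\ref{prop:construct_maximum_identifiable_set} then gives $\SCAP(k)=\{v\in N:\ \Gamma_{\mathcal{G}^*}(v,m')\geq k\}$, which is the claim. Equivalently, one could bypass $\OmegaCAP$ and instantiate Theorem~\ref{thm:k-identifiability, CAP} at the singleton set $S=\{v\}$, noting that by Definition~\ref{def:CardinalityOfVertexCut} we have $\Gamma_{\mathcal{G}^*}(\{v\},m')=\min_{w\in\{v\}}|C_{\mathcal{G}^*}(w,m')|=\Gamma_{\mathcal{G}^*}(v,m')$; either route lands on the same condition.

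I do not anticipate a genuine obstacle here, as the statement is a corollary assembled from results proved earlier. The only point deserving care is bookkeeping between definitions: confirming that the singleton-set cut quantity $\Gamma_{\mathcal{G}^*}(\{v\},m')$ coincides with the per-node cut $\Gamma_{\mathcal{G}^*}(v,m')$, and that ``$v$ is $k$-identifiable'' is literally the same predicate as ``$\OmegaCAP(v)\geq k$''. Once these identifications are made explicit, the corollary follows in one substitution.
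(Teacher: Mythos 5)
Your proof is correct and takes essentially the same route as the paper: the paper obtains the corollary by combining the identity $S^*(k)=\{v\in N:\Omega(v)\geq k\}$ (from Proposition~\ref{prop:construct_maximum_identifiable_set}) with the exact per-node value $\OmegaCAP(v)=\Gamma_{\mathcal{G}^*}(v,m')$ from Theorem~\ref{thm:Omega_CAP}, which is exactly your chain of substitutions. Your care about the identifications $\Gamma_{\mathcal{G}^*}(\{v\},m')=\Gamma_{\mathcal{G}^*}(v,m')$ and ``$v$ is $k$-identifiable $\Leftrightarrow$ $\OmegaCAP(v)\geq k$'' (via monotonicity of $k$-identifiability in $k$) is sound and matches what the paper uses implicitly.
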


Specifically, when $k=\sigma$, $\SCAP(\sigma)$ contains all the non-monitors directly adjacent to monitors.

\textbf{Evaluation algorithm:}
As shown in Section~\ref{subsec:Conditions for k-identifiability, CAP}, $\Gamma_{\mathcal{G}^*}(v,m')$ can be computed in $O(\theta \xi)$ time. Thus, the total time complexity for constructing $\SCAP(k)$ is $O(\theta \xi \sigma)$.

\subsection{Maximum $k$-identifiable Set under CSP}
\label{subsec:MaximumIdentifiableSetUnderCSP}

%Due to the necessity and sufficiency of Corollary~\ref{coro:k-identifiability, CAP}, the exact elements in $\SCAP(k)$ can be easily determined by Theorem~\ref{thm:Maximum_k_identifiable_set-CAP}. However, the conditions in Corollary~\ref{coro:k-identifiability, CSP} are either necessary or sufficient, thus resulting in challenges in finding all $k$-identifiable non-monitors under CSP. Nevertheless, l
Leveraging Theorem~\ref{thm:Omega_CSP}, we can establish outer and inner bounds (i.e., superset and subset) for the maximum $k$-identifiable set under CSP.

\begin{corollary}
\label{coro:Maximum_k_identifiable_set-CSP}
Let $\SCSPo(k):=\{v \in N:\pi_v\geq k\}$, and $\SCSPi(k):=\{v \in N:\pi_v\geq k+1\}$. The maximum $k$-identifiable set under CSP ($k\leq \sigma-1$), denoted by $\SCSP(k)$, is bounded by $\SCSPi(k)\subseteq\SCSP(k)\subseteq\SCSPo(k)$.
\end{corollary}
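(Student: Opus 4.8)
The plan is to derive Corollary~\ref{coro:Maximum_k_identifiable_set-CSP} directly from the per-node characterization in Theorem~\ref{thm:Omega_CSP} via the general bounding recipe described at the start of Section~\ref{sec:CharacterizationMaximumIdentifiableSet}. Recall that, by Proposition~\ref{prop:construct_maximum_identifiable_set}, the maximum $k$-identifiable set is exactly $\SCSP(k)=\{v\in N:\OmegaCSP(v)\geq k\}$, so any upper/lower bound on the per-node index $\OmegaCSP(v)$ immediately yields a superset/subset of $\SCSP(k)$. The key inputs are the two-sided bounds $\pi_v-1\leq\OmegaCSP(v)\leq\pi_v$ from Theorem~\ref{thm:Omega_CSP}, where $\pi_v:=\min(\deltaMin,\delta^*-1)$.

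First I would verify the outer bound $\SCSP(k)\subseteq\SCSPo(k)$. If $v\in\SCSP(k)$, then $\OmegaCSP(v)\geq k$; combined with the upper bound $\OmegaCSP(v)\leq\pi_v$ from Theorem~\ref{thm:Omega_CSP}, this forces $\pi_v\geq k$, i.e.\ $v\in\SCSPo(k)$. Next I would verify the inner bound $\SCSPi(k)\subseteq\SCSP(k)$. If $v\in\SCSPi(k)$, then $\pi_v\geq k+1$; using the lower bound $\OmegaCSP(v)\geq\pi_v-1$ gives $\OmegaCSP(v)\geq k$, hence $v\in\SCSP(k)$. Chaining these inclusions yields $\SCSPi(k)\subseteq\SCSP(k)\subseteq\SCSPo(k)$, which is exactly the claim.

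The only subtlety to watch is the hypothesis $\pi_v\leq\sigma-2$ under which Theorem~\ref{thm:Omega_CSP} is stated, since the corollary restricts to $k\leq\sigma-1$. For the outer-bound argument this is harmless: any $v\in\SCSP(k)$ with $k\leq\sigma-1$ either satisfies $\pi_v\leq\sigma-2$ (so Theorem~\ref{thm:Omega_CSP} applies) or has $\pi_v\geq\sigma-1\geq k$ directly, so $v\in\SCSPo(k)$ in either case. For the inner-bound argument, membership in $\SCSPi(k)$ means $\pi_v\geq k+1$; if $\pi_v\leq\sigma-2$ the lower bound of Theorem~\ref{thm:Omega_CSP} applies directly, while if $\pi_v>\sigma-2$ one invokes the boundary analysis in the Remark following Theorem~\ref{thm:Omega_CSP} (cases $\deltaMin=\delta^*=\sigma$ and $\deltaMin=\sigma-1,\delta^*=\sigma$), which establishes $\OmegaCSP(v)\geq\sigma-2\geq k$ whenever $k\leq\sigma-2$; the edge case $k=\sigma-1$ can be checked against Proposition~\ref{prop:(sigma-1)-identifiability, CSP}.

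The main obstacle, and essentially the only place where genuine care is needed rather than a one-line implication, is this bookkeeping at the boundary $\pi_v\in\{\sigma-1,\sigma\}$ where the clean bound of Theorem~\ref{thm:Omega_CSP} does not literally apply. Everywhere in the interior the proof is a mechanical substitution of the per-node bounds into the definitions of $\SCSPi$ and $\SCSPo$; the entire content of the corollary is that the gap of at most one in the per-node index translates into the one-step gap between the inner and outer set definitions ($\pi_v\geq k+1$ versus $\pi_v\geq k$). I would therefore present the two inclusions as short direct implications and relegate the boundary cases to a brief remark citing the discussion already given after Theorem~\ref{thm:Omega_CSP}.
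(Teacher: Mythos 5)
Your proof is correct and follows essentially the same route as the paper: Proposition~\ref{prop:construct_maximum_identifiable_set} gives $\SCSP(k)=\{v\in N:\OmegaCSP(v)\geq k\}$, and substituting the two-sided per-node bounds $\pi_v-1\leq\OmegaCSP(v)\leq\pi_v$ of Theorem~\ref{thm:Omega_CSP} into the inner/outer-set recipe stated at the opening of Section~\ref{sec:CharacterizationMaximumIdentifiableSet} yields both inclusions. Your boundary bookkeeping for $\pi_v>\sigma-2$ is also sound, though the final deferral to Proposition~\ref{prop:(sigma-1)-identifiability, CSP} is unnecessary: since $\pi_v\leq\delta^*-1\leq\sigma-1$ always, the condition $\pi_v\geq k+1$ is unsatisfiable at $k=\sigma-1$, so $\SCSPi(\sigma-1)=\emptyset$ and the inner inclusion is vacuous there.
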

\begin{proof}
The proof can be found in \cite{MaTONTR_Mar2015}.
\end{proof}

%By Theorem~\ref{thm:Omega_CSP}, we know that $\Omega\big(\SCSPi(k)\big)\geq k$ and $k-1\leq\Omega\big(\SCSPo(k)\big)\leq k$. Theorem~\ref{thm:Maximum_k_identifiable_set-CSP} shows that $\SCSP(k)$ must contain $\SCSPi(k)$, and the uncertainty only lies in $\SCSPo(k)\setminus\SCSPi(k)$. Nevertheless, due to the monotonicity (Proposition~\ref{prop:unique_maximum_identifiable_set}) of $\SCSP(k)$, $\SCSP(k)$ is guaranteed to be a subset of $\SCSPo(k)$.

One case not covered by Corollary~\ref{coro:Maximum_k_identifiable_set-CSP} is $k=\sigma$. In this case, $\SCSP(\sigma)$ contains all non-monitors that have at least two monitors as neighbors according to Proposition~\ref{prop:sigma-identifiability, CSP}. Another non-covered case is $k=\sigma-1$, for which we have the following result.\looseness=-1

\begin{corollary}
\label{coro:Maximum_sigma-1_identifiable_set-CSP}
When $k=\sigma-1$, $\SCSP(k)=\{v\in N:v$ has at least two monitor neighbors$\}\cup \widetilde{S}$. Set $\widetilde{S}$ contains one and only one non-monitor $w$ if all nodes in $N$ but $w$ have at least two monitor neighbors and $w$ has one monitor and all nodes in $N\setminus \{w\}$ as neighbors; otherwise, $\widetilde{S}=\emptyset$.
\end{corollary}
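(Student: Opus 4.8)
The plan is to reduce this set-level characterization to the per-node identifiability question already resolved for singletons, and then to isolate the auxiliary set $\widetilde{S}$ as the (at most one) exceptional node. By Proposition~\ref{prop:construct_maximum_identifiable_set}, the maximum $(\sigma-1)$-identifiable set is precisely the collection of all $(\sigma-1)$-identifiable non-monitors, i.e.\ $\SCSP(\sigma-1)=\{v\in N:\{v\}\text{ is }(\sigma-1)\text{-identifiable}\}$. So I would apply Proposition~\ref{prop:(sigma-1)-identifiability, CSP} with the singleton set $S=\{v\}$. This immediately gives that $v$ is $(\sigma-1)$-identifiable if and only if either (i) $v$ has at least two monitors as neighbors, or (ii) every node in $N\setminus\{v\}$ has at least two monitors as neighbors and $v$ is adjacent to all nodes in $N\setminus\{v\}$ together with exactly one monitor.

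The nodes satisfying condition (i) constitute exactly the first term $\{v\in N:v\text{ has at least two monitor neighbors}\}$ of the claimed expression, so the only remaining task is to identify the set of nodes satisfying condition (ii) with $\widetilde{S}$. By definition, any node meeting (ii) also matches the verbal description of the element $w$ in the corollary statement, so these two descriptions coincide; the substance of the proof lies in establishing that \emph{at most one} node can satisfy (ii), which is what makes $\widetilde{S}$ well-defined as a set of cardinality zero or one.

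I expect this cardinality bound to be the main (though short) obstacle, and I would settle it by contradiction. Suppose two distinct non-monitors $v_1\neq v_2$ both satisfy condition (ii). Since $v_1$ satisfies (ii), every node in $N\setminus\{v_1\}$ has at least two monitor neighbors; in particular $v_2$ does. But $v_2$ satisfying (ii) forces $v_2$ to fall outside case (i), i.e.\ $v_2$ has exactly one monitor neighbor, contradicting the fact that $v_2$ has at least two. Hence $\abs{\widetilde{S}}\leq 1$, and when it is nonempty its unique element $w$ satisfies exactly the stated two conditions (all other non-monitors have $\geq 2$ monitor neighbors, and $w$ is adjacent to one monitor and to all of $N\setminus\{w\}$).

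Combining the two cases yields $\SCSP(\sigma-1)=\{v\in N:v\text{ has at least two monitor neighbors}\}\cup\widetilde{S}$, as claimed. Note that the argument rests entirely on previously established results: the per-node reduction from Proposition~\ref{prop:construct_maximum_identifiable_set} and the singleton characterization from Proposition~\ref{prop:(sigma-1)-identifiability, CSP}; no new vertex-cut computation is needed, since the $k=\sigma-1$ regime is precisely the boundary case excluded from Theorem~\ref{thm:k-identifiability, CSP} and handled separately by these propositions.
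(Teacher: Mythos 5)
Your proposal is correct: Proposition~\ref{prop:construct_maximum_identifiable_set} reduces $\SCSP(\sigma-1)$ to the set of individually $(\sigma-1)$-identifiable non-monitors, Proposition~\ref{prop:(sigma-1)-identifiability, CSP} applied with $S=\{v\}$ yields exactly the two cases in the corollary, and your contradiction argument (each node satisfying case (ii) has exactly one monitor neighbor, while the other candidate's case (ii) would force it to have at least two) correctly shows at most one node can satisfy (ii), making $\widetilde{S}$ well defined. The paper itself defers this proof to \cite{MaTONTR_Mar2015}, but the per-node reduction you use is precisely the route its framework prescribes --- Section~\ref{sec:CharacterizationMaximumIdentifiableSet} opens with exactly this reduction --- so your approach is essentially the same as the paper's.
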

\begin{proof}
The proof can be found in \cite{MaTONTR_Mar2015}.
\end{proof}

Corollary~\ref{coro:Maximum_sigma-1_identifiable_set-CSP} implies that when $\widetilde{S}$ is not empty (i.e., $|\widetilde{S}|=1$), then $\SCSP(\sigma-1)=N$ and $\SCSP(\sigma)=N\setminus \widetilde{S}$ (i.e., $|\SCSP(\sigma-1)|=\sigma$ and $|\SCSP(\sigma)|=\sigma-1$).

\textbf{Evaluation algorithm:}
Corollary~\ref{coro:Maximum_sigma-1_identifiable_set-CSP} is computable in linear time. Similar to Section~\ref{subsec:Conditions for k-identifiability, CSP}, $\pi_v$ in Corollary~\ref{coro:Maximum_k_identifiable_set-CSP} is in $O(\mu \theta \xi)$ complexity. Therefore, the overall complexity is $O(\mu \theta \xi \sigma)$.

\subsection{Maximum $k$-identifiable Set under UP}
\label{subsec:MaximumIdentifiableSetUnderUP}

Analogous to the case of CSP, we leverage Theorem~\ref{thm:Omega_UP} to develop the following outer and inner bounds for the maximum $k$-identifiable set under UP.

\begin{corollary}
\label{coro:Maximum_k_identifiable_set-UP}
Let $\SUPo(k):=\{v \in N:\MSC(v)\geq k\}$ and $\SUPi(k):=\{v \in N:\MSC(v)\geq k+1\}$ with measurement paths $P$. The maximum $k$-identifiable set under UP ($k\leq \sigma-1$), denoted by $\SUP(k)$, is bounded by $\SUPi(k)\subseteq\SUP(k)\subseteq\SUPo(k)$.
\end{corollary}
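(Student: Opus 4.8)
The plan is to derive this corollary directly from the per-node characterization in Theorem~\ref{thm:Omega_UP} together with the generic set-level bounding scheme set up at the start of Section~\ref{sec:CharacterizationMaximumIdentifiableSet}. Recall that, by Proposition~\ref{prop:construct_maximum_identifiable_set}, $\SUP(k)=\{v\in N:\OmegaUP(v)\geq k\}$, and that given any bounds $\Omega_l(v)\leq\OmegaUP(v)\leq\Omega_u(v)$ one automatically obtains $\{v:\Omega_l(v)\geq k\}\subseteq\SUP(k)\subseteq\{v:\Omega_u(v)\geq k\}$. So the whole argument reduces to substituting the bounds $\MSC(v)-1\leq\OmegaUP(v)\leq\MSC(v)$ from Theorem~\ref{thm:Omega_UP} and rewriting the resulting index sets as the claimed $\SUPi(k)$ and $\SUPo(k)$.

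For the outer bound I would argue as follows. Take any $v\in\SUP(k)$, so $\OmegaUP(v)\geq k$. Since Theorem~\ref{thm:Omega_UP} gives $\OmegaUP(v)\leq\MSC(v)$, it follows that $\MSC(v)\geq k$, i.e. $v\in\SUPo(k)$; hence $\SUP(k)\subseteq\SUPo(k)$. For the inner bound I would run the complementary argument using the lower bound. Take any $v\in\SUPi(k)$, so $\MSC(v)\geq k+1$, equivalently $\MSC(v)-1\geq k$. Since $\OmegaUP(v)\geq\MSC(v)-1$ by Theorem~\ref{thm:Omega_UP}, we get $\OmegaUP(v)\geq k$, so $v\in\SUP(k)$, giving $\SUPi(k)\subseteq\SUP(k)$.

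The only subtlety worth flagging is the range restriction: the lower bound $\OmegaUP(v)\geq\MSC(v)-1$ ultimately rests on the sufficient condition of Theorem~\ref{thm:k-identifiability, UP}~(a), which requires $k\leq\sigma-1$; this is exactly the hypothesis of the corollary, and it is why the case $k=\sigma$ is excluded from the statement. Beyond tracking this off-by-one in the $\MSC$ threshold and the valid range of $k$, there is no genuine obstacle here---the substantive work was already done in establishing Theorem~\ref{thm:Omega_UP}, so this corollary is essentially a bookkeeping step that specializes the general inner/outer bounding construction to the UP setting.
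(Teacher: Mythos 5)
Your proof is correct and follows exactly the route the paper intends: Corollary~\ref{coro:Maximum_k_identifiable_set-UP} is obtained by substituting the per-node bounds $\MSC(v)-1\leq\OmegaUP(v)\leq\MSC(v)$ from Theorem~\ref{thm:Omega_UP} into the generic inner/outer construction $\Sinner(k):=\{v\in N:\Omega_l(v)\geq k\}\subseteq S^*(k)\subseteq\Souter(k):=\{v\in N:\Omega_u(v)\geq k\}$ stated at the start of Section~\ref{sec:CharacterizationMaximumIdentifiableSet}, which itself rests on Proposition~\ref{prop:construct_maximum_identifiable_set}. Your handling of the off-by-one in the $\MSC$ threshold and of the range restriction $k\leq\sigma-1$ (traceable to the sufficient condition of Theorem~\ref{thm:k-identifiability, UP}, with $k=\sigma$ deferred to Proposition~\ref{prop:sigma-identifiability_UP}) matches the paper's treatment, so there is nothing further to add.
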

\begin{proof}
The proof can be found in \cite{MaTONTR_Mar2015}.
\end{proof}

A special case left out by Corollary~\ref{coro:Maximum_k_identifiable_set-UP} is $k=\sigma$. In this case, we use Proposition~\ref{prop:sigma-identifiability_UP} to determine $\SUP(\sigma)$, i.e., $\SUP(\sigma)=\{w\in N: w$ is on a $2$-hop path$\}$.

\textbf{Evaluation algorithm:}
Due to the NP-hardness of computing $\MSC(\cdot)$, we again resort to the greedy algorithm, whereby the outer and inner bounds of $\SUP(k)$ can be relaxed by computing $\GSC(\cdot)$. Let $\widehat{\SUPo}(k):=\{v \in N:\GSC(v)\geq k\}$ and $\widehat{\SUPi}(k):=\{v \in N:\GSC(v)/\big(\log(|P_v|)+1\big)\geq k+1\}$. We have $\SUPo(k)\subseteq\widehat{\SUPo}(k)$ and $\widehat{\SUPi}(k)\subseteq\SUPi(k)$ according to Proposition~\ref{prop:construct_maximum_identifiable_set}. The computation of these relaxed bounds involves $O(\sigma|P|^2)$ time complexity w.r.t. each node in $N$. Thus, the overall complexity is $O(\sigma^2|P|^2)$.

\section{Evaluation of Failure Localization Capability}\label{sec:Impact of Probing Mechanisms}

We demonstrate how the proposed measures of maximum identifiability index and maximum identifiable set can be used to evaluate the impact of various parameters, including topology, number of monitors, and probing mechanisms (CAP, CSP, UP), on the capability of failure localization.
In this study, we assume (hop count-based) shortest path routing as the default routing protocol under UP, i.e., measurement paths under UP are the shortest paths between monitors, with ties broken arbitrarily.
\begin{figure}[tb]
\vspace{-.7em}
\begin{minipage}{.5\linewidth}
  \centerline{\includegraphics[width=1.05\columnwidth]{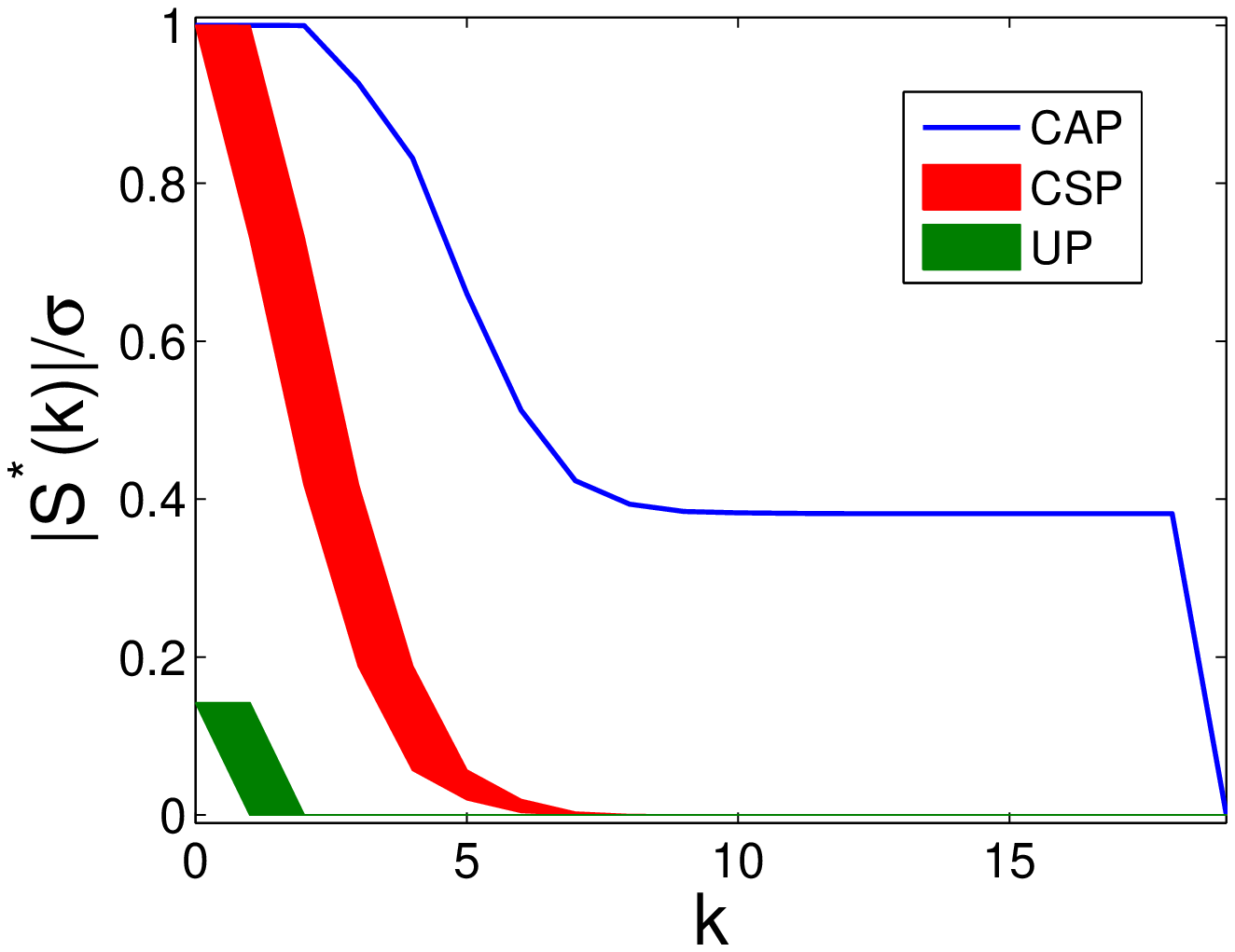}}
  \vspace{-.5em}\centerline{\scriptsize (a) $\mu=2$}
\end{minipage}\hfill
\begin{minipage}{.5\linewidth}
  \centerline{\includegraphics[width=1.05\columnwidth]{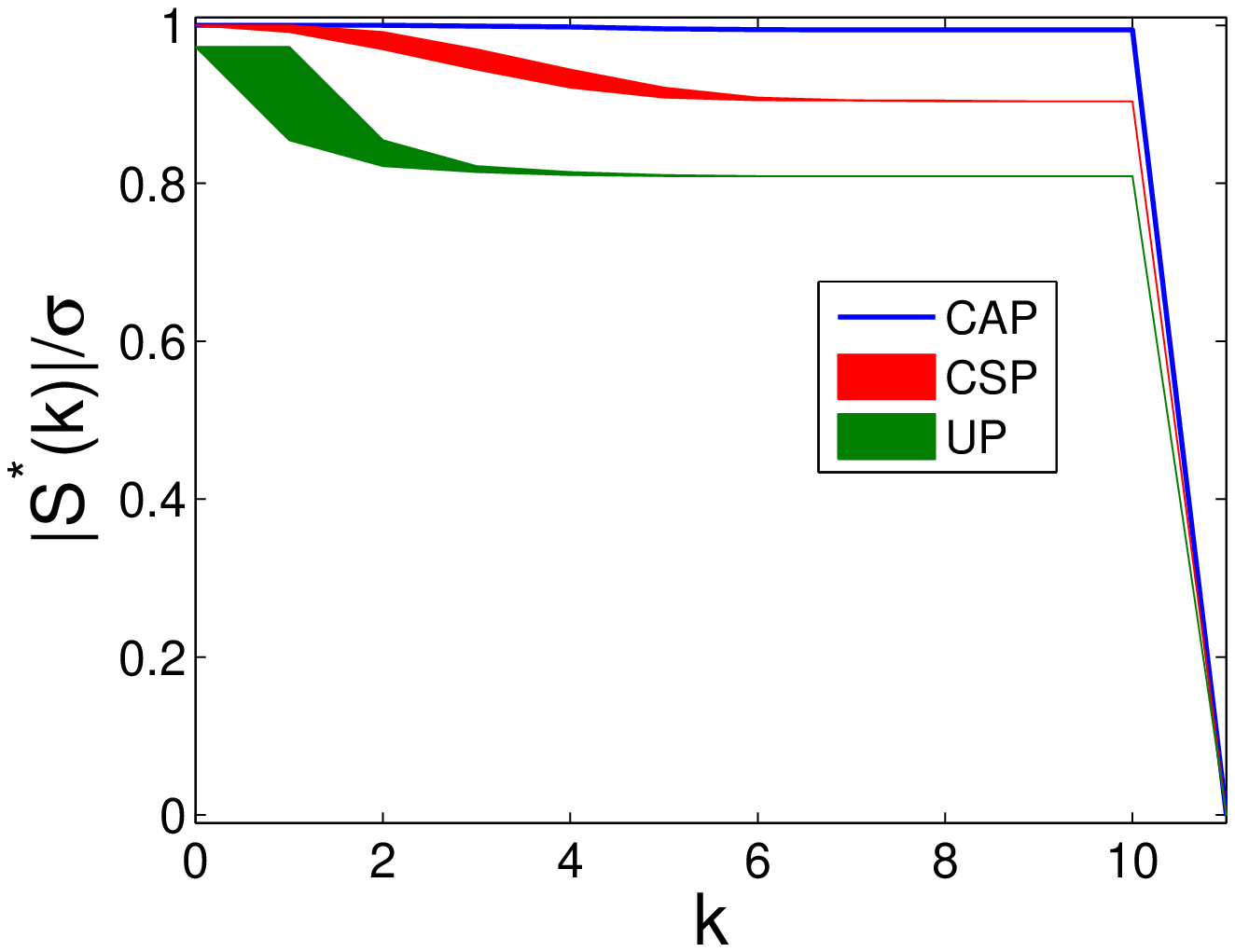}}
  \vspace{-.5em}\centerline{\scriptsize (b) $\mu=10$}
\end{minipage}
\caption{Maximum $k$-identifiable set $S^*(k)$ under CAP, CSP, and UP for ER graphs ($|V|=20$, $\mu=\{2,10\}$, $\mathbb{E}[|L|]=51$, $200$ graph instances, $\sigma$: total number of non-monitors). } \label{fig:S_bounds_ER}
\vspace{-0.5em}
\end{figure}

\begin{figure}[tb]
\vspace{-.0em}
\begin{minipage}{.5\linewidth}
  \centerline{\includegraphics[width=1.05\columnwidth]{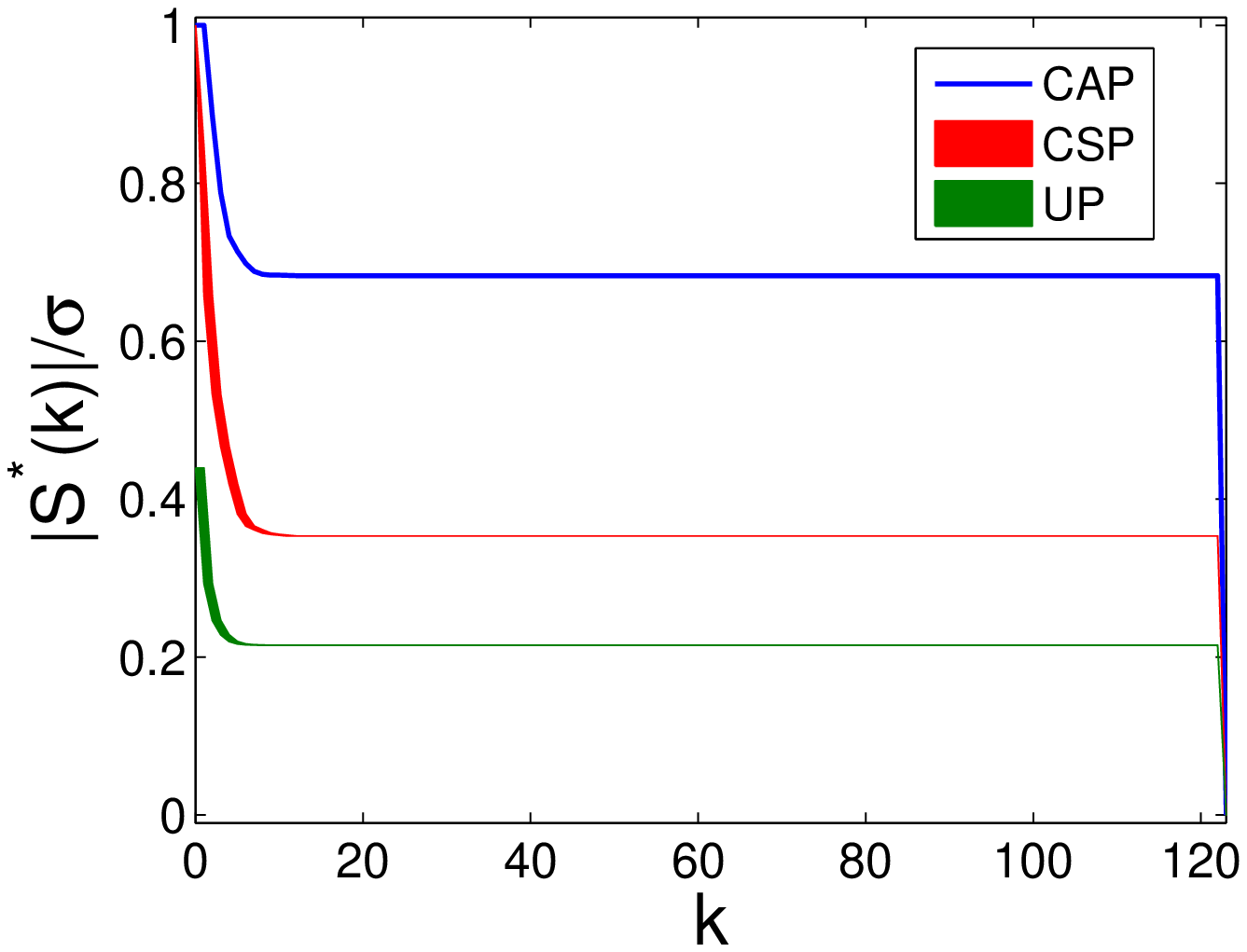}}
  \vspace{-.5em}\centerline{\scriptsize (a) $\mu=50$}
\end{minipage}\hfill
\begin{minipage}{.5\linewidth}
  \centerline{\includegraphics[width=1.05\columnwidth]{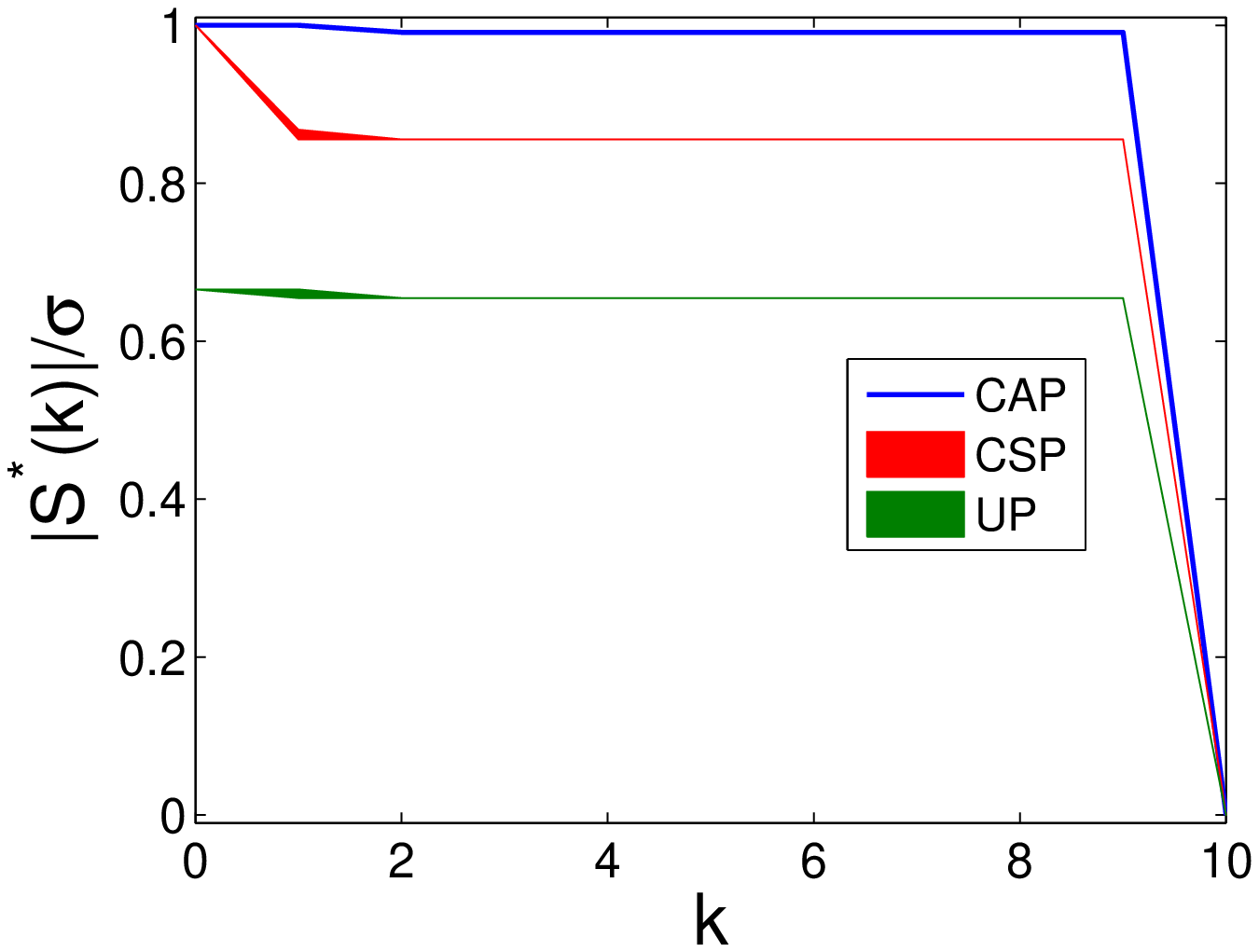}}
  \vspace{-.5em}\centerline{\scriptsize (b) $\mu=163$}
\end{minipage}
\vspace{-.0em}
\caption{Maximum $k$-identifiable set $S^*(k)$ under CAP, CSP, and UP for Rocketfuel AS1755 ($|V|=172$, $|L|=381$, $\mu=\{50,163\}$, $100$ Monte Carlo runs, $\sigma$: total number of non-monitors). } \label{fig:S_bounds_Rocketfuel}
\end{figure}

\subsection{Topologies for Evaluation}
\label{sect:randomTopo}

We first employ random graph models to generate a comprehensive set of topologies without artifacts of specific network deployments. We consider random Erd\"{o}s-R\'{e}nyi (ER) graphs \cite{ErdosRenyi60}, generated by independently connecting each pair of nodes by a link with a fixed probability $p$. The result is a purely random topology where all graphs with an equal number of links are equally likely to be selected (note that the number of nodes is an input parameter). In addition to ER graphs, other random graph models are also considered; the corresponding results are presented in \cite{MaTONTR_Mar2015} due to space limitation.

We then evaluate real \emph{Autonomous System} (AS) topologies collected by the Rocketfuel \cite{RocketFuel} and the CAIDA \cite{CAIDA} projects, which represents IP-level connections between backbone/gateway routers of several ASes from major \emph{Internet Service Providers (ISPs)} around the globe.

\subsection{Evaluation Results}

We focus on evaluating per-node maximum identifiability index $\Omega(v)$ since it determines both the per-set maximum identifiability index $\Omega(S)$ and the maximum identifiable set $S^*(k)$. In particular, the \emph{complementary cumulative distribution function (CCDF)} of $\Omega(v)$ over all $v\in N$ (refer to Table~\ref{t notion} for notations) coincides with the normalized cardinality of the maximum identifiable set $|S^*(k)|/\sigma$, and thus we characterize the distribution of $\Omega(v)$ by evaluating $|S^*(k)|/\sigma$ wrt $k$. Moreover, we examine the specific value of $\Omega(v)$ and compare it with the degree (i.e., number of neighbors) of $v$ among monitor/non-monitor nodes to evaluate the correlation between maximum identifiability index and the graph property (i.e., degree) of a node.
Under UP, our extensive simulations under multiple graph models have shown that $\MSC(v)$ can be closely approximated by $\GSC(v)$; hence, we use $\GSC(v)$ in place of $\MSC(v)$ for computing $\OmegaUP$ and $\SUP$; see \cite{MaTONTR_Mar2015} for details.\looseness=-1

\subsubsection{Distribution of $\Omega(v)$}

\begin{figure}[tb]
\vspace{-.7em}
\begin{minipage}{.5\linewidth}
  \centerline{\includegraphics[width=1.05\columnwidth]{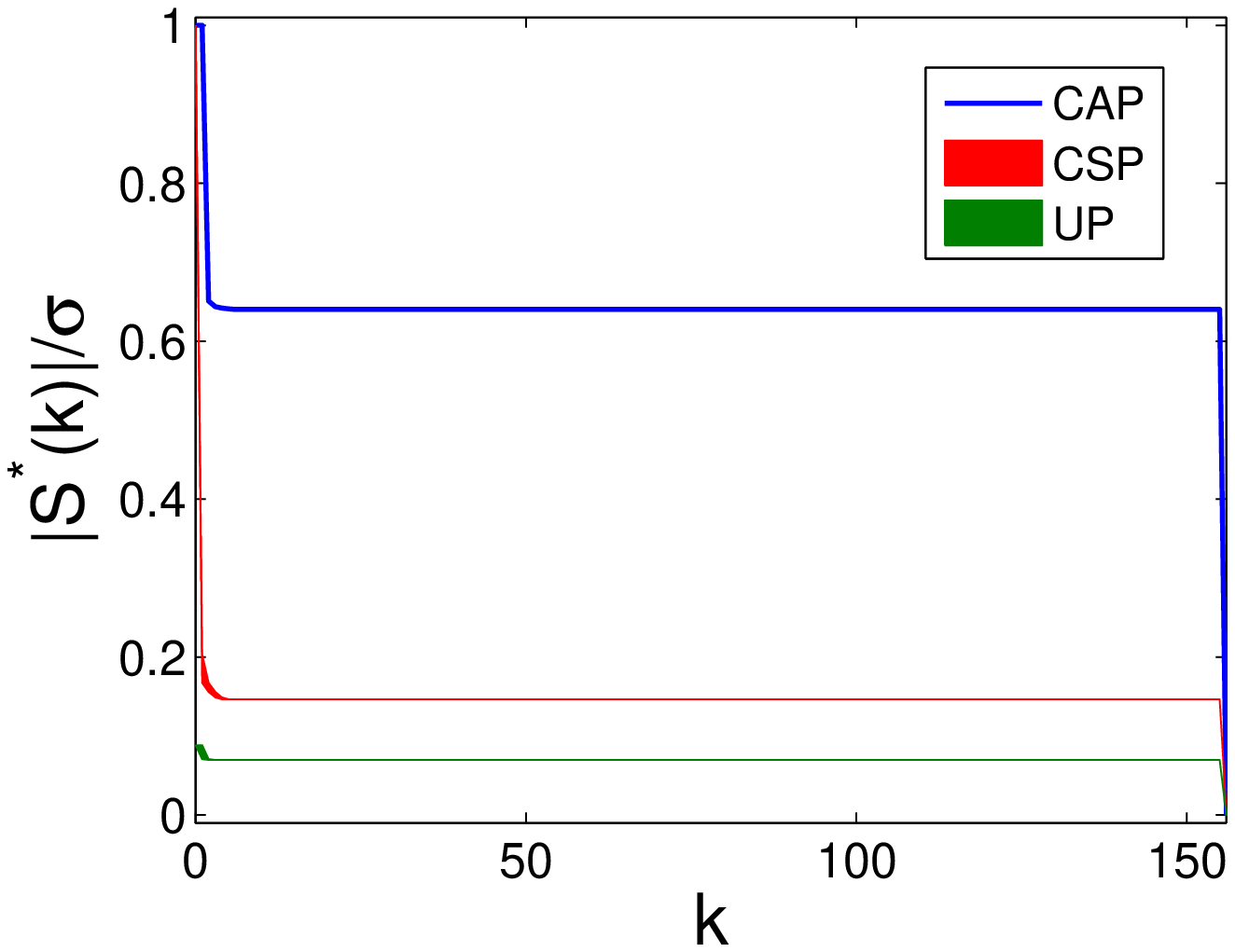}}
  \vspace{-.5em}\centerline{\scriptsize (a) $\mu=200$}
\end{minipage}\hfill
\begin{minipage}{.5\linewidth}
  \centerline{\includegraphics[width=1.05\columnwidth]{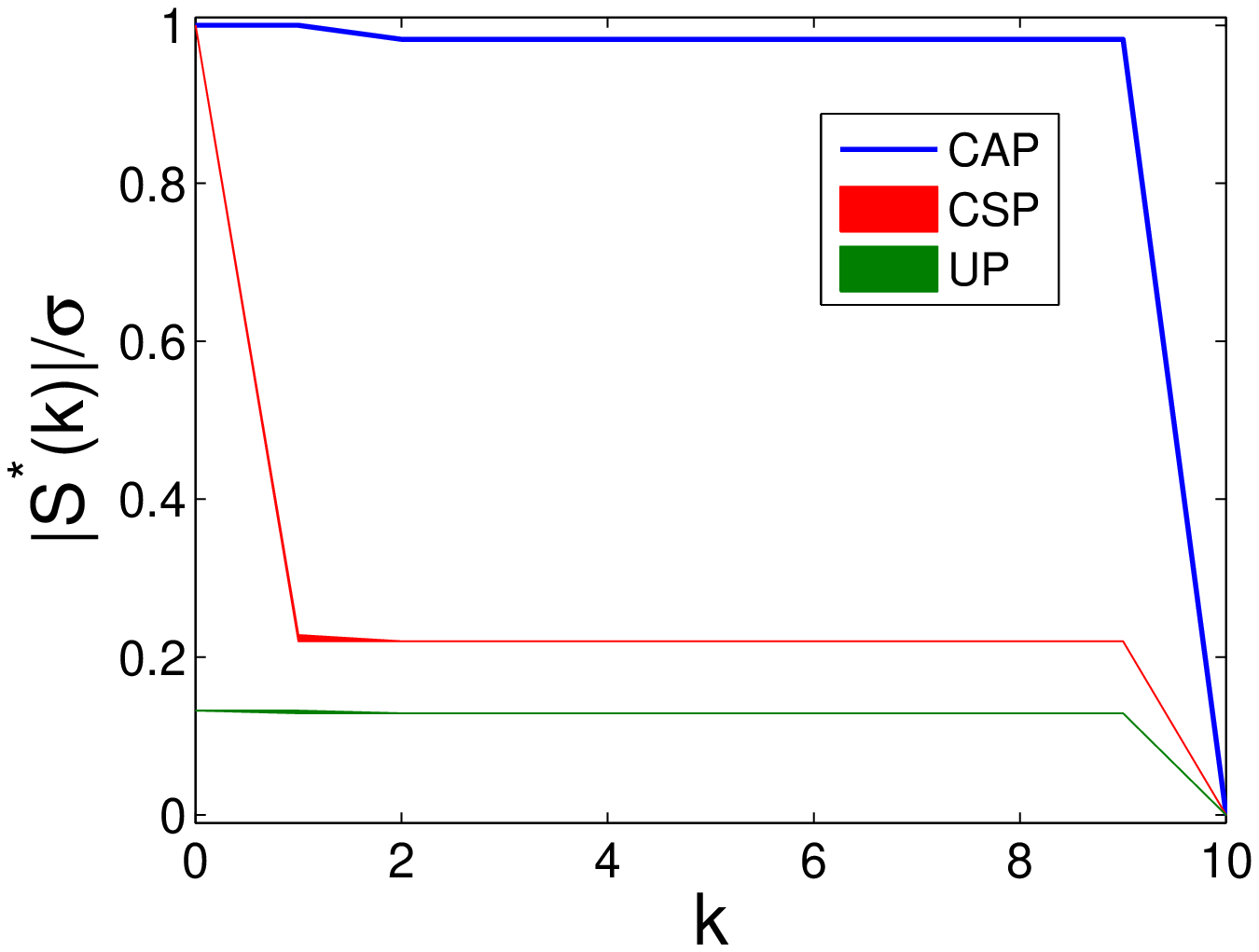}}
  \vspace{-.5em}\centerline{\scriptsize (b) $\mu=346$}
\end{minipage}
\caption{Maximum $k$-identifiable set $S^*(k)$ under CAP, CSP, and UP for CAIDA AS26788 ($|V|=355$, $|L|=483$, $\mu=\{200,346\}$, $100$ Monte Carlo runs, $\sigma$: total number of non-monitors). } \label{fig:S_bounds_CAIDA}
\vspace{-1.5em}
\end{figure}

To characterize the overall distribution of $\Omega(v)$, we compute (bounds on)\footnote{Propositions~\ref{prop:sigma-identifiability, CSP}, Corollary~\ref{coro:Maximum_sigma-1_identifiable_set-CSP}, and Proposition~\ref{prop:sigma-identifiability_UP} are used to determine the exact elements in $\SCSP(\sigma)$, $\SCSP(\sigma-1)$, and $\SUP(\sigma)$.} $\SCAP(k),\: \SCSP(k)$, and $\SUP(k)$ to evaluate $|S^*(k)|/\sigma$ for different values of $k$ ($\sigma$: total number of non-monitors). Fig.~\ref{fig:S_bounds_ER} reports averages of $|S^*(k)|/\sigma$ computed on ER graphs over randomly selected multiple instances of topology and monitor locations, where $|S^*(k)|/\sigma$ under CSP and UP is represented by a band with its width determined by $(|S^{\mbox{\tiny outer}}(k)|-|S^{\mbox{\tiny inner}}(k)|)/\sigma$.
The results show large differences in the failure localization capabilities of different probing mechanisms: When the number of monitors is small ($\mu = 2$) and $k=2$, $\SUP(k)$ is almost empty, i.e., no (non-monitor) node state can be uniquely determined by UP when there are multiple failures; in contrast, $|\SCSP(k)|/\sigma\approx 0.5$ and $|\SCAP(k)|/\sigma\approx 1$, i.e., CSP can uniquely determine the states of half of the nodes and CAP can determine the states of all the nodes when $\mu = 2$ and $k=2$.  When the number of monitors increases ($\mu = 10$), there exist more measurement paths between monitors, and thus the fraction of identifiable nodes increases for all three probing mechanisms.
%Since $|S^*(k)|$ is always bounded by the number of non-monitors $\sigma$, eventually all nodes become identifiable and the gap between probing mechanisms diminishes.
In addition, we observe a stable phase in Fig.~\ref{fig:S_bounds_ER} where the value of $|S^*(k)|/\sigma$ remains the same as we increase $k$; this is because some non-monitors have monitors as neighbors, thus directly measurable by these neighboring monitors without traversing other non-monitors. Specifically, if there are non-monitors that neighbor at least one monitor under CAP, neighbor at least two monitors under CSP, or lie on 2-hop paths between monitors under UP, then the failure of these non-monitors can always be identified regardless of the total number of failures in the network, i.e., the maximum identifiability index of these non-monitors is the total number of non-monitors.
Note that in Fig.~\ref{fig:S_bounds_ER}, the number of such directly measurable non-monitors is smaller under UP than under CSP. This is because for non-monitors that neighbor the same pair of monitors (e.g., $m_1$ and $m_2$), all these non-monitors are directly measurable on 2-hop $m_1$-to-$m_2$ paths under CSP; however, only one of these non-monitors is on a 2-hop $m_1$-to-$m_2$ path under UP as UP probes only one routing path between each pair of monitors (assuming stable single-path routing).
Similar results have been obtained for other random graph models (see \cite{MaTONTR_Mar2015} for details).
%, which therefore further confirm that although with high implementation cost, flexible probing mechanisms can contribute to the network capability in localizing node failures.

We repeat the above evaluation on AS topologies. We select AS1755 from Rocketfuel topologies \cite{RocketFuel} and AS26788 from CAIDA topologies \cite{CAIDA}, and evaluate the bounds on $|S^*(k)|/\sigma$ under multiple instances of random monitor placements; average results are reported in Fig.~\ref{fig:S_bounds_Rocketfuel} and \ref{fig:S_bounds_CAIDA}.  Similar to the case of random topologies, there are clear differences between different probing mechanisms.
%A qualitative difference for AS topologies is that the distribution of $\Omega(v)$ is even more bimodal, where nodes are divided into two groups, one with $\Omega(v) = \sigma$ (i.e., always identifiable) and the other with $\Omega(v)<5$ (i.e., not identifiable when there are a few failures).
Unlike the uniformly connected random topologies in Fig.~\ref{fig:S_bounds_ER}, these AS topologies contain many sparse subgraphs where the removal of a few nodes can disconnect the network. Thus, unless a node is directly measurable by monitors, it is likely that failures of a few other nodes will disconnect it from monitors and thus make its failure undetectable.   %In these sparsely-connected subgraphs, nodes are either $\sigma$-identifiable by deploying some of the neighboring nodes as monitors or approaching $0$-identifiable if neighboring monitors are not capable of performing direct measurements.
%The relative sizes of these two groups, however, differ for different numbers of monitors and probing mechanisms.
Comparing results from Rocketfuel and CAIDA, we observe that the CAIDA AS requires more monitors to achieve the same level of identifiability. Moreover, deploying more monitors in CAIDA AS only slightly improves the level of identifiability. This can be explained by examining the link density $|L|/|V|$ of the network: $|L|/|V|=1.36$ for the CAIDA AS, whereas $|L|/|V|=2.22$ for the Rocketfuel AS, i.e., CAIDA AS topology is nearly a tree. Therefore, it is likely for a node to not reside on any paths between monitors or become unmeasurable after the failure of one other node in the CAIDA AS, even if the paths are controllable but cycle-free (CSP).\looseness=-1

\subsubsection{Correlation of $\Omega(v)$ and Degree}

\begin{figure*}[t]
\vspace{-.7em}
\begin{minipage}{.3\linewidth}
  \centerline{\includegraphics[width=1\columnwidth]{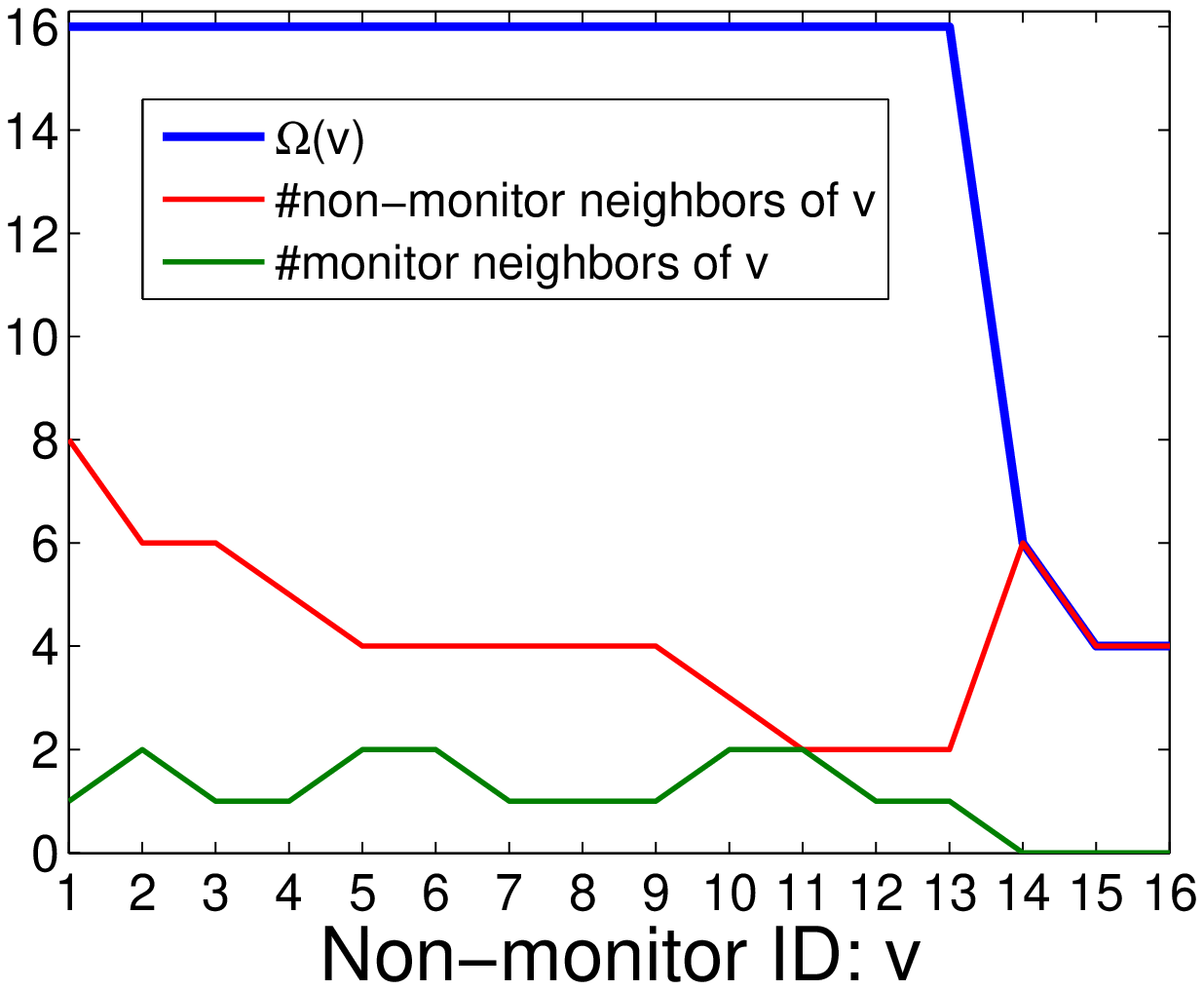}}
  \vspace{-.5em}\centerline{\small (a) Under CAP}
\end{minipage}\hfill
\begin{minipage}{.3\linewidth}
  \centerline{\includegraphics[width=1\columnwidth]{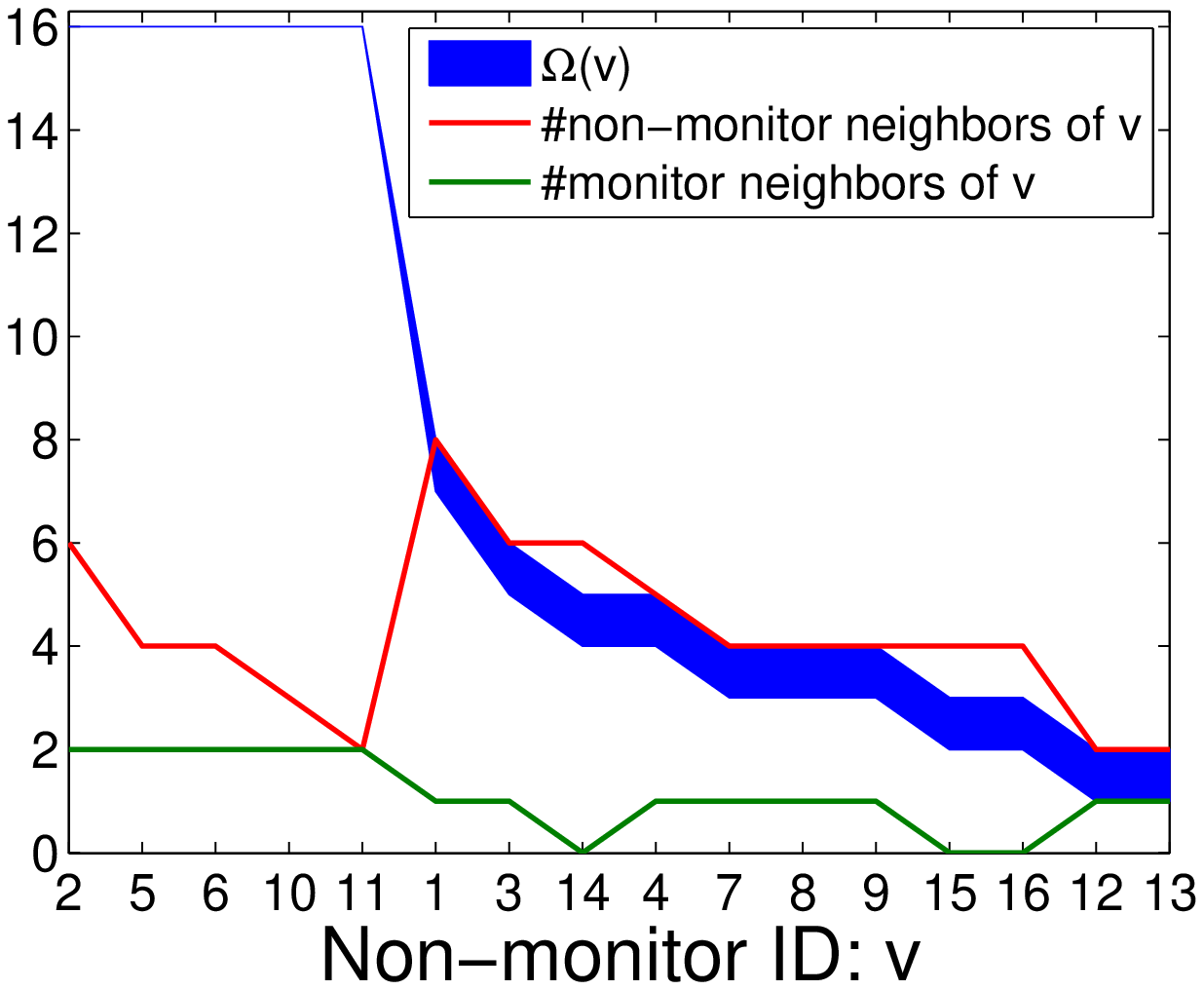}}
  \vspace{-.5em}\centerline{\small (b) Under CSP}
\end{minipage}\hfill
\begin{minipage}{.3\linewidth}
  \centerline{\includegraphics[width=1\columnwidth]{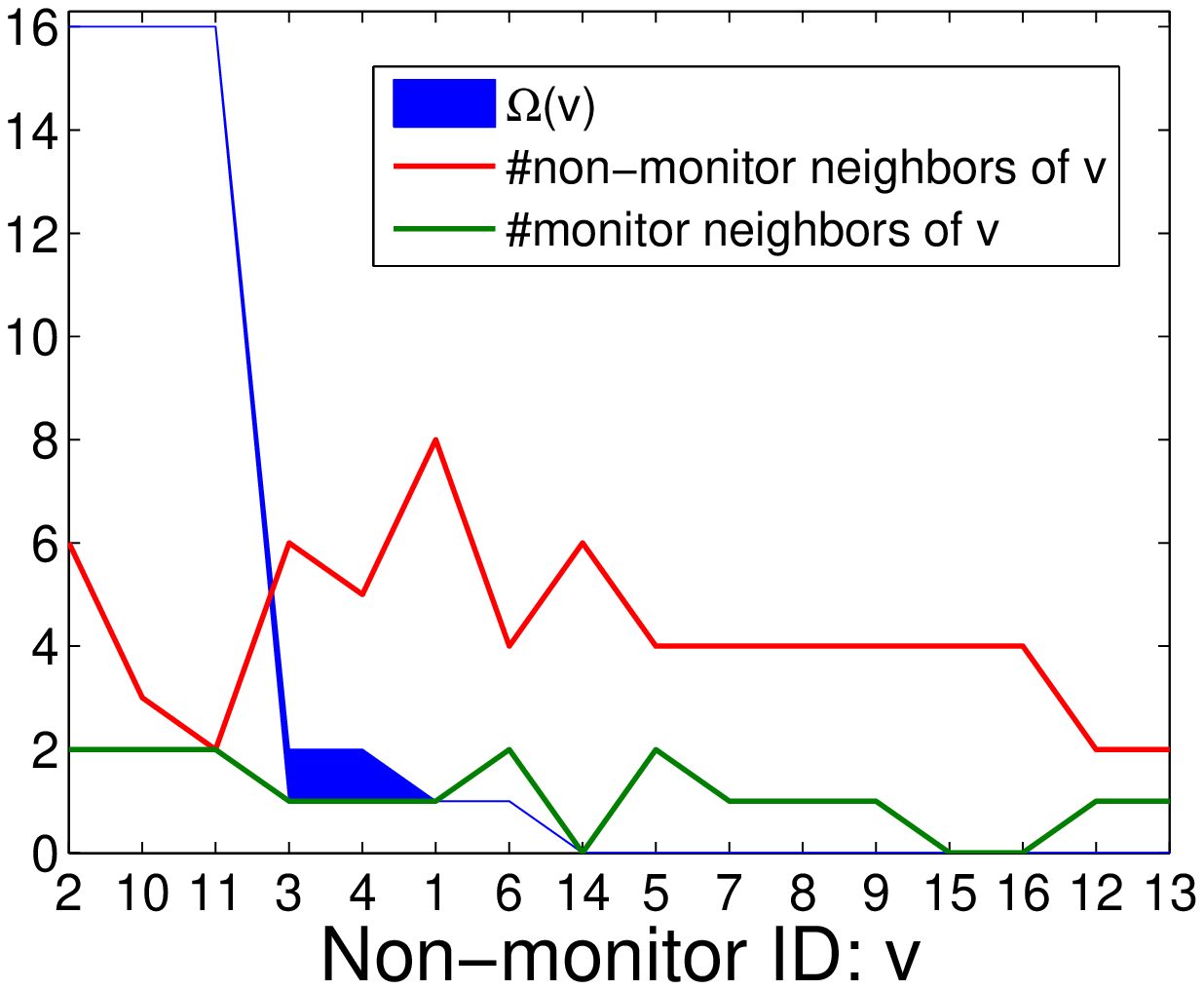}}
  \vspace{-.5em}\centerline{\small (c) Under UP}
\end{minipage}
\vspace{-.2em}
\caption{Node maximum identifiability index $\Omega(v)$ of one ER graph under different probing mechanisms ($|V|=20$, $\mu=4$, $\mathbb{E}[|L|]=51$). } \label{fig:node_maximum_identifiability}
\end{figure*}

\begin{figure*}[t]
\vspace{-.7em}
\begin{minipage}{.3\linewidth}
  \centerline{\includegraphics[width=1\columnwidth]{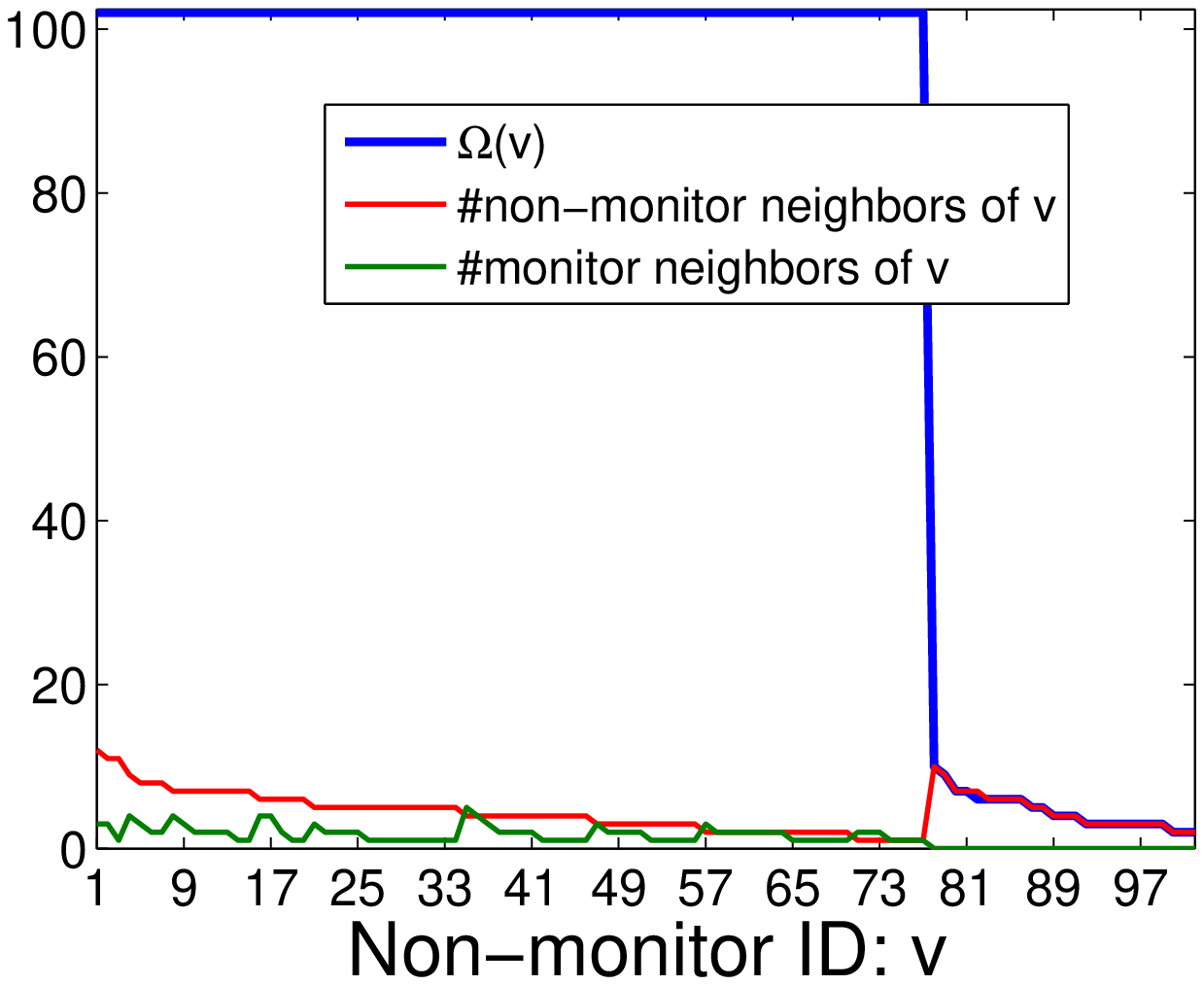}}
  \vspace{-.5em}\centerline{\small (a) Under CAP}
\end{minipage}\hfill
\begin{minipage}{.3\linewidth}
  \centerline{\includegraphics[width=1\columnwidth]{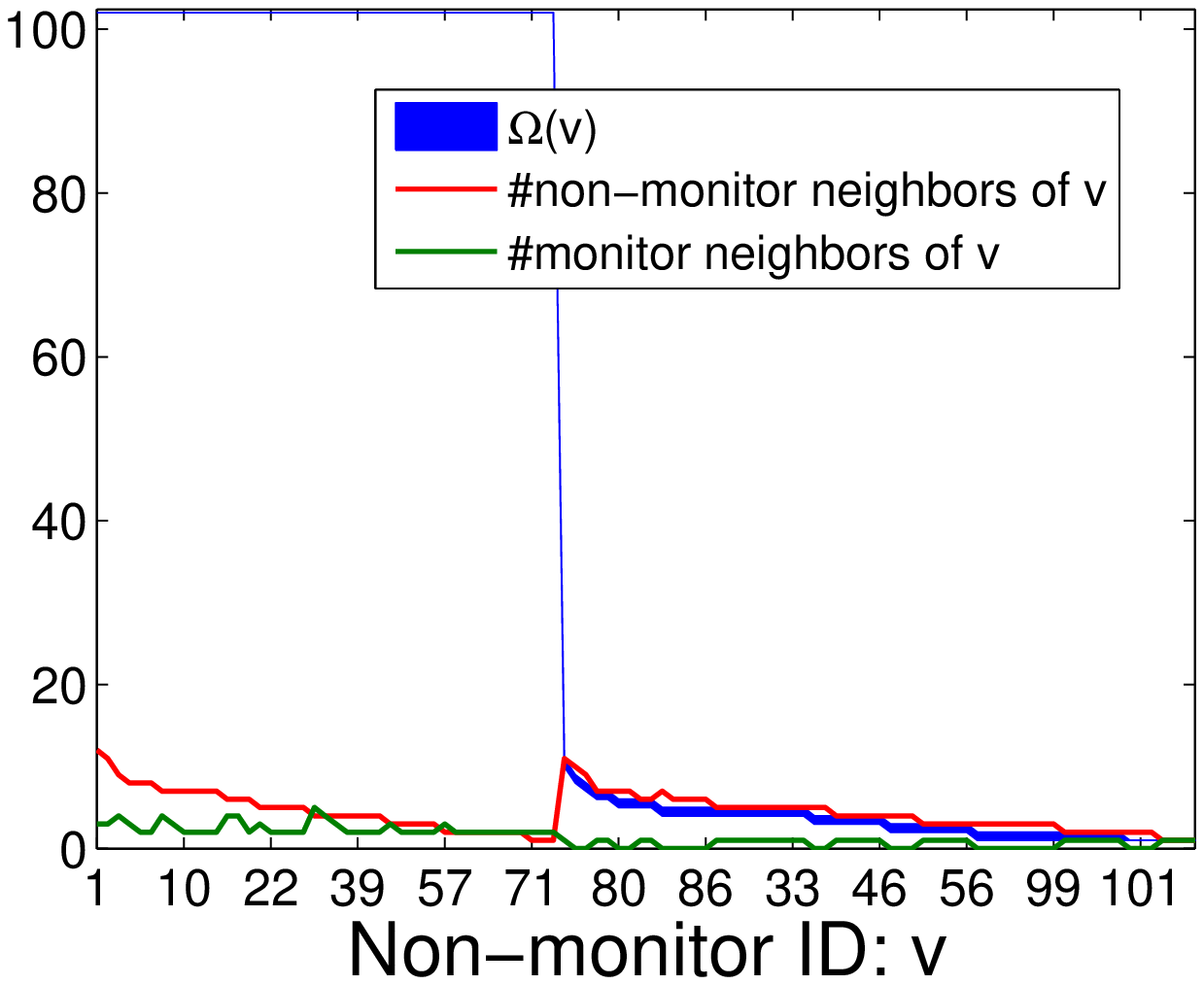}}
  \vspace{-.5em}\centerline{\small (b) Under CSP}
\end{minipage}\hfill
\begin{minipage}{.3\linewidth}
  \centerline{\includegraphics[width=1\columnwidth]{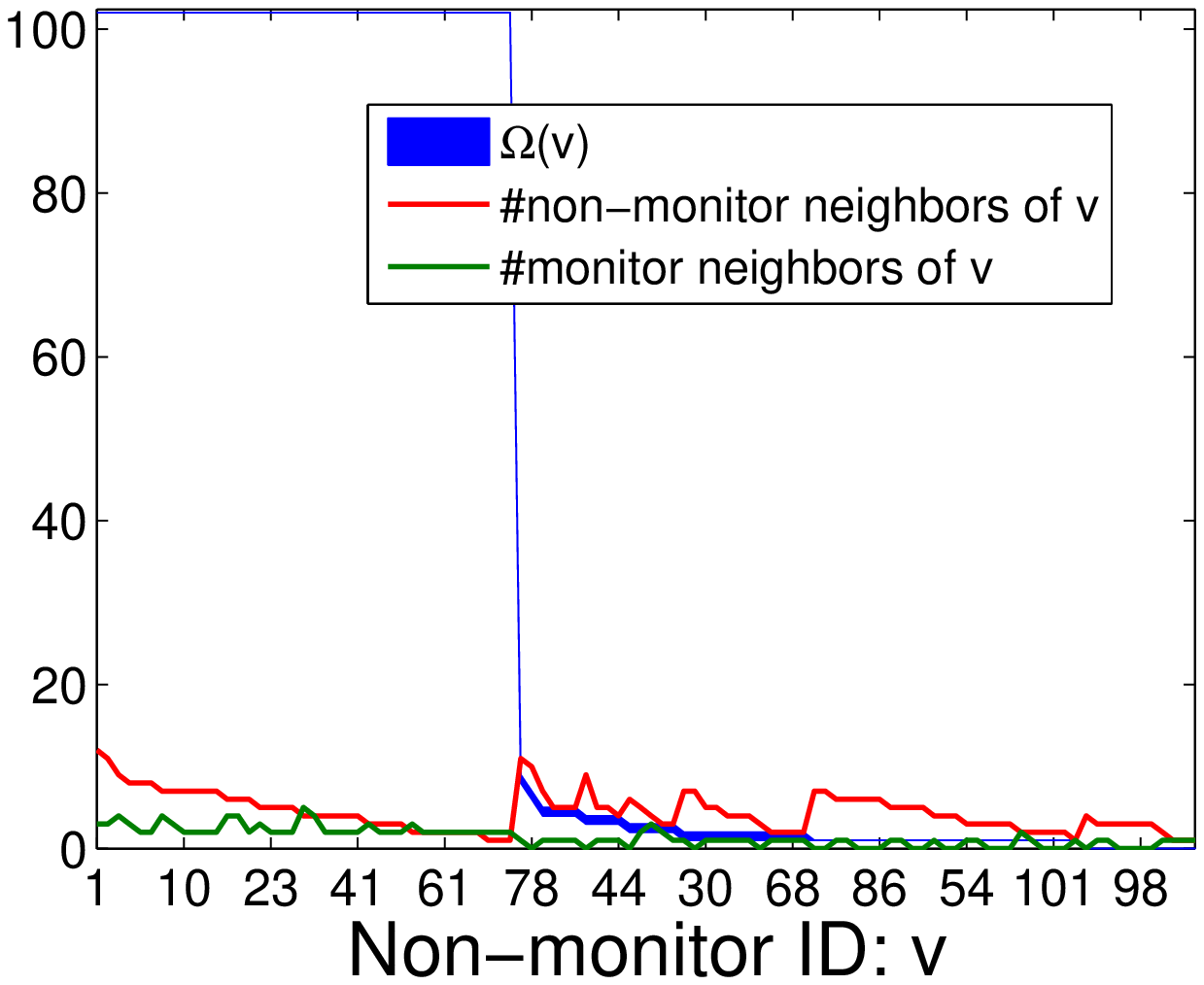}}
  \vspace{-.5em}\centerline{\small (c) Under UP}
\end{minipage}
\vspace{-.2em}
\caption{Node maximum identifiability index $\Omega(v)$ of Rocketfuel AS1755 under different probing mechanisms ($|V|=172$, $|L|=381$, $\mu=70$). } \label{fig:node_maximum_identifiability_Rocketfuel}
\end{figure*}

\begin{figure*}[t]
\vspace{-.7em}
\begin{minipage}{.3\linewidth}
  \centerline{\includegraphics[width=1\columnwidth]{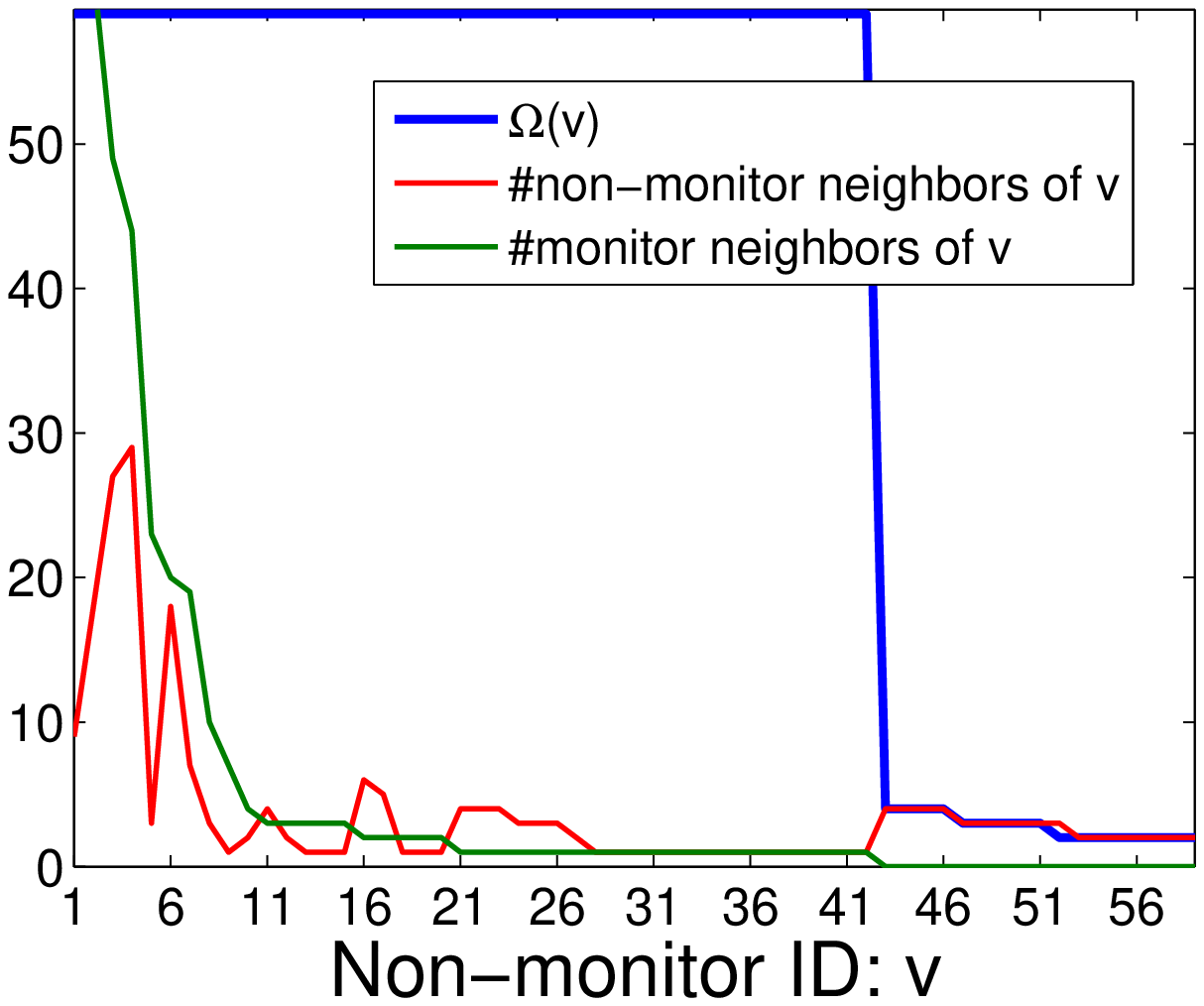}}
  \vspace{-.5em}\centerline{\small (a) Under CAP}
\end{minipage}\hfill
\begin{minipage}{.3\linewidth}
  \centerline{\includegraphics[width=1\columnwidth]{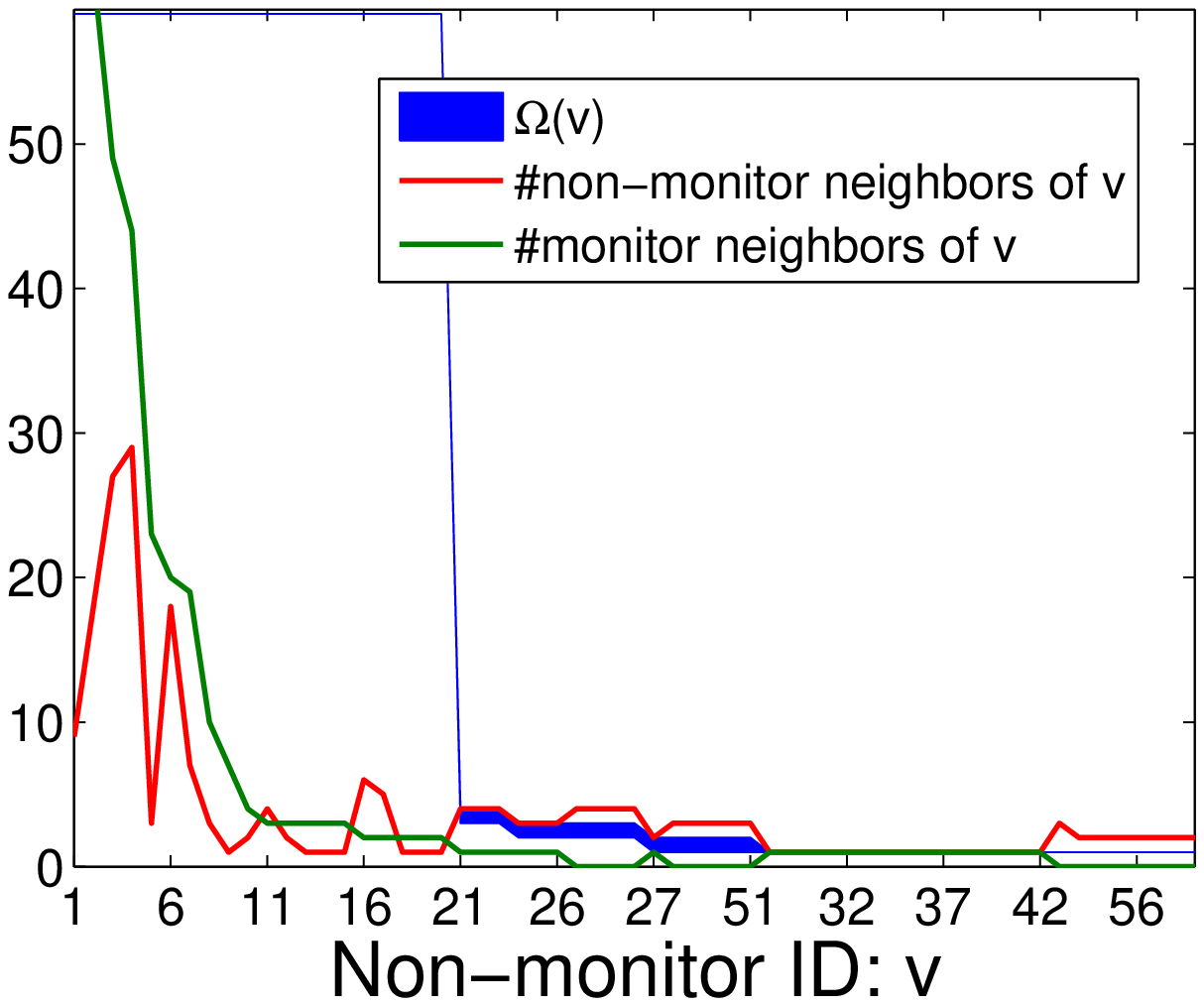}}
  \vspace{-.5em}\centerline{\small (b) Under CSP}
\end{minipage}\hfill
\begin{minipage}{.3\linewidth}
  \centerline{\includegraphics[width=1\columnwidth]{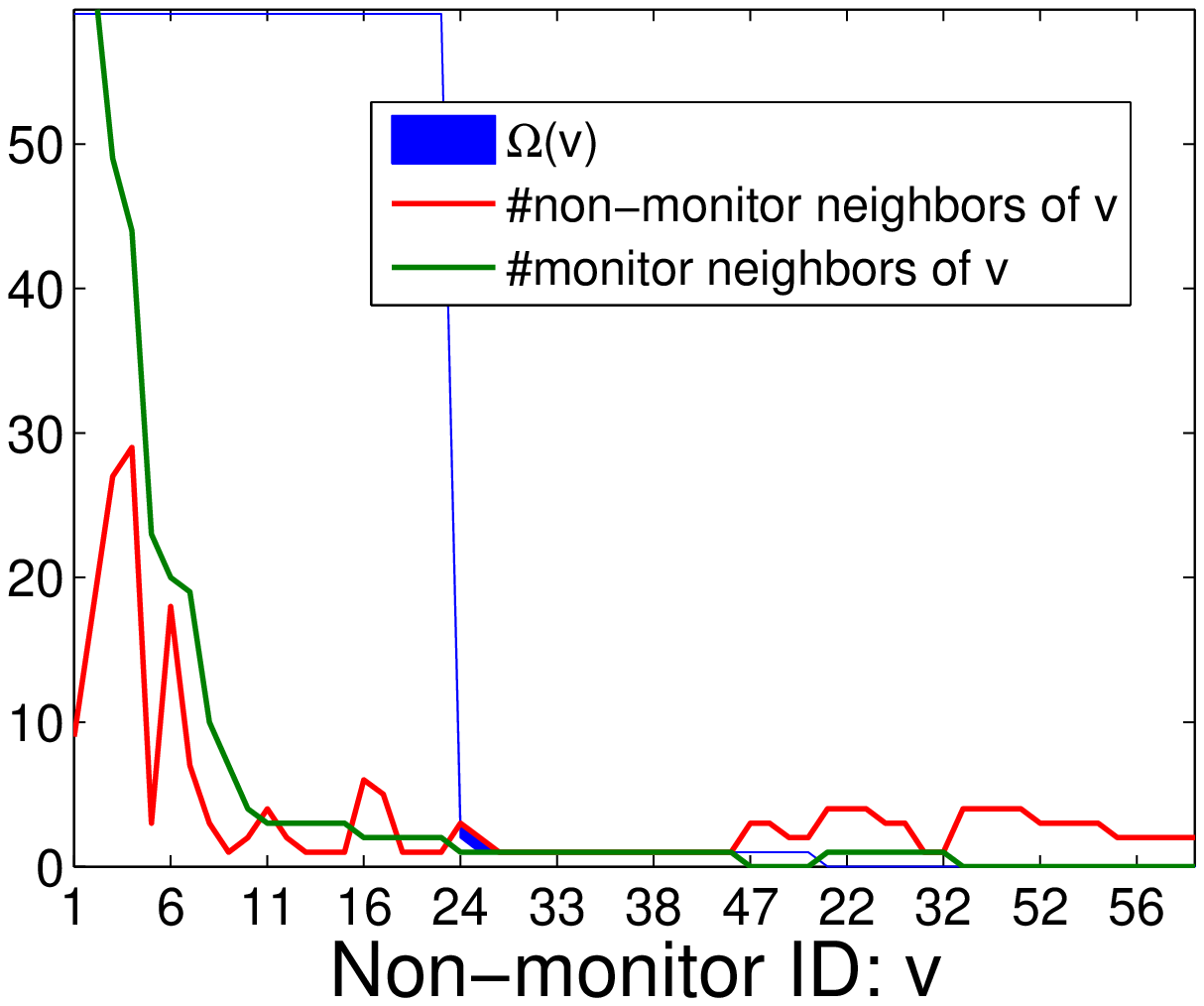}}
  \vspace{-.5em}\centerline{\small (c) Under UP}
\end{minipage}
\vspace{-.0em}
\caption{Node maximum identifiability index $\Omega(v)$ of CAIDA under different probing mechanisms ($|V|=355$, $|L|=483$, $\mu=296$). } \label{fig:node_maximum_identifiability_CAIDA}
\vspace{-.5em}
\end{figure*}

Next, we examine specific values of $\Omega(v)$ for each non-monitor $v\in N$ for selected instances of network topology and monitor placement. Our goal is to compare these values with node degrees to understand the correlation between the proposed identifiability measure and typical graph-theoretic node properties. Specifically,
we sort non-monitors in a non-increasing order of $\Omega(v)$ under each of the three probing mechanisms, and compare $\Omega(v)$ with the degrees of $v$ among monitors/non-monitors\footnote{Note that node IDs are different under different probing mechanisms due to the different order of $\Omega(v)$ values.}; see results in Fig.~\ref{fig:node_maximum_identifiability} for random topologies and in Fig.~\ref{fig:node_maximum_identifiability_Rocketfuel}--\ref{fig:node_maximum_identifiability_CAIDA} for AS topologies.
The results show strong correlations between $\Omega(v)$ and the degree of $v$, denoted by d$(v)$. Specifically, denote the number of neighbors of $v$ that are monitors by d$^m(v)$ and the number of neighbors of $v$ that are non-monitors by d$^n(v)$; the overall degree d$(v) = \mbox{d}^m(v)+\mbox{d}^n(v)$. If node $v$ has sufficient monitor neighbors (d$^m(v)\geq 1$ for CAP, d$^m(v)\geq 2$ for CSP), then $v$ is directly measurable and thus $\Omega(v) = \sigma$ regardless of the actual degree of $v$; if node $v$ does not have a sufficient number of monitors as neighbors, then $\Omega(v)\leq $d$(v)$ because if all neighbors of $v$ fail, then the state of $v$ cannot be determined by path measurements.
However, in the latter case, d$(v)$ is only a loose upper bound, and the exact value of $\Omega(v)$ depends on the overall topology, the locations of monitors, and the constraints on measurement paths. In this regard, our result can also be viewed as defining a new node property ($\Omega(v)$) that takes into account all these parameters.

\vspace{.5em}
Overall, we observe that CAP-type probing is hugely advantageous in uniquely monitoring node states under failures, especially when there are multiple failures and the network is sparse. This implies that in the absence of deploying monitors at every node, implementing controllable probing is an effective way to uniquely localize node failures. Our observation also stresses the importance of optimized monitor placement, especially when we are only interested in monitoring a subset of nodes, which is left to future work.

%For per-node maximum identifiability $\Omega(v)$, we can get results similar to Fig.~\ref{fig:node_maximum_identifiability}. However, since each AS topology contains hundreds of nodes, it is impossible to show $\Omega(v)$ for each $v\in N$ under different probing mechanisms as that in Fig.~\ref{fig:node_maximum_identifiability}, and we therefore omit such simulations results in this paper.

\section{Conclusion}\label{sec:Conclusion}

We studied the fundamental capability of a network in localizing failed nodes from binary measurements (normal/failed) of paths between monitors. We proposed two novel measures: \emph{maximum identifiability index} that quantifies the scale of uniquely localizable failures wrt a given node set, and \emph{maximum identifiable set} that quantifies the scope of unique localization under a given scale of failures. We showed that both measures are functions of the maximum identifiability index per node.
We studied these measures for three types of probing mechanisms that offer different controllability of probes and complexity of implementation. For each probing mechanism, we established necessary/sufficient conditions for unique failure localization based on network topology, placement of monitors, constraints on measurement paths, and scale of failures. We further showed that these conditions lead to tight upper/lower bounds on the maximum identifiability index, as well as inner/outer bounds on the maximum identifiable set. We showed that both the conditions and the bounds can be evaluated efficiently using polynomial-time algorithms. Our evaluations on random and real network topologies showed that probing mechanisms that allow monitors to control the routing of probes have significantly better capability to uniquely localize failures.
%Our results also reveal novel insights in the property of per-node identifiability and its relationship with graph-theoretic node properties.

\bibliographystyle{IEEEtran}
\bibliography{mybibSimplifiedA,mybibSimplifiedB,mybibSimplifiedC}

% Generated by IEEEtran.bst, version: 1.13 (2008/09/30)
\begin{thebibliography}{10}
\providecommand{\url}[1]{#1}
\csname url@samestyle\endcsname
\providecommand{\newblock}{\relax}
\providecommand{\bibinfo}[2]{#2}
\providecommand{\BIBentrySTDinterwordspacing}{\spaceskip=0pt\relax}
\providecommand{\BIBentryALTinterwordstretchfactor}{4}
\providecommand{\BIBentryALTinterwordspacing}{\spaceskip=\fontdimen2\font plus
\BIBentryALTinterwordstretchfactor\fontdimen3\font minus
  \fontdimen4\font\relax}
\providecommand{\BIBforeignlanguage}[2]{{%
\expandafter\ifx\csname l@#1\endcsname\relax
\typeout{** WARNING: IEEEtran.bst: No hyphenation pattern has been}%
\typeout{** loaded for the language `#1'. Using the pattern for}%
\typeout{** the default language instead.}%
\else
\language=\csname l@#1\endcsname
\fi
#2}}
\providecommand{\BIBdecl}{\relax}
\BIBdecl

\bibitem{Kompella07infocom}
R.~Kompella, J.~Yates, A.~G. Greenberg, and A.~C. Snoeren, ``Detection and
  localization of network black holes,'' in \emph{IEEE INFOCOM}, 2007.

\bibitem{Coates02}
M.~Coates, A.~O. Hero, R.~Nowak, and B.~Yu, ``Internet tomography,'' \emph{IEEE
  Signal Processing Magazine}, vol.~19, pp. 47--65, 2002.

\bibitem{Ghita11conext}
D.~Ghita, C.~Karakus, K.~Argyraki, and P.~Thiran, ``Shifting network tomography
  toward a practical goal,'' in \emph{ACM CoNEXT}, 2011.

\bibitem{Bejerano03INFOCOM}
Y.~Bejerano and R.~Rastogi, ``Robust monitoring of link delays and faults in
  {IP} networks,'' in \emph{IEEE INFOCOM}, 2003.

\bibitem{Horton03}
J.~D. Horton and A.~L\'{o}pez-Ortiz, ``On the number of distributed measurement
  points for network tomography,'' in \emph{ACM IMC}, 2003.

\bibitem{Zarifzadeh12IMC}
S.~Zarifzadeh, M.~Gowdagere, and C.~Dovrolis, ``Range tomography: {C}ombining
  the practicality of boolean tomography with the resolution of analog
  tomography,'' in \emph{ACM IMC}, 2012.

\bibitem{Markopoulou04infocom}
A.~Markopoulou, G.~Iannaccone, S.~Bhattacharyya, C.-N. Chuah, and C.~Diot,
  ``Characterization of failures in an {IP} backbone,'' in \emph{IEEE INFOCOM},
  2004.

\bibitem{Duffield03}
N.~Duffield, ``Simple network performance tomography,'' in \emph{ACM IMC},
  2003.

\bibitem{Duffield06TInfo}
------, ``Network tomography of binary network performance characteristics,''
  \emph{IEEE Transactions on Information Theory}, vol.~52, pp. 5373--5388,
  2006.

\bibitem{Zeng12CoNEXT}
H.~Zeng, P.~Kazemian, G.~Varghese, and N.~McKeown, ``Automatic test packet
  generation,'' in \emph{ACM CoNEXT}, 2012.

\bibitem{Nguyen07infocom}
H.~Nguyen and P.~Thiran, ``The boolean solution to the congested {IP} link
  location problem: {T}heory and practice,'' in \emph{IEEE INFOCOM}, 2007.

\bibitem{Dhamdhere07CoNEXT}
A.~Dhamdhere, R.~Teixeira, C.~Dovrolis, and C.~Diot, ``Netdiagnoser:
  {T}roubleshooting network unreachabilities using end-to-end probes and
  routing data,'' in \emph{ACM CoNEXT}, 2007.

\bibitem{Huang08ccr}
Y.~Huang, N.~Feamster, and R.~Teixeira, ``Practical issues with using network
  tomography for fault diagnosis,'' \emph{ACM SIGCOMM Computer Communication
  Review}, vol.~38, pp. 53--58, 2008.

\bibitem{Nguyen04PAM}
H.~X. Nguyen and P.~Thiran, ``Active measurement for multiple link failures
  diagnosis in {IP} networks,'' in \emph{PAM}, 2004.

\bibitem{Ahuja08}
S.~Ahuja, S.~Ramasubramanian, and M.~Krunz, ``{SRLG} failure localization in
  all-optical networks using monitoring cycles and paths,'' in \emph{IEEE
  INFOCOM}, 2008.

\bibitem{Cho14Elsevier}
S.~Cho and S.~Ramasubramanian, ``Localizing link failures in all-optical
  networks using monitoring tours,'' \emph{Elsevier Computer Networks},
  vol.~58, pp. 2--12, 2014.

\bibitem{Ma14IMC}
L.~Ma, T.~He, A.~Swami, D.~Towsley, K.~K. Leung, and J.~Lowe, ``Node failure
  localization via network tomography,'' in \emph{ACM IMC}, 2014.

\bibitem{Cheraghchi12}
M.~Cheraghchi, A.~Karbasi, S.~Mohajer, and V.~Saligrama, ``Graph-constrained
  group testing,'' \emph{IEEE Transactions on Information Theory}, vol.~58, pp.
  248--262, 2012.

\bibitem{RFCsourceRouting}
\BIBentryALTinterwordspacing
 [Online]. Available: \url{http://www.ietf.org/rfc/rfc0791.txt}
\BIBentrySTDinterwordspacing

\bibitem{SDN}
\BIBentryALTinterwordspacing
``Open networking foundation.'' [Online]. Available:
  \url{http://www.opennetworkingfoundation.org}
\BIBentrySTDinterwordspacing

\bibitem{OpenFlow13}
\BIBentryALTinterwordspacing
``Open{F}low switch specification,'' Open Networking Foundation, Version 1.4.0,
  October 2013. [Online]. Available:
  \url{https://www.opennetworking.org/images/stories/downloads/sdn-resources/onf-specifications/openflow/openflow-spec-v1.4.0.pdf}
\BIBentrySTDinterwordspacing

\bibitem{RFC_MPLS}
\BIBentryALTinterwordspacing
 [Online]. Available: \url{https://www.ietf.org/rfc/rfc3031.txt}
\BIBentrySTDinterwordspacing

\bibitem{Ahuja09TON}
S.~Ahuja, S.~Ramasubramanian, and M.~Krunz, ``Single-link failure detection in
  all-optical networks using monitoring cycles and paths,'' \emph{IEEE/ACM
  Transactions on Networking}, vol.~17, pp. 1080--1093, 2009.

\bibitem{Dorfman43}
R.~Dorfman, ``The detection of defective members of large populations,''
  \emph{The Annals of Mathematical Statistics}, vol.~14, 1943.

\bibitem{Yeh02}
H.-G. Yeh, ``d-{D}isjunct matrices: {B}ounds and {L}ovasz local lemma,''
  \emph{Discrete Math}, vol. 253, pp. 97--107, 2002.

\bibitem{MaTONTR_Mar2015}
\BIBentryALTinterwordspacing
L.~Ma, T.~He, A.~Swami, D.~Towsley, and K.~K. Leung, ``Failure localization
  capability: Theorem proof and evaluation,'' Technical Report, IBM T.J. Watson
  Research, New York, March 2015. [Online]. Available:
  \url{http://resweb.watson.ibm.com/researcher/files/us-maliang/TechnicalReportTON2015.pdf}
\BIBentrySTDinterwordspacing

\bibitem{Chuzhoy09JACM}
J.~Chuzhoy and S.~Khanna, ``Polynomial flow-cut gaps and hardness of directed
  cut problems,'' \emph{Journal of the ACM}, vol.~56, pp. 1--28, 2009.

\bibitem{Ford56}
L.~R. Ford and D.~R. Fulkerson, ``Maximal flow through a network,''
  \emph{Canadian Journal of Mathematics}, vol.~8, pp. 399--404, 1956.

\bibitem{Chvatal79}
V.~Chvatal, ``A greedy heuristic for the set-covering problem,''
  \emph{Mathematics of Operations Research}, vol.~4, pp. 233--235, 1979.

\bibitem{Tarjan72}
R.~Tarjan, ``Depth-first search and linear graph algorithms,'' \emph{SIAM
  Journal on Computing}, vol.~1, pp. 146--160, 1972.

\bibitem{ErdosRenyi60}
P.~Erd\"{o}s and A.~R\'{e}nyi, ``On the evolution of random graphs,''
  \emph{Publications of the Mathematical Institute of the Hungarian Academy of
  Sciences}, vol.~5, pp. 17--61, 1960.

\bibitem{RocketFuel}
\BIBentryALTinterwordspacing
``Rocketfuel: An {ISP} topology mapping engine,'' University of Washington,
  2002. [Online]. Available:
  \url{http://www.cs.washington.edu/research/networking/rocketfuel/}
\BIBentrySTDinterwordspacing

\bibitem{CAIDA}
\BIBentryALTinterwordspacing
``Macroscopic {I}nternet {T}opology {D}ata {K}it ({ITDK}),'' The {C}ooperative
  {A}ssociation for {I}nternet {D}ata {A}nalysis ({CAIDA}), April 2013.
  [Online]. Available:
  \url{http://www.caida.org/data/active/internet-topology-data-kit/}
\BIBentrySTDinterwordspacing

\end{thebibliography}

\end{document}